\newtheorem{theorem}{Theorem}[section]
\newtheorem{lemma}{Lemma}[section]
\newtheorem{corollary}{Corollary}[section]
\newtheorem{definition}{Definition}[section]
\newtheorem{observation}{Observation}[section]
\newcommand{\btd}[2]{btd(#1, #2)}
\newcommand{\Lev}[2]{L_{#1}(#2)}
\newcommand{\lev}[2]{\ell_{#1}(#2)}
\newcommand{\ET}[1]{\mathcal{E}_{#1}}
\newcommand{\dg}[2]{d_{#1}(#2)}
\newcommand{\Fvis}[2]{\mathrm{vis}(#1,#2)}  
\newcommand{\BTD}{{\textsc{BTD}}}
\newcommand{\MHS}{{\textsc{MHS}}}
\newcommand{\F}{{F}}
\newcommand{\B}{\mathit{b}}
\newcommand{\M}{\mathit{M}}
\newcommand{\NP}{\mathcal{NP}}
\newcommand{\mcP}{\mathcal{P}}   
\newcommand{\rbar}{\overline{r}}
\newcommand{\mbar}{\overline{m}}
\newcommand{\Tbar}{\overline{T}}
\newcommand{\Tbard}{\overline{T}_{d}}
\newcommand{\PM}{{P_{2^\M}}}
\newcommand{\Gb}{G}
\newcommand{\Gbi}{{\Gb(a_i)}}
\newcommand{\ka}{{\alpha}}
\newcommand{\Gk}{G_{\ka}}
\newcommand{\A}{A}
\begin{document}

\title{\bf The Complexity of Bicriteria Tree-Depth}

\author{
Piotr Borowiecki\footnotemark[1]
\and
Dariusz Dereniowski\footnotemark[1] \footnotemark[2]
\and
Dorota Osula\footnotemark[1] \footnotemark[2]
}

\maketitle
\def\thefootnote{\fnsymbol{footnote}}

\footnotetext[1]{Faculty of Electronics, Telecommunications and Informatics, Gda\'{n}sk~University~of~Technology, Poland. Work partially supported under Ministry of Science and Higher Education (Poland) subsidy for Gda\'{n}sk University of Technology.}

\footnotetext[2]{Authors partially supported by National Science Centre, Poland, grant number 2018/31/B/ST6/00820.}

\maketitle

\begin{abstract}
The tree-depth problem can be seen as finding an elimination tree of minimum height for a given input graph $G$. We introduce a bicriteria generalization in which additionally the \emph{width} of the elimination tree needs to be bounded by some input integer $b$. We are interested in the case when $G$ is the line graph of a tree, proving that the problem is $\NP$-hard and obtaining a polynomial-time additive $2b$-approximation algorithm. This particular class of graphs received significant attention in the past, mainly due to a number of potential applications, e.g. in parallel assembly of modular products, or parallel query processing in relational databases, as well as purely combinatorial applications, including searching in tree-like partial orders (which in turn generalizes binary search on sorted data).
\end{abstract}

\noindent
{\bf Keywords:}  elimination tree, graph algorithms, graph ranking, parallel assembly, parallel processing, tree-depth

\section{Introduction}

The problem of computing tree-depth has a long history in the realm of parallel computations as it emerged from different applications and under different names.
It was considered for the first time under the name of \emph{minimum height elimination trees} where it played an important role in parallel factorization of (sparse) matrices \cite{Liu90}.
Although the sparsity assumption is not necessary from the graph-theoretic standpoint in this application, this is the scenario where using the parallel (or distributed) approach to factorization provides a speed-up.

Then, the problem re-appeared under the name of \emph{vertex ranking} \cite{BodlaenderDJKKMT98}.
More applications have been brought up, including parallel assembly of multi-part products, where tree-like structures have been mostly considered.
In this particular scenario, an extension of this problem called \emph{$c$-edge ranking} \cite{ZhouKN96} has been introduced to model two sources of parallelism --- both the one that is captured by the tree-depth and the one that can be simulated by more powerful multi-arm robots, i.e., the robots that may perform $c$ operations at once as opposed to doing one at a time.
Another application in this realm includes parallel query processing in relational databases \cite{MakinoUI01}.
Here, the graph $G$ constructed on the basis of a database query does not need to be a tree but it turns out that in order to design an efficient parallel schedule for performing the query, a vertex ranking of a line graph of a certain spanning tree of $G$ is computed along the way.
As has been pointed out in \cite{MakinoUI01}, this scenario can be seen as a special case of a more general process of information retrieval in an arbitrary network $G$: suppose that each node initially holds a piece of information and one wants to merge/collect all the data at one site, i.e., one node.
Similarly as in database query processing above, where a single query could not participate in two different join operations simultaneously, we assume in this scheme that the data of a particular node cannot participate in more than one merging process at the same time.
This may be e.g. due to integrity reasons.
On the other hand, two data merging processes can occur in parallel as long as they comprise of pairwise disjoint node sets.
Under such assumptions, the minimum number of parallel steps (as well as the corresponding merging schedule) required to gather all data at one node can be computed by finding an optimal vertex ranking of a line graph of a certain spanning tree of $G$.
The above applications, by nature, primarily stimulate research on algorithms for line graphs of trees.
In fact, the typical approach was either to work, i.e., compute its tree-depth, on a line graph of a tree that models the parallel process, or to work on a spanning tree of the input graph that models the process.

Later on, the same problem has been introduced under different names in a number of other applications: it has been called \emph{LIFO-search} \cite{GiannopoulouHT12} in the area of graph pursuit-evasion games; it is a problem of searching in tree-like partial orders, which naturally generalizes the classical binary search problem on sorted arrays; it has been also named as \emph{ordered coloring} and more recently as \emph{tree-depth} \cite{NesetrilM06}.
Through the connection to searching partial orders, it is worth pointing out that computing tree-depth (or equivalently, a search strategy for the corresponding partial order) can be used to finding, in an automated way, software bugs \cite{Ben-AsherFN99}.
In this application, again, the line graphs of trees and their tree-depth are of interest.
To intuitively see the link between tree-depth and parallel processing, we may see the graph-theoretic problem from two perspectives.
One perspective that is the driving approach e.g. in parallel matrix factorization or in software testing, looks at the tree-depth in a `top-down' fashion: find a good separator in the graph, process the workload that corresponds to the separator and then recurse (in parallel) on the connected components.
What is powerful regarding the tree-depth problem is that it balances the ratio between the sizes of such separators and the recursion depth, providing the minimum overall computation time.
It is worth noting that neither focusing on small separators (and hence pay with larger recursion depth) nor finding the computation scheme with the minimum recursion depth (but perhaps using some larger separators) is an efficient approach for such applications.
The other perspective is to see the tree-depth of a line graph of a tree in a `bottom-up' fashion: find a matching in a tree, contract its edges in parallel, and repeat this step until you reach a state where there is a single vertex left.
This interpretation lies behind the above-mentioned applications in databases, parallel assemblies, and information retrieval (see Figure~\ref{fig:parallel} for an illustration of this scheme).
   \begin{figure}[htb!]
    \begin{center}
    \includegraphics[scale=0.8]{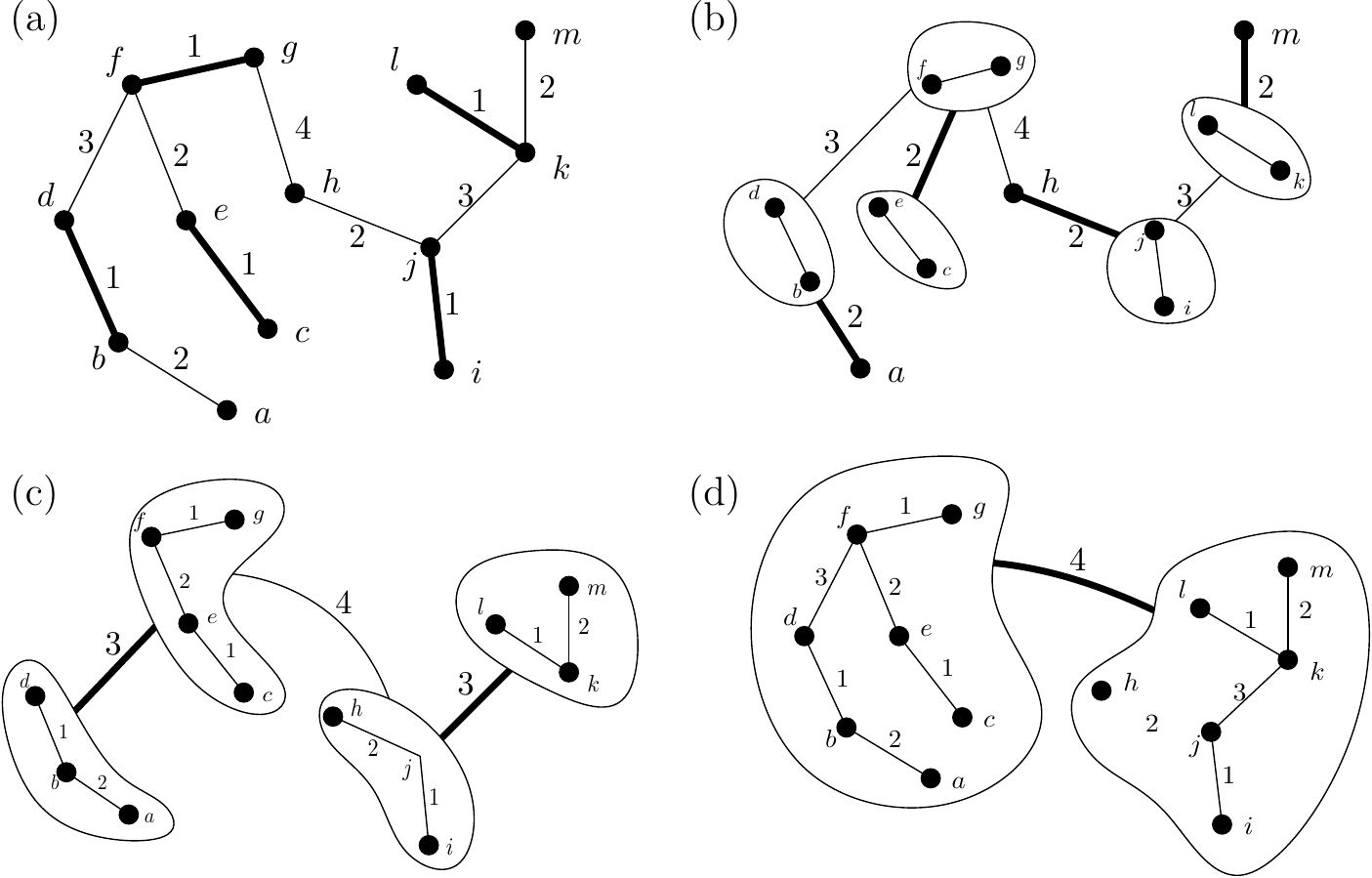}
    \caption{The process of parallel merging, in each step of which one is allowed to select a matching and contract its edges. See (a) for a tree with labels denoting consecutive matchings (the edges labeled $i$ are contracted in step $i$), and (b)-(d) for the results after the first three steps.}
    \label{fig:parallel}
   \end{center}
    \end{figure}
Intuitively, these two processes are two sides of the same coin: the edges of the matching selected in the $i$-th step correspond to the $i$-th lowest level in the recursion tree.

% .................................................................
\subsection{Related work}\label{sec:relwork}

From the complexity standpoint, we remark that the tree-depth problem is $\NP$-complete for arbitrary line graphs \cite{LamY98}.
Since in our work we focus on trees and their line graphs, we review here mostly works that relate to this class of graphs.
The minimum known superclass of graphs that generalizes trees and for which the problem is known to be $\NP$-complete is the family of chordal graphs \cite{DereniowskiN06}.
For the trees themselves, there has been a few papers dealing with computing optimal tree-depth, concluding in a linear-time algorithm in \cite{Schaffer89}.
The problem turned out to be much more challenging and interesting for line graphs of trees (a proper subclass of block graphs).
A number of papers have been published, see e.g. \cite{ZhouKN96,ZhouN94,ZhouN95}, that were gradually reducing the complexity from $O(n^4\log^3n)$ in \cite{Ben-AsherFN99} and $O(n^3\log n)$ in \cite{TorreGS95} to the final linear-time algorithms in~\cite{LamY01,MozesOW08}, where $n$ is the order of the input tree.\footnote{We remark that this research has been independently progressing as a tree-like partial order search problem and as an edge ranking problem until it was observed in~\cite{Dereniowski08} that the two are equivalent.}

Mostly motivated by the applications, there are some variants of the tree-depth problem of line graphs of trees introduced in the literature.
One of them considers weighted trees \cite{Dereniowski06} (equivalent to multitrees when weights are integral).
The weights naturally represent the number of time units required for performing particular operation (in the earlier mentioned applications, we took a silent assumption that all operations have unit duration).
Once we introduce the weights, the problem becomes strongly $\NP$-hard even for some restricted classes of trees \cite{CicaleseJLV12,CicaleseKLPV16,Dereniowski06}.
For algorithmic results on weighted paths see e.g. \cite{CicaleseJLV12,LaberMP02}.
Note that even such a simple structure as paths is practically interesting in this context since our problem generalizes binary search.
For weighted trees, there is a number of works improving on possible approximation ratio achievable in polynomial-time \cite{CicaleseJLV12,CicaleseKLPV16,Dereniowski06}, with the best one to date having an approximation factor of $O(\sqrt{\log n})$ \cite{DereniowskiKUZ17}, and it is unknown whether this is best possible---in particular, a challenging open question regarding this line of research is whether a constant-factor approximation is feasible for weighted trees. 
Also, many intriguing problems remain open in the area of online competitiveness that has been considered, e.g., for line graphs of specific star-like graphs \cite{bode13} or just for vertex variant on trees \cite{McDonald15}.

Another variant of the problem is the $c$-edge ranking mentioned above, for which an optimal solution can be found efficiently for an unweighted line graph of a tree \cite{ZhouKN96}.
Yet another relaxation of vertex ranking, studied for several classes of graphs, including trees, has been recently introduced in~\cite{KarpasNS15}.

We remark that there is a number of searching models that generalize searching in (rooted) tree-like partial orders to more general classes of partial orders see e.g. \cite{CarmoDKL04,Dereniowski08,OnakP06}.
Yet even more general query models have been introduced recently for graphs \cite{Emamjomeh-Zadeh16}.
These generalizations lose their connection to the tree-depth notion.
Interestingly, these general graph-theoretic models find applications in some machine learning algorithms~\cite{Emamjomeh-Zadeh17}.

We finish by mentioning that the tree-depth generalization that we consider in this work for line graphs of trees has been introduced and studied for trees in \cite{Zwaan10} under the name of \emph{vertex ranking with capacity}.
It has been shown in \cite{Zwaan10} that there exists a $O^*(2.5875^n)$-time optimal algorithm for general graphs, and that an $f(n)$-approximate solution to vertex ranking can be transformed to an $(f(n)+1)$-approximate solution to vertex ranking with capacity.
For trees, this problem admits an absolute $O(\log \B)$-approximation in polynomial time \cite{Zwaan10}.

% .................................................................
\subsection{Problem statement and our results}\label{sec:problem}

In this section we introduce a generalization of the concept of the classical elimination tree by considering elimination forests with level functions explicitly defined on their vertex sets. 
In this context a \emph{rooted forest} is meant as a disjoint union of rooted trees.
We start with a notion of the classical elimination tree that we call here a free elimination tree.
For the sake of correctness we point out that all graphs $G=(V,E)$ considered in this paper are finite, simple and undirected, with vertex set $V$ and edge set $E$.

\begin{definition}\label{def:freeET}
A \emph{free elimination tree} for a connected graph $G$ is a rooted tree $T$ defined recursively as follows:
\begin{enumerate}
    \item let $V(T)=V(G)$ and let an arbitrary vertex $r\in V(T)$ be the root of $T$,
    \item if $\vert V(G)\vert = 1$, then let $E(T) = \emptyset$.
          Otherwise, let 
            \[
               E(T) = \bigcup\limits_{i=1}^{k} \Bigl( E(T_i) \cup \{e_i\} \Bigr)
            \]
          where $k$ is the number of connected components of $G-r$, and $T_i$ stands for an elimination tree for the $i$-th connected component of $G-r$ with the root $r(T_i)$ joined by the edge $e_i$ with the root $r$ of $T$, i.e., $e_i=\{r(T_i),r\}$.
\end{enumerate}
\end{definition}

\begin{definition}
 Let $G$ be a graph with $k$ connected components. A \emph{free elimination forest} for $G$ is the disjoint union of $k$ free elimination trees, each of which is determined for distinct connected component of $G$. 
\end{definition}

Before the definition of elimination forest we point out that the notions of ancestor, parent and child are used in their usual sense. Namely, given two vertices $u$ and $v$ of a rooted forest $F$, we say that $v$ is \emph{an ancestor} of $u$ if $v$ belongs to the path with the end-vertices in $u$ and the root of the connected component of $F$ that contains $u$. If $v$ is an ancestor of $u$ and $\{v,u\}\in E(F)$, then $v$ is the \emph{parent} of $u$ while $u$ is a \emph{child} of $v$.

\begin{definition}
   Let $F$ be a free elimination forest for a graph $G$, and let $f: V(F)\rightarrow\mathbb{Z}^{+}$ be a \emph{level function}, i.e. a function such that $f(u)<f(v)$ whenever $v$ is an ancestor of $u$. A free elimination forest $F$ with a level function $f$ is called an \emph{elimination forest} for $G$ and it is denoted by $F_f$.
\end{definition}

For an elimination forest $F_f$ its \emph{height} $h(F_f)$ is defined as $\max\{f(v)\;\vert\;v\in V(F_f)\}$. Clearly, the maximum can be attained only for the roots of the connected components of $F_f$.
Now, for every $i\in\{1,\ldots,h(F_f)\}$ we define the \emph{$i$-th level} $\Lev{i}{F_f}$ of $F_f$ as the set of those vertices $v$ in $V(F_f)$ for which $f(v)=i$ (notice that the definition of a level function allows empty levels).
In a natural way the \emph{width $w(F_f)$} of elimination forest $F_f$ is defined as $\max_{i}\vert\Lev{i}{F_f}\vert$, where $i\in\{1,\ldots,h(F_f)\}$.

We point out that the above definitions do not impose the placement of the roots of all $k$ connected components of $F_f$ at the highest level. In fact, each root can be placed at an arbitrary level.
Moreover, it is also not required that adjacent vertices of an elimination forest occupy consecutive levels---there may be a gap of arbitrary size between the numbers of levels they belong to.

It is also worth mentioning that though the definition of the classical elimination tree (recall that we call it free elimination tree) does not explicitly give any level function, one of the possible functions can be, and usually is, implicitly deduced by assuming that each level is formed by a single recursive step in Definition \ref{def:freeET}.

\begin{definition}
 Let $G$ be a graph and let $\B$ be a positive integer. The \emph{bounded-width tree-depth $\btd{G}{\B}$} of a graph $G$ is the minimum $k$ for which there exists an elimination forest $F_f$ for a graph $G$ such that $h(F_f)=k$ and $w(F_f)\le \B$.
\end{definition}

Note that for every $G$ and $\B>0$ there always exists some elimination forest of width bounded by $\B$. Since in what follows the level function $f$ is always clear from the context, we omit $f$ in the symbol of an elimination forest.

We can now formulate our main problem.

\medskip
      \indent\textsc{Bounded-Width Tree-Depth ($\BTD$)}\\
      \indent\quad\textit{Input:}  A graph $G$, positive integers $k$ and $\B$.\\
      \indent\quad\textit{Question:} Does $\btd{G}{\B}\le k$ hold?

\medskip
The above seemingly small differences in the classical and our definitions (e.g. dropping the assumption on connectivity of a graph $G$, relaxing the requirements on level function) play an important role both in our $\NP$-completeness reduction and in our algorithm. They significantly affect the bounded-width tree-depth problem complexity thus making it different than that of the classical tree-depth.
The classical tree-depth problem (a variant without the bound $\B$) can be solved in linear time for line graphs of trees \cite{LamY01,MozesOW08}. The generalization we consider turns out to be $\NP$-complete.

\begin{theorem} \label{thm:NPC_trees1}
    $\BTD$ problem is $\NP$-complete for line graphs of trees.
\end{theorem}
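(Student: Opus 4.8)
The plan is to establish membership in $\NP$ and then hardness by reduction from the strongly $\NP$-complete \textsc{3-Partition} problem. Membership is immediate: a certificate is an elimination forest $F$ for the input graph $G=L(T)$ together with its level function $f$. In polynomial time one checks that $F$ is a legal free elimination forest for $G$ (by verifying, vertex by vertex, that the children of each vertex are exactly the roots of the connected components obtained after its removal), that $f$ strictly increases along ancestors, that $h(F)\le k$, and that $w(F)\le\B$. Since $|V(G)|=|E(T)|$ and all level values lie in $\{1,\dots,k\}$, both the certificate and its verification are polynomial, so $\BTD$ is in $\NP$.

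For hardness I would exploit the classical correspondence between the tree-depth of $L(T)$ and edge rankings of $T$, together with the single most rigid feature of our setting: a star on $s$ edges in $T$ becomes a clique $K_s$ in $L(T)$, and the vertices of a clique, being pairwise adjacent, are totally ordered by the ancestor relation in \emph{every} elimination forest. Such a gadget is therefore forced onto one root-to-leaf path, it occupies $s$ distinct levels with width $1$ at each of them, and---decisively---it cannot be split between two branches of any branching that occurs above it. This gives the encoding of numbers: a value $a_i$ is realised by a star on $a_i$ edges, so that the gadget consumes exactly $a_i$ units of \emph{depth} inside a \emph{single} branch, while the number of gadgets that may be active at a common level is capped by the width bound $\B$.

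Given a \textsc{3-Partition} instance $a_1,\dots,a_{3m}$ with $\sum_i a_i=mB$ and $B/4<a_i<B/2$, I would assemble a tree $T$ from the $3m$ star gadgets joined by a thin connecting skeleton, and set the width bound and target height to the intended number of parallel branches and the per-branch budget, i.e. $\B=m$ and $k=B$ up to an additive, instance-independent offset absorbed by the skeleton. The intended equivalence is then a scheduling identity: by the width bound at most $\B=m$ clique gadgets are simultaneously active, each gadget of size $a_i$ must live on one branch and there loads $a_i$ levels, and the levels available per branch for gadgets amount to $B$. Since the total gadget load $\sum_i a_i=mB$ exactly matches the total available capacity of $\B$ branches of $B$ levels, a width-$\B$ elimination forest of height $k$ exists if and only if the numbers pack into $m$ branches of capacity $B$, which---because each $a_i\in(B/4,B/2)$ forces exactly three numbers per branch---is precisely a solution to \textsc{3-Partition}.

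The main obstacle is correctness, and it concentrates in two places. The delicate part of the construction is the connecting skeleton: the packing above requires that gadgets be \emph{freely} assignable to branches (\textsc{3-Partition} groups arbitrary triples, not contiguous ones), yet $L(T)$ is necessarily connected, and attaching every star to a common hub would create one large clique that by itself dictates the whole branch structure. Reconciling free assignment with a skeleton whose own contribution to both width and height is a fixed, instance-independent constant is exactly the bookkeeping that makes it natural to route the argument through an intermediate packing formulation before realising it on a tree. The second place is soundness: I must forbid elimination forests that cheat using the relaxations of our model---empty levels and the freedom to place the roots of branches at arbitrary levels. Here I would prove a normalisation lemma stating that some optimal width-bounded forest keeps each clique gadget as a block on a single branch and wastes no level, so that the additive depth accounting is tight; the reverse implication, turning a valid partition into an elimination forest of height $k$ and width $\B$, is then the routine direction.
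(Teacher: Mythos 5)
Your $\NP$-membership argument is fine, but the hardness half has a genuine gap, and it sits exactly in the two places you flag as "the main obstacle" and then defer. The fatal step is the soundness inference: "total gadget load $\sum_i a_i=mB$ equals total capacity of $\B$ branches times $B$ levels, hence every level is exactly full, hence the gadgets pack into $m$ branches of load $B$." The last implication is false in this model. A gadget (clique) is indeed totally ordered in every elimination forest, but its vertices may occupy a \emph{non-contiguous} set of levels, and the paper's model explicitly permits gaps and empty levels; consequently gadgets lying in pairwise incomparable branches may interleave their level sets, and a perfect level-packing need not induce any partition of the gadgets. Concretely, take $m=2$, $B=6$ and three gadgets of size $4$ in pairwise incomparable branches, at level sets $\{1,2,3,4\}$, $\{1,2,5,6\}$ and $\{3,4,5,6\}$: every level has width exactly $2$, yet $\{4,4,4\}$ admits no partition into two groups of load $6$. (This example ignores the $(B/4,B/2)$ range, but that constraint only fixes cardinalities; nothing in your argument uses it to forbid interleaving.) This also shows your normalisation lemma cannot be proved in the form you state it: on a NO instance for which a cheating forest of the prescribed height and width exists, there is no "nice" forest to normalise to, since a nice forest would yield a valid partition. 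So the burden is not to normalise cheating forests but to design the skeleton so that they cannot exist at all---and this collides head-on with your other requirement that the skeleton allow \emph{free} assignment of any gadget to any branch, since a symmetric attachment structure is precisely what makes the interleaved placement realisable. Identifying these two difficulties is not the same as resolving them; they are the entire content of the theorem.

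It is instructive that the paper does not attempt this direct route. It first proves hardness for line graphs of \emph{forests} (Theorem~\ref{thm:NPC_forests}, by reduction from \MHS), where components are coupled only through the width bound, so no connecting skeleton is needed at that stage; and the "no cheating" property is enforced not by unit-width clique chains with zero slack, but by attaching $2^{\M}$-vertex paths $\PM$ whose elimination trees are binary trees of height at least $\M$: misplacing even one root edge $r(\PM)$ floods some low level with an extra $2^{\M-1}$ (or at least $2^{\M-2}$ on average) vertices and violates the bound $\B=n2^{\M}+\mbar$, which is the counting argument of Lemma~\ref{lem:lem11}. That quantitative mechanism is robust against exactly the interleaving your construction cannot exclude. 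Only afterwards does the paper pass from forests to trees (proof of Theorem~\ref{thm:NPC_trees1}), gluing the components under a new root and using calibrated stars $S_{k+2},\ldots,S_{k+s}$ to force each component's root edge to a prescribed level. If you want to salvage the 3-Partition plan, you would need an analogous mechanism in which a misplaced gadget costs a super-constant amount of width, rather than a per-level counting identity with slack zero.
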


The proofs given in Section~\ref{sec:NPC} reveal that bounded-width tree-depth behaves differently than the original tree-depth problem strongly depending on the connectivity of the graph.
Specifically, for tree-depth, only connected graphs are of interest since tree-depth of a non-connected graph is just the maximum tree-depth taken over its connected components.
In the bounded-width tree-depth problem, the connected components interact. En route of proving Theorem~\ref{thm:NPC_trees1}, we first obtain hardness of the problem for line graphs of forests and then we extend it to get the $\NP$-completeness of $\BTD$ for line graphs of trees.

On the positive side, we develop an approximation algorithm for line graphs of trees.
Here, the fact that minimum height elimination tree (or equivalently an optimal tree-depth) can be found efficiently \cite{LamY01,MozesOW08} turns out to be very useful---our approach is to start with such a tree (with some additional preprocessing) and squeeze it down so that its width becomes as required.
The squeezing-down is done in a top-down fashion, by considering the highest level that is too wide, i.e., contains more than $\B$ vertices, and moving the excess vertices downwards. Note that the algorithm needs to choose which vertices should go down---we move those vertices that are the roots of the subtrees of the smallest height.
This leads to polynomial-time approximation algorithm with an additive error of $2\B$.

\begin{theorem} \label{thm:ALG}
There exists a polynomial-time additive $2\B$-approximation algorithm for $\BTD$ problem for line graphs of trees. 
\end{theorem}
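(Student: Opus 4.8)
The plan is to produce, in polynomial time, an elimination forest of $G=L(T)$ of width at most $\B$ whose height exceeds $\btd{G}{\B}$ by at most $2\B$. I would first record the two lower bounds that the construction is meant to match. Since every width-$\le\B$ elimination forest is in particular an elimination forest, $\btd{G}{\B}\ge\td{G}$, the ordinary (width-unconstrained) tree-depth, which for line graphs of trees is computable in linear time \cite{LamY01,MozesOW08}. Second, since each level holds at most $\B$ vertices, any width-$\le\B$ forest uses at least $\lceil |V(G)|/\B\rceil$ nonempty levels; applying this volume bound not only to $G$ but to every ``deep and dense'' subgraph (equivalently, to every subtree of $T$) gives a refined family of lower bounds that combine depth with density. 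The whole argument will show that the algorithm's height is within $2\B$ of the maximum of these bounds, hence within $2\B$ of $\btd{G}{\B}$.

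The algorithm is the squeezing procedure sketched in the introduction. First compute an optimal elimination tree $F$ for $G$ with the tight, bottom-up level function in which a leaf sits on level $1$ and every internal vertex sits one level above its tallest child; its height is $\td{G}$ and it is the pointwise-lowest level function on this fixed free elimination tree. After a preprocessing step that makes the levels contiguous and orders children by the height of the subtree they root, I would process the levels from the top downward, maintaining the invariant that every level at or above the one currently being processed already has width at most $\B$. Whenever the current level $i$ carries $m_i>\B$ vertices, I keep on level $i$ the $\B$ vertices whose subtrees are tallest, and slide the remaining $m_i-\B$ vertices (the roots of the shallowest subtrees), together with their subtrees, downward into the levels below, creating a fresh level only when the region below is already saturated. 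Checking that the result is a valid elimination forest of width at most $\B$, and that the procedure runs in polynomial time, is routine: releveling never changes the underlying free elimination tree, strict monotonicity along root paths is preserved because only whole subtrees are shifted, and each level is touched once.

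The heart of the proof, and the step I expect to be the main obstacle, is the height bound $h(\text{output})\le\btd{G}{\B}+2\B$. The intended argument is a charging/exchange argument justifying the rule ``keep the tallest subtrees high, push the shallowest down'': a tall subtree rooted on level $i$ already occupies almost all of the levels below $i$ and so cannot be lowered without overflowing level $1$, whereas a shallow subtree has ample room beneath it, so the greedy choice is precisely the one that never sacrifices depth unnecessarily. To turn this into a bound I would isolate a maximal block of consecutive \emph{saturated} levels (each holding exactly $\B$ vertices) ending at the algorithm's deepest point of congestion: the $q\B$ vertices filling such a block of $q$ levels are, by the shortest-first rule, essentially confined to that block in every solution, so any width-$\le\B$ elimination forest must also spend at least $q$ levels on them; the two residual sources of slack, namely the tall subtrees deliberately held at the top of the block and a single fractional remainder of fewer than $\B$ vertices, together account for the additive $2\B$.

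The delicate points, where the constant $2\B$ rather than something smaller is genuinely needed, are three: that the precedence constraint (a vertex must lie strictly above its children) does not force levels beyond what the volume count already charges; that committing to the optimal tree-depth structure $F$, instead of optimizing over all free elimination trees, costs nothing more than this same additive term; and that moving subtrees in indivisible chunks loses up to one almost-full level at each interface. Making the block/charging argument tight simultaneously against the depth bound and against the density bound is the crux of the whole theorem; the width and validity claims, by contrast, are immediate from the releveling viewpoint.
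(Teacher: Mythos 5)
Your algorithm and your pair of lower bounds are the right ones, and they match the paper's: start from a minimum-height elimination tree for $L(T)$ (computable in linear time), preprocess so that subtrees hanging at each level are ordered by height, and squeeze top-down by pushing the shallowest subtrees out of overfull levels; then play the tree-depth bound $\btd{L(T)}{\B}\geq \td{L(T)}$ against the volume bound $\btd{L(T)}{\B}\geq m/\B$. However, the step you yourself flag as the crux is where your argument has a genuine gap, and the specific claim you propose to fill it with is unsound. You assert that the $q\B$ vertices occupying a maximal block of $q$ saturated levels in the algorithm's output are ``essentially confined to that block in every solution,'' so that every width-$\B$ elimination forest must also spend $q$ levels on them \emph{in that location}. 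An arbitrary width-$\B$ elimination forest is under no obligation to resemble your output: it may be built on a completely different free elimination tree, and the only thing one can extract from those $q\B$ vertices is the trivial counting fact that they occupy at least $q$ levels \emph{somewhere}. That global count does not let you charge the (possibly non-full) levels lying above and below your block, so the block-by-block charging does not close into a bound of the form $h(\mathrm{output})\leq\btd{L(T)}{\B}+2\B$.

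What is missing is the clean dichotomy that the paper establishes and that makes any locality claim about optimal solutions unnecessary: after preprocessing, the elimination tree is \emph{structured} (its full levels outside the top $\B-1$ levels form one consecutive segment reaching the prefix), and each greedy lowering step preserves this invariant --- once a level at a fixed distance from the root becomes full, it and the level above it stay full forever. Consequently, either the height never increases during squeezing, in which case the output height is at most $\td{L(T)}+2\B\leq\btd{L(T)}{\B}+2\B$ (the $2\B$ coming from the stretching preprocessing), or the height does increase, in which case \emph{all} levels except at most $\B$ (the $\B-1$ highest ones and level $1$) are full, so the height is at most $m/\B+\B\leq\btd{L(T)}{\B}+\B$. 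Proving that the sorted, stretched initial tree is structured and that lowering the minimum-height subtree preserves fullness of levels is the real technical content (the paper's Lemmas~\ref{lem:Tzero-structured}--\ref{lem:almost-all-full}); your proposal gestures at the right greedy intuition (``keep tall subtrees high'') but does not supply these invariants, and the exchange argument you substitute for them compares against optimal solutions in a way that cannot be justified.
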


The worst-case guarantee of our algorithm is similar to the one presented in \cite{Zwaan10} for trees.
However, we point out two differences: finding balanced vertex separators for trees can be done easily and this is the foundation of the algorithm in \cite{Zwaan10}.
In contrast, for line graphs of trees such balanced separators do not exist in general, which is the reason for initializing the algorithm using a minimum height elimination tree. This step is important, for otherwise applying very involved bottom-up dynamic programming ideas from \cite{LamY01,MozesOW08} seems unavoidable.
The second difference lies in the fact that it remains unknown if there exists an optimal polynomial-time algorithm for trees, as no hardness argument is known.

% =================================================================
\section{Preliminaries} \label{sec:Preliminaries}

For the sake of clarity and to avoid involved notation and argument, necessary when carrying the proofs directly on line graphs, we use  $\ET{G}$ to denote an elimination forest for the line graph $L(G)$ of a graph $G$.
Consequently, since by the definitions of line graph and elimination forest, there is a natural one-to-one correspondence between the edges of a graph $G$ and the vertices of its line graph and hence the vertices of an elimination forest $\ET{G}$, for \emph{each edge} of $G$ we shortly say that it \emph{corresponds} to the appropriate vertex $v$ of $\ET{G}$ and that it \emph{belongs} to the level of $\ET{G}$ that contains $v$.
Notice that with a small abuse of notation the above one-to-one correspondence allows the use of any level function $\ell$, determined for an elimination forest $\ET{G}$, as if it was defined on $E(G)$. 
Thus for every edge $e$ in $E(G)$ we say that $\ell(e)=p$ if for the corresponding vertex $v$ of $\ET{G}$ it holds $v\in\Lev{p}{\ET{G}}$.

 \begin{figure}[htb!]
    \begin{center}
    %\vspace{4cm}
    \includegraphics[scale=0.9]{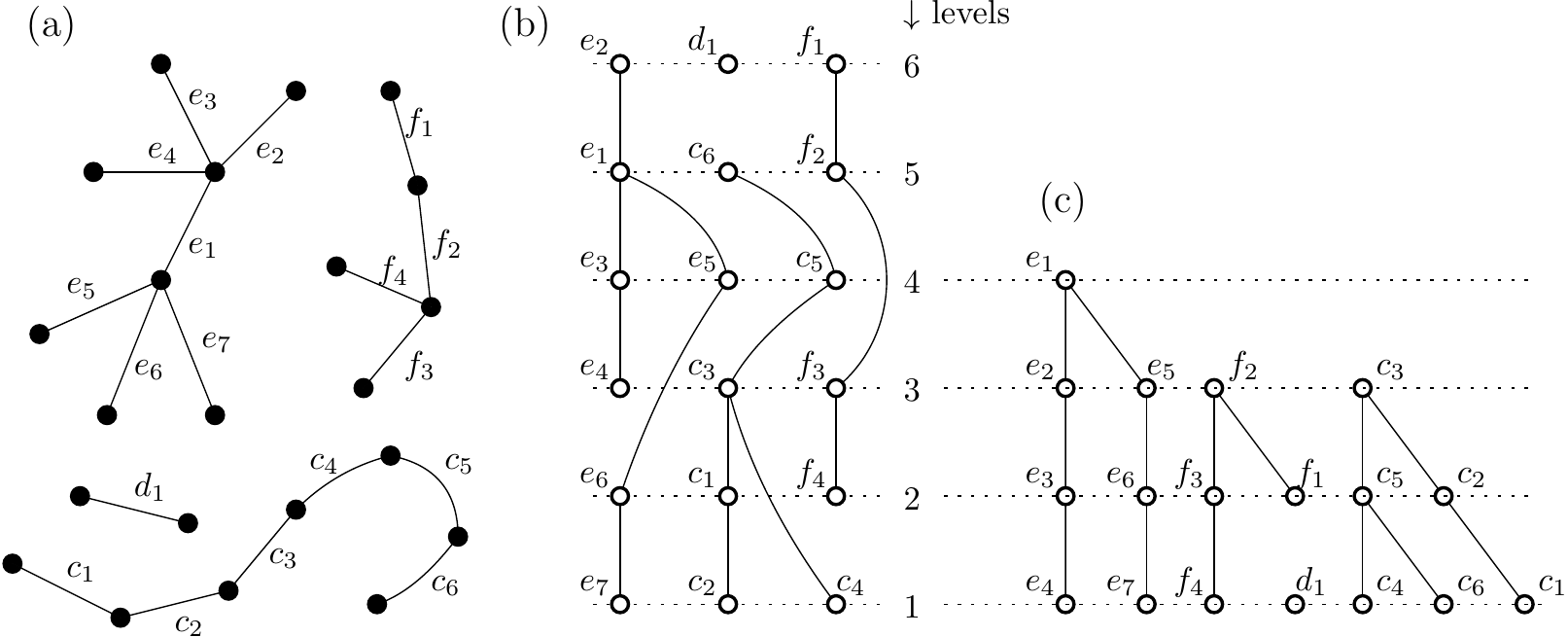}
    \caption{
       (a) a graph $G$ with 4 connected components; 
       (b) a minimum height elimination forest $\ET{G}$ when $\B=3$;
       (c) a minimum height elimination forest $\ET{G}$ when $\B\ge 7$ (here, an elimination tree for \emph{each} connected component of $G$ has minimum height).
       }
    \label{fig:elimtrees}
   \end{center}
    \end{figure}

Concerning the properties of level functions in the above-mentioned common context of a graph $G$ and elimination forest $\ET{G}$, we note that if two distinct edges $e_1, e_2$ are adjacent in $G$, then they cannot belong to the same level of $\ET{G}$, i.e., $\lev{}{e_1}\neq\lev{}{e_2}$.
Similarly, it is not hard to see that in the recursive elimination process performed on the vertices of $L(G)$ (according to Definition \ref{def:freeET}) all vertices eliminated at the same recursive step belong to distinct components of the processed graph, and for each of them the value of a level function is greater than for the vertices eliminated in further steps. 
In other words, for distinct edges $e_1, e_2$ of $G$ such that $\lev{}{e_1}=\lev{}{e_2}$ every path with end-edges $e_1, e_2$ contains an edge $e'$ such that $\lev{}{e'}>\lev{}{e_1}$.
It is worth mentioning that if we consider $\ell$ as a function defined on $E(G)$ and satisfying both of the above-mentioned properties, then $\ell$ can be equivalently seen as an edge ranking of a graph $G$ (see e.g. \cite{IyerRV91}). 
An example in Figure \ref{fig:elimtrees} serves as an illustration of the above concepts as well as those introduced in Section \ref{sec:problem}.

We use the same common context to define the visibility of a level from a vertex in $G$. Namely, for a vertex $v$ in $G$ we say that the $p$-th level is \emph{visible in $G$ from $v$} if there \emph{exists} an edge $e=\{u_1,u_2\}$ with $\lev{}{e}=p$ and $G$ contains a path with end-vertices $v,u$, where $u\in\{u_1,u_2\}$ and for each edge $e'$ of the path $\lev{}{e'}\le p$.
The set of levels visible in $G$ from $v$ is denoted by $\Fvis{G}{v}$.
When determining the levels admissible for a given edge  in a graph $G$ (or for the corresponding vertex in elimination forest $\ET{G}$) we need to consider and forbid all levels visible from both end-vertices of that edge, i.e., the level $p$ is \emph{admissible} for $e=\{u_1,u_2\}$ if neither $p\in\Fvis{G-e}{u_1}$ nor $p\in\Fvis{G-e}{u_2}$.

% ==========================================================
\section{$\NP$-completeness of $\BTD$}\label{sec:NPC}

Technically, in order to prove $\NP$-completeness of $\BTD$ we perform a polynomial-time reduction from the classical version of the Minimum hitting set ($\MHS$) problem.

\medskip
\indent\textsc{Minimum Hitting Set ($\MHS$)}\\
\indent\quad\textit{Input:} A set $\A = \{a_1, \ldots, a_n\}$, a family $\A_1, \ldots, \A_m$ of subsets of $\A$, an integer $t\ge0$.\\
\indent\quad\textit{Question:} Does there exist $\A' \subseteq \A$ such that $|\A'| \leq t$ and $\A' \cap \A_j \neq \emptyset$, $j \in \{1, \ldots, m\}$?

\medskip\noindent
In what follows we use the following example instance of $\MHS$: 
let $n=6$, $m=4$, $t=2$ and let $\A=\{a_1,\ldots,a_6\}$, $\A_1=\{a_1,a_2,a_6\}$, $\A_2=\{a_2,a_4,a_5,a_6\}$, $\A_3=\{a_1,a_3,a_5\}$, $\A_4=\{a_3,a_4\}$.

%...................................................................
\subsection{Construction and terminology}\label{sec:NPC_constr}

In this section, on the basis of the input to the $\MHS$ problem, we construct an appropriate forest $F$ consisting of the tree $T$, called the \emph{main} component, and some number of \emph{additional} connected components created on the basis of 'template' trees $\Tbard$ that we define later on.

In what follows we extensively use specific star subgraphs. By $S_n$ we denote an $n$-vertex star, while for a vertex $v$ in $G$ we use $S(v)$ to denote the subgraph of $G$ induced by $v$ and its neighbors $u$ for which $\dg{G}{u}=1$. 
An important parameter in our construction is an integer $\M$ that we calculate at the end of this section such that it is 'large enough'---the role of $\M$ is explained in the next section.

First, we focus on the structure of the main component $T$. 
The tree $T$ can be obtained by the identification of distinguished vertices $r_i$ of the trees $T(a_i)$ carefully constructed for the corresponding elements $a_i\in \A$, $i\in\{1,\ldots,n\}$.
The resulting common vertex, denoted by $r$, becomes the root of $T$.
Next, we add $\M+3(m-1)+4$ edges incident to $r$ and hence $S(r)$ is a star $S_{\M+3(m-1)+5}$.
As the building blocks of $T(a_i)$ we need graphs $\Gk$ with $\ka\ge\M+1$ and $\Gbi$.
The structure of these graphs, with distinguished vertices called \emph{connectors}, is presented in Figure~\ref{fig:gadgets}.
The graph $\Gk$ has one connector denoted by $w_1$ and $\Gbi$ has two connectors $r_i$ and $v_i$. The `loop' at a vertex depicts the star  with the distinguished central vertex.

   \begin{figure}[htb!]
    \begin{center}
    \includegraphics[scale=0.7]{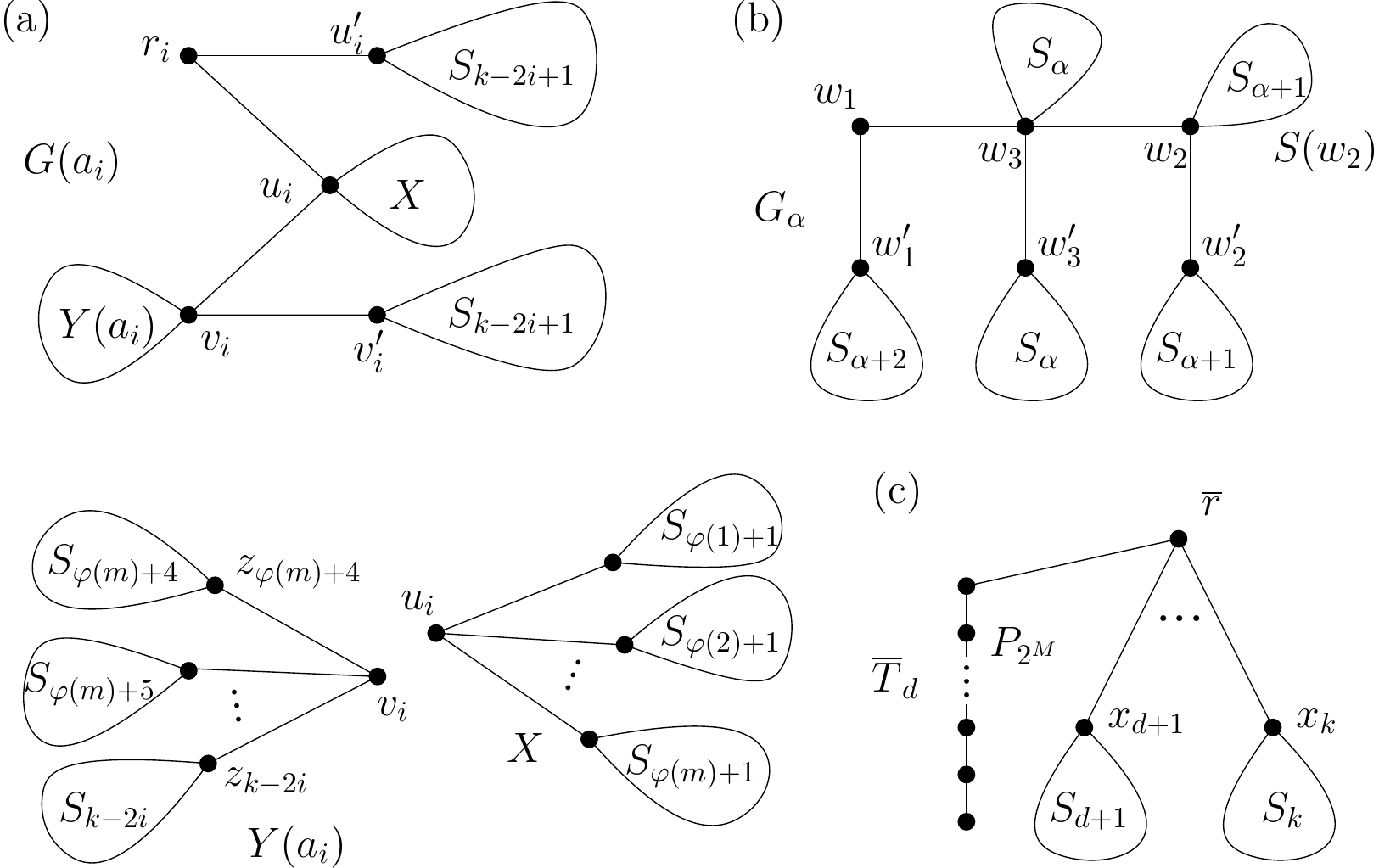}
    \caption{The graphs used in Construction: (a) $\Gbi$, (b) $\Gk$, (c) $\Tbar_d$, where $\M\le d<\ka$.}
    \label{fig:gadgets}
   \end{center}
    \end{figure}

From a slightly informal perspective, we can describe $T(a_i)$ as a 'star-shaped' structure formed with $m+1$ graphs $T_0,\ldots,T_m$, where $T_0=\Gbi$ and $T_j=G_{\varphi(j)}$ with
\[\varphi(j)=\M+3(j-1)+1\] for every $j\in\{1,\ldots,m\}$.
More formally, $T(a_i)$ is formed by the identifications of appropriate connectors, i.e., the connector $w_1$ of each $T_j$ with $j\in\{1,\ldots,m\}$ is identified with the connector $v_i$ of $T_{0}$.
In order to complete the construction of $T(a_i)$, for every $\A_j$ such that $a_i\in \A_j$ we take a copy of an $2^{\M}$-vertex path $\PM$ and identify an end-vertex of the path with a leaf of $S(w_2)$ of the corresponding tree $T_j$ in $T(a_i)$ (we say that $\PM$ is \emph{attached} at $T_j$).

At this point, fixing $\M$, we remark on the structure of $T$. 
Concerning the structure of $\Gbi$, we stress out that this depends on $a_i$ which determines the degrees of $u'_i, v'_i$ as well as the degree of $v_i$ and hence the number of vertices in $\{z_{\varphi(m)+4},\ldots,z_{k-2i}\}$ (see Figure \ref{fig:gadgets}(a)).
Also note that for a given $a_i\in \A$ the trees $T_1,\ldots,T_m$ in $T(a_i)$ are pairwise different and their structure depends on $j$. On the other hand the trees $T_1,\ldots,T_m$ are independent of $a_i$.
However, for distinct elements $a_{i_1},a_{i_2}\in \A$ the structure of $T(a_{i_1})$ and $T(a_{i_2})$ is different, which follows from the aforementioned differences between  $\Gb(a_{i_1})$ and $\Gb(a_{i_2})$ as well as varying 'attachment patterns' of paths $\PM$.

In order to complete the construction of the forest $\F$ we need the aforementioned trees $\Tbard$ with $d\ge\M$, defined as in Figure \ref{fig:gadgets}(c) (for the distinguished path $\PM$ we say that it is \emph{attached} at $\Tbard$). 
Namely, the forest $\F$ is composed of a single copy of $T$, $2n-|\A_j|+1$ copies of $\Tbar_{\varphi(j)-1}$ for $j=1$, and $n-|\A_j|+1$ copies of $\Tbar_{\varphi(j)-1}$ for each $j\in\{2,\ldots,m\}$, as well as $n-1$ copies of $\Tbar_{\varphi(j)}$ and $n$ copies of $\Tbar_{\varphi(j)+1}$ both templates taken for each $j\in\{1,\ldots,m\}$.
Consequently, with reference to our example of the $\MHS$ problem (recall $n=6$, $m=4$) we get, in order,
a unique copy of $T$, 10 copies of $\Tbar_\M$, 
3 copies of $\Tbar_{\M+3}$, 4 copies of $\Tbar_{\M+6}$
and 5 copies of $\Tbar_{\M+9}$,
5 copies of $\Tbar_{\M+4}$, $\Tbar_{\M+7}$ and $\Tbar_{\M+10}$,
6 copies of $\Tbar_{\M+5}$, $\Tbar_{\M+8}$ and $\Tbar_{\M+11}$.
The role of particular components we explain in the next section.

In what follows, referring to a path $\PM$ we always mean one of the paths $\PM$ used in this construction. 
For each $j\in\{1,\ldots,m\}$ the set of all paths $\PM$ corresponding to the set $\A_j$ (i.e. attached at $T_j$ in some $T(a_i)$ or at an additional component $\Tbard$ with $d\in\{\varphi(j)-1,\varphi(j),\varphi(j)+1\}$) is denoted by $\mcP(\A_j)$. 
Let $\mathcal{P}^* = \mcP(\A_1)\cup\cdots\cup\mcP(\A_m)$ and let $\mbar$ denote the size of the set $W$ consisting of the edges of the forest $\F$ excluding the edges of paths in $\mcP^*$. More formally 
\begin{equation}\label{eq:mbar}
    \mbar = \vert W \vert = \vert E(\F) \setminus \bigcup{E(\PM)}\vert, 
              \mbox{ where the sum runs over all } \PM\mbox{ in }\mcP^*.
\end{equation}
We remark that $\mbar$ depends on $k$ because the construction of $\F$ depends on $k$. For the $\BTD$ problem, we set the input parameters $\B$ and $k$ to be:
\begin{align}
       k &= \M + 1 + 3m + 2n + t,  \label{eq:k}\\ 
      \B &= n2^\M +\mbar,          \label{eq:b}
\end{align}
where $\M$ is chosen as a minimum integer satisfying  
\begin{equation}\label{eq:M}
      \M\geq 2\lceil\log_2\mbar\rceil+1.
\end{equation}
Note that \eqref{eq:M} can be rewritten as a bound $2^{\M}>4\mbar^2$ (we will use this form in the proof).
We remark that the above parameters can be set in such a way that their values are polynomial in $m,n$ and $t$.
It follows from the construction that $\mbar$ is polynomially bounded in $k$, that is, $\mbar\leq c_1k^c$ for some constants $c_1$ and $c$.
Take $\M$ and $k$ so that $\M\geq 2c\log_2k+2\log_2c_1+3$.
This can be done in view of~\eqref{eq:k}.
Note that this fixes the values of $\mbar$ and $\B$.
The right hand side of~\eqref{eq:M} is upper bounded by $2c\log_2k+2\log_2c_1+3$, which implies that the required bound in~\eqref{eq:M} holds.
This bound on $\M$ in particular implies $k=O(m+n+t)$ and thus $\mbar$, $2^{\M}$ and $\B$ are polynomial in $m,n$ and $t$.

%.............................................................
\subsection{The idea of the proof}\label{sec:NPC_idea} 

First of all, we note that the vast majority of edges in the forest $F$ belongs to the additional components, which due to their structure fit into precisely planned levels of an elimination forest $\ET{F}$, thus leaving exactly the right amount of capacity on those levels where the main component gadgets come into play. 
The positioning of elimination subtrees corresponding to appropriate paths $\PM$ is determined in Lemma \ref{lem:lem11}; also see Figure \ref{fig:PMlevels} where we sketch a pattern of fitting  elimination subtrees into appropriate levels.
Though most of the capacity consumed by the main component can be  attributed to the paths $\PM$ (attached at leaves of respective instances of the gadget $\Gk$) the role of just a few edges of $\Gbi$ and $\Gk$ cannot be overestimated. 
 \begin{figure}[htb!]
    \centering
    \includegraphics[scale=0.7]{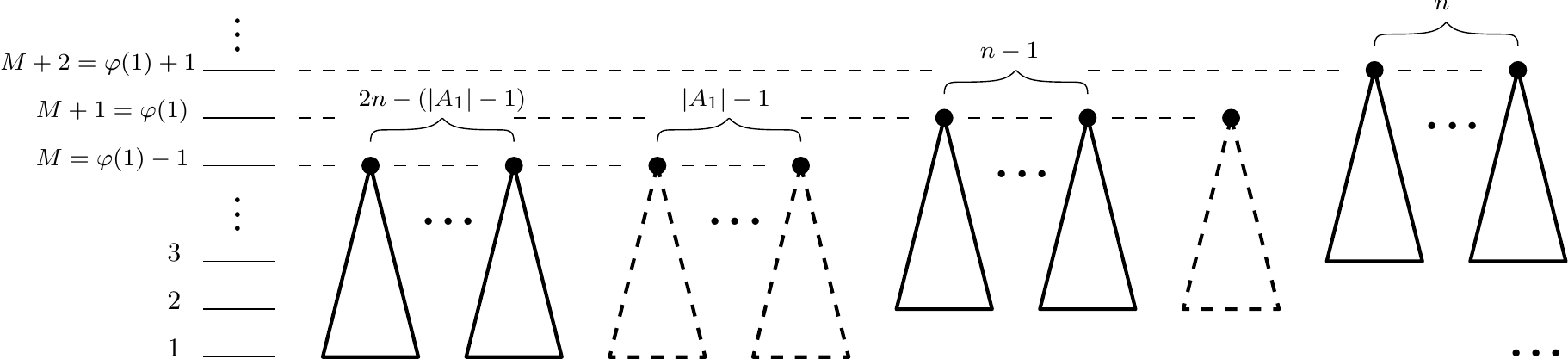}
    \caption{Positioning of elimination subtrees corresponding to the paths $\PM$ in the main and additional components (dashed and solid lines, resp.) of the forest $F$. A snapshot for $j=1$.}
    \label{fig:PMlevels}
\end{figure}
As we will see, the assignment of the edge $\{u_i,v_i\}$ to the level $k-2i+2$ is equivalent to including the element $a_i$ in the solution $\A'$. 
Due to the sensitivity of the gadget $\Gk$ to what levels are visible from its connector $w_1$ 'outside' $\Gk$ (see Lemmas \ref{lem:lem3a} and \ref{lem:lem3b}) we get a coupling between the level of $\{u_i,v_i\}$ and the highest level that can be occupied by the so called 'root edge' of the path $\PM$ attached at $\Gk$, when particular instance $G_{\varphi(j)}$ of the gadget in $T(a_i)$ corresponds to the set $\A_j$ containing $a_i$. 
More specifically, if $\lev{}{\{u_i,v_i\}}=k-2i+2$, then the root edge of such a path $\PM$ can be moved to $\varphi(j)$ from $\varphi(j)-1$ allowing all of its other edges to be moved one level up, thus gaining the increase of the free space at the level that has been previously occupied by roughly half of its edges.
 
The bound $t$ on the size of solution $\A'$ is met by attaching appropriate number of edges pending at the root $r$ of the main component $T$. The number of such edges depends on $t$ and it is calculated in such a way that in at most $t$ of $n$ subgraphs $T(a_i)$ the edge $\{r,u_i\}$ will be allowed at level not exceeding $k-2n$. 
In Lemma \ref{lem:lem6} we show that either $\{r,u_i\}$ or $\{u_i,v_i\}$ must occupy the level $k-2i+2$ and hence there will be at most $t$ subgraphs $T(a_i)$ with the edge $\{u_i,v_i\}$ assigned to the level $k-2i+2$ and triggering the above-mentioned process of lifting.

%............................................................
\subsection{Proof of $\NP$-completeness}\label{sec:NPC_prf} 

We start with a simple lemma on the low levels of elimination trees that corresponds to one of the known basic properties of edge rankings.

\begin{lemma}\label{lem:basic}
Let $G$ be a connected graph. For every elimination tree $\ET{G}$ there exists an elimination tree $\ET{G}'$ with $h(\ET{G}')=h(\ET{G})$ such that for each $S(v)$ in $G$ the edges of $S(v)$ belong to the levels in $\{1,\ldots,|E(S(v))|\}$.
\end{lemma}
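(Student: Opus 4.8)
The plan is to argue in the language of edge rankings, using the excerpt's observation that a level function on $\ET{G}$ is exactly an edge ranking of $G$, and to obtain $\ET{G}'$ by relabelings that push the edges of each star $S(v)$ down to the lowest levels. I would first normalize the goal: the edges of a fixed $S(v)$ are pairwise adjacent in $G$, hence carry pairwise distinct levels, so there are exactly $|E(S(v))|$ of them, and the requirement $\lev{}{e}\in\{1,\dots,|E(S(v))|\}$ for all $e\in E(S(v))$ is equivalent to saying these edges occupy $1,\dots,|E(S(v))|$, each exactly once. I would also note that empty levels are allowed, so keeping $h(\ET{G}')=h(\ET{G})$ only requires never creating a level above $h$ and retaining some edge at level $h$.

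The engine of the proof is a lowering move for pendant edges, which is the basic edge-ranking fact the lemma alludes to. Fix $f=\{u,v\}$ with $\dg{G}{u}=1$, so $f\in E(S(v))$. Since $u$ has degree one, $f$ is a dead end: no path between two edges different from $f$ uses $f$, and every path leaving $f$ must first cross another edge at $v$. Using this I would show that $f$ can be moved to any level $q$ that is unused at $v$ and lies below every \emph{non}-pendant edge at $v$, while keeping a valid ranking: the only pairs that can be spoiled involve $f$, and for any edge $e$ at level $q$, each $f$--$e$ path first meets a non-pendant edge at $v$ of level $>q$, which dominates the pair. In the excerpt's terms this says such a $q$ is admissible for $f$, because $\Fvis{G-f}{u}=\emptyset$ (as $u$ is isolated in $G-f$) and no level below all non-pendant edges at $v$ and unused at $v$ is visible from $v$, every path out of $v$ either dead-ending at a pendant leaf or starting with a non-pendant edge of level $>q$.

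With the lowering move, the lemma reduces to one sub-goal: produce a height-$h$ ranking in which, for every center $v$, all non-pendant edges at $v$ have level above $|E(S(v))|$. Granting this, the levels $1,\dots,|E(S(v))|$ are free at $v$ and below all non-pendant edges there, so I drop the pendant edges of $v$ onto them by the lowering move; placements for different centers cannot clash, since distinct stars are edge-disjoint and an edge joining two centers is non-pendant and hence in no star. I would pursue the sub-goal top-down: while some non-pendant edge $e=\{v,w\}$ sits at level at most $\max\{|E(S(v))|,|E(S(w))|\}$, raise it to a free level just above that threshold, targeting a level freshly vacated by pendant edges already pushed down, so that the number of occupied levels never grows.

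The hard part will be showing these upward moves are always available without breaking the path condition or exceeding $h$: raising an edge is dangerous because two edges placed too high can lose the separating higher edge between them, so I would choose, among admissible targets, the least level simultaneously free at $v$ and at $w$, and justify by a packing count that lowering the pendant edges frees exactly enough room higher up. The genuinely delicate point is height preservation in degenerate cases with too few non-pendant edges: in the extreme where $G$ is a single star, every edge is pendant, $L(G)$ is complete, $\ET{G}$ is a forced chain, and pinning the pendant edges to $1,\dots,|E(S(v))|$ caps the top level at $|E(S(v))|$, so a strictly larger $h$ cannot be matched. This signals that the statement is really meant for elimination trees without slack at the top (e.g. minimum-height ones), and I would make this reading explicit, and otherwise keep a witnessing non-pendant edge at level $h$, before running the packing argument.
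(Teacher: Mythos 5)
A preliminary remark: the paper never actually proves this lemma --- it is introduced as ``a simple lemma \ldots{} that corresponds to one of the known basic properties of edge rankings'' and then used as a black box --- so there is no official proof to compare against, and your argument has to stand on its own. Much of it does. The normalization, the lowering move (a pendant edge $f=\{u,v\}$ with $\dg{G}{u}=1$ may be dropped to any level that is unused at $v$ and below every non-pendant edge at $v$, precisely because $f$ can never be an interior edge of a path), and the reduction to the sub-goal that every non-pendant edge at $v$ sits above level $|E(S(v))|$ are all correct. Your degenerate-case observation is also a genuine catch: if $G$ is a single star (or $K_2$) and the given level function has gaps, so that $h(\ET{G})>|E(S(v))|$, then the statement as written is literally false; it must be read with $h(\ET{G}')\le h(\ET{G})$, or restricted to the non-degenerate situations in which the paper invokes it. The paper glosses over this entirely.

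The gap is exactly the step you defer: achieving the sub-goal by \emph{raising} non-pendant edges. Choosing ``the least level simultaneously free at $v$ and at $w$'' enforces only the adjacency condition; it does nothing for the separation condition. After raising $e=\{v,w\}$ to level $q$, another edge at level $q$ elsewhere in $G$ may now be joined to $e$ by a path all of whose interior edges have level below $q$ (its old separator could have been any edge of level at most $q$), and no packing count of vacated levels can exclude this --- the obstruction is structural, not volumetric, so ``enough room'' is not the issue. The standard repair is to introduce no new levels at all and instead \emph{swap within the star at $v$}: whenever a pendant edge $f$ at $v$ has $\lev{}{f}=p$ and a non-pendant edge $g$ at $v$ has $\lev{}{g}=q<p$, exchange their two levels. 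This permutes the existing multiset of levels, so the height and all capacity questions are untouched, and validity follows by the same dead-end trick you used for lowering: any pair violated after the swap yields, upon replacing $f$ by $g$ (or prepending $g$, or shortcutting through $w$) a path witnessing a violation of the \emph{original} ranking, again because $f$ is never interior to any path. Iterating the swap terminates (the sum of levels of pendant edges strictly decreases) and yields your sub-goal; your lowering moves then finish the proof, and for a non-star $G$ the top level is afterwards held by a non-pendant edge, which never moves, so the height is preserved exactly rather than via the ad hoc bookkeeping of a ``witnessing edge'' you propose.
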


Now, we present several lemmas on properties of the gadgets defined in Section \ref{sec:NPC_constr}.
We always treat the gadgets as if they were subgraphs of the forest $F$, e.g., in the proofs of the next two lemmas, when analyzing admissibility of levels for particular edges of $\Gk$, we use $\Fvis{F'}{w_1}$ to refer to the levels visible from $w_1$ in $F'$, where $F'$ is a graph induced by $V(F)\setminus V(\Gk-w_1)$ (see Figure \ref{fig:Gadg_color}).
Moreover, in what follows, both lemmas are used in the context in which the levels $\ka+4,\ldots,k$ belong to $\Fvis{F'}{w_1}$.
However, since that context is not required for their proofs, we just mention this fact and proceed with more general statements.
Depending on $\Fvis{F'}{w_1}$, the following two lemmas characterize the cases encountered upon construction of elimination trees $\ET{\Gk}$.

   \begin{figure}[htb!]
    \begin{center}
    \includegraphics[scale=0.7]{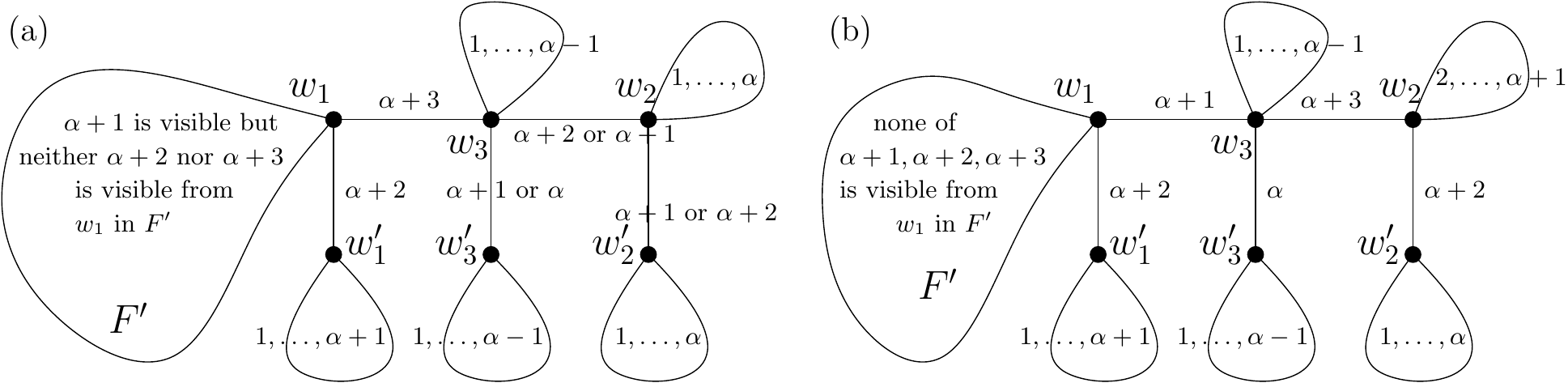}
     \caption{An illustration of the proofs: 
      (a) for Lemma \ref{lem:lem3a}, (b) for Lemma \ref{lem:lem3b}.} 
    \label{fig:Gadg_color}
   \end{center}
    \end{figure}

\begin{lemma}\label{lem:lem3a}
 If $\ka+2$ or $\ka+3$ belongs to $\Fvis{F'}{w_1}$, then for every elimination tree $\ET{\Gk}$ it holds $h(\ET{\Gk})>\ka+3$. If neither $\ka+2$ nor $\ka+3$ belongs to $\Fvis{F'}{w_1}$ and $\ka+1\in\Fvis{F'}{w_1}$, then in \emph{every} elimination tree $\ET{\Gk}$ of height $\ka+3$ the levels admissible for $S(w_2)$ are in  $\{1,\ldots,\ka\}$, and the only levels visible from $w_1$ in $\Gk$ are $\ka+2$ and $\ka+3$.
\end{lemma}

\begin{proof}
Let $H'$ and $H$ be the subgraphs induced by $V(F')\cup\{v\,|\,\{w'_1,v\}\in E(\Gk)\}$ and $V(H')\cup\{w_3\}$, respectively.

Suppose that, contrary to our claim, there exists an elimination tree $\ET{\Gk}$ with $h(\ET{\Gk})\le\ka+3$. Let $D=\{1,\ldots,\ka+3\}$ be a set of levels admissible for $\ET{\Gk}$ and let $p\in\{\ka+2,\ka+3\}$ be a level visible from $w_1$ in $F'$. Clearly, the levels of adjacent edges $\{w_1,w'_1\}$ and $\{w_1,w_3\}$ cannot be the same, and due to visibility restrictions have to be different from $p$. It is not hard to see that by Lemma~\ref{lem:basic} independently of whether $p$ equals $\ka+2$ or $\ka+3$, the levels visible from $w_1$ in $H'$ form the following set $\{a,\ldots,\ka+3\}$, where $a=\lev{}{\{w_1,w'_1\}}$. Thus, the set of levels admissible for $\{w_1,w_3\}$ is $D\setminus\{a,\ldots,\ka+3\}$ and hence $\lev{}{\{w_1,w_3\}} < a$. Therefore, $\lev{}{\{w_1,w_1'\}}$ as well as  $p$ and $\lev{}{\{w_1,w_3\}}$ are visible in $H$ from $w_3$, which means that there are at most $|D|-3$ levels admissible for the edges in $U=\{\{w_3,v\}\,|\,v\notin V(H)\}$. Since edges in $U$ are pairwise adjacent and $|U|=\ka+1$, at least one edge in $U$ has to be assigned to a level higher than $\ka+3$, which contradicts our assumption on the height of $\ET{\Gk}$.

The proof of the second claim, consists of two parts. 
First, we prove that the highest level $\ka+3$ must be occupied by an edge of the subgraph $G''$ induced by $\{w_1\}\cup V(S(w_3))\cup V(S(w'_3))\cup V(S(w'_1))$. Since the highest level is always a singleton, we will deduce that the height of an elimination tree $\ET{G'}$ is at most $\ka+2$, where $G'$ is a subgraph induced by $\{w_3\}\cup V(S(w_2))\cup V(S(w'_2))$. Next, restricting our attention to $G'$ we show that in every $\ET{G'}$ the levels admissible for the edges of $S(w_2)$ are in $\{1,\ldots,\ka\}$. 

Suppose that there exists $\ET{G''}$ such that $h(\ET{G''})\le\ka+2$. We start with an observation that $\lev{}{\{w_1,w'_1\}}=\ka+2$, for otherwise if $\lev{}{\{w_1,w'_1\}}\le\ka$ (recall visibility of level $\ka+1$ from $w_1$ in $F'$), then at least one edge of $S(w'_1)$ would have to occupy level $\ka+3$, which contradicts our assumption. 
Similarly, $\lev{}{\{w_1,w_3\}}>\ka$, for if not, at least $3$ levels would be visible from $w_3$ in $H$ and hence $\ka$ pairwise adjacent edges in the set $U=\{\{w_3,v\}\,|\,v\notin V(H)\}$ would have to be assigned to at most $\ka-1$ admissible levels. Hence $\lev{}{\{w_1,w_3\}}>\ka$, and since the levels in $\{\ka+1,\ka+2\}$ are visible from $w_3$ in $H'$, we get $\lev{}{\{w_1,w_3\}}=\ka+3$, a contradiction. Therefore since $|\Lev{\ka+3}{\ET{\Gk}}|=1$, it holds $h(\ET{G'})\le\ka+2$.
Note that we also proved the statement on visibility of $\ka+2$ and $\ka+3$ from $w_1$ in $\Gk$.

Now, for the second part we focus on $G'$ and observe that $\ka+2$ edges incident to $w_2$ have to occupy distinct levels in $\{1,\ldots,\ka+2\}$. Thus, the levels visible from $w_3$ in $G'$ form the following set $\{a,\ldots,\ka+2\}$, where $a=\lev{}{\{w_2,w_3\}}$. Consequently, there are at most $a$ levels admissible for $\ka+1$ pairwise adjacent edges in $U'=\{\{w_3,v\}\,|\,v\notin V(G')\}$, which implies $a\ge\ka+1$. 
It remains to observe that if $a=\ka+1$, then the level $\ka+2$ cannot be occupied by an edge of $S(w_2)$, for if not, both $\ka+1$ and $\ka+2$ 
would be visible from $w'_2$ and hence forbidden in $S(w'_2)$ leaving just $\ka$ levels for $\ka+1$ edges incident to $w'_2$. 
The argument for $a=\ka+2$ is analogous. 
\end{proof}

\begin{lemma}\label{lem:lem3b}
 If neither $\ka+1$, $\ka+2$ nor $\ka+3$ belongs to $\Fvis{F'}{w_1}$, then there \emph{exists} an elimination tree $\ET{\Gk}$ of height $\ka+3$ with all edges of $S(w_2)$ at levels in $\{2,\ldots,\ka+1\}$ and such that only the levels $\ka+1,\ka+2$ and $\ka+3$ are visible from $w_1$ in $\Gk$. Moreover, there does not exist an elimination tree $\ET{\Gk}$ with $h(\ET{\Gk})<\ka+3$.
\end{lemma}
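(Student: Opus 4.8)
The statement has two parts: the \emph{existence} of an efficient elimination tree of height $\ka+3$ with the two listed properties, and the matching \emph{lower bound} $h(\ET{\Gk})\ge\ka+3$. As in Lemma~\ref{lem:lem3a}, I would argue entirely in terms of level functions, using the equivalence with edge rankings recalled in Section~\ref{sec:Preliminaries}: it suffices to produce one level function $\ell$ on $E(\Gk)$ that is a valid edge ranking of height $\ka+3$ with the stated properties, and separately to show that no valid $\ell$ has height below $\ka+3$. Throughout I would keep Lemma~\ref{lem:basic} available to normalise the placement of the pendant stars, and I would use the convention of Lemma~\ref{lem:lem3a} of viewing $\Gk$ inside $F$, so that admissibility of a level for an edge incident to $w_1$ is governed by $\Fvis{F'}{w_1}$.

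For the existence part the plan is to write $\ell$ down explicitly. I would place the $\ka$ edges of $S(w_2)$ on the $\ka$ consecutive levels $\{2,\dots,\ka+1\}$; this is exactly the one-level upward shift from the placement $\{1,\dots,\ka\}$ that was \emph{forced} in Lemma~\ref{lem:lem3a}, and it is made possible precisely because the present hypothesis frees level $\ka+1$ (it is not visible from $w_1$ in $F'$). The remaining edges --- the spine edges $\{w_2,w_2'\}$, $\{w_2,w_3\}$, $\{w_3,w_3'\}$, $\{w_1,w_3\}$, $\{w_1,w_1'\}$ and the other pendant stars --- would then be assigned so that (i)~edges sharing an endpoint receive distinct levels, (ii)~between any two edges of equal level on a path some intermediate edge is higher, and (iii)~the height is exactly $\ka+3$. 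The clean observation driving the visibility bookkeeping is that if every edge incident to $w_1$, namely $\{w_1,w_1'\}$ and $\{w_1,w_3\}$, is placed at a level in $\{\ka+1,\ka+2,\ka+3\}$, then every path leaving $w_1$ begins with an edge of level $>\ka$, so no level $\le\ka$ is visible from $w_1$ in $\Gk$; it then remains only to exhibit reachable edges realising the remaining two values $\ka+1,\ka+2$ (the value $\ka+3$ is visible automatically, being the maximum), which yields $\Fvis{\Gk}{w_1}=\{\ka+1,\ka+2,\ka+3\}$. Finally, since the edges touching $w_1$ land on $\{\ka+1,\ka+2,\ka+3\}$ and these levels are not in $\Fvis{F'}{w_1}$ by hypothesis, the assignment stays admissible when $\Gk$ sits inside $F$.

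For the lower bound I would prove $h(\ET{\Gk})\ge\ka+3$ directly, independently of the hypothesis on $F'$. Suppose for contradiction that $h(\ET{\Gk})\le\ka+2$. The crucial feature is that $w_3$ is incident, through three internally disjoint branches, to three large stars, each with about $\ka$ leaves (towards $S(w_2)$, towards $S(w_3')$, and towards $S(w_1')$ via $w_1$). Extending the level-counting already carried out at $w_3$ in Lemma~\ref{lem:lem3a} (the set $U$), each such star consumes $\ka$ distinct levels and therefore blocks, through visibility, all levels above its separating edge. Tracking these blocked levels along the three branches shows that the branch whose separating edge is lowest leaves its central vertex with at least three forbidden levels --- its own separating edge together with the two higher sibling separators, all visible from it --- hence with at most $\ka-1$ admissible levels for its $\ka$ star-leaves, a contradiction. (Note that the ``tailed double broom'' formed by only two of these branches is edge-rankable within height $\ka+2$, so the third large-star branch is genuinely what forces the extra unit.) Combined with the construction above, this shows the minimum height is exactly $\ka+3$.

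The step I expect to be the main obstacle is the existence part: producing a \emph{single} level function that simultaneously pins $S(w_2)$ to $\{2,\dots,\ka+1\}$, keeps the height at $\ka+3$, and realises the exact visibility set $\{\ka+1,\ka+2,\ka+3\}$, all while remaining a legal edge ranking across the junctions $w_2\!-\!w_3\!-\!w_1$, where the degree-$(\ka+2)$ vertices leave very little slack. The visibility requirement is tamed by the observation above, so the real care goes into verifying condition~(ii) along the spine; the lower bound, by contrast, is a more routine reprise of the counting already established in Lemma~\ref{lem:lem3a}.
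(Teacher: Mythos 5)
Your lower-bound argument has a genuine gap. You locate the contradiction at ``the central vertex of the branch whose separating edge is lowest,'' asserting that this vertex sees its own separator together with the two higher sibling separators. That assertion fails precisely when the lowest of the three separators at $w_3$ is $\{w_1,w_3\}$, i.e.\ the separator of the branch that reaches $S(w'_1)$ through the connector $w_1$. There the star center $w'_1$ lies at distance two from $w_3$, and every path leaving $w'_1$ towards the rest of $\Gk$ starts with the edge $\{w_1,w'_1\}$, so a level $p$ can be visible from $w'_1$ only if $\lev{}{\{w_1,w'_1\}}\leq p$. But under the hypothesis $h(\ET{\Gk})\leq\ka+2$, the $\ka+1$ pendant edges of $S(w'_1)$ together with $\{w_1,w'_1\}$ form $\ka+2$ pairwise adjacent edges and therefore occupy \emph{all} levels $1,\ldots,\ka+2$; this forces $\lev{}{\{w_1,w_3\}}<a$, where $a=\lev{}{\{w_1,w'_1\}}$. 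In particular the lowest separator's own level is never visible from $w'_1$, and if $a=\ka+2$ then neither sibling separator is visible from $w'_1$ either: the pendant edges of $S(w'_1)$ may then occupy $1,\ldots,\ka+1$ with no conflict at $w'_1$, and no contradiction appears at the vertex where you look for one. (Your two other cases, lowest separator $\{w_3,w_2\}$ or $\{w_3,w'_3\}$, do work, though they additionally rely on the sizes of those stars, which your write-up never pins down.)

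The paper's proof sidesteps this case distinction. The same filling observation about $S(w'_1)$ shows that the levels visible from $w_1$ on the $w'_1$ side form the whole set $\{a,\ldots,\ka+2\}$ and that $\lev{}{\{w_1,w_3\}}<a$; the count is then performed at $w_3$ rather than at any star center: the levels visible from $w_3$ include $\lev{}{\{w_1,w_3\}}$ and all of $\{a,\ldots,\ka+2\}$, leaving at most $a-2\leq\ka$ admissible levels for the $\ka+1$ pairwise adjacent edges joining $w_3$ to vertices outside the $w_1$-branch (the set $U$ in the paper's notation) --- a contradiction covering all cases at once. Replacing your case analysis by this single count repairs your lower bound. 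As for the existence half, your plan (place both edges incident to $w_1$ on levels in $\{\ka+1,\ka+2,\ka+3\}$, so that no level below $\ka+1$ is visible from $w_1$) is the same route the paper takes; the paper discharges it by exhibiting a concrete assignment in Figure~\ref{fig:Gadg_color}(b), whereas you stop at a list of constraints the assignment should satisfy, so --- as you yourself flag --- that half of your proposal remains unfinished.
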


\begin{proof}
Concerning the first claim, an assignment of the edges in $E(\Gk)$ to appropriate levels of $\ET{\Gk}$ is presented in Figure~\ref{fig:Gadg_color}(b).
For the second claim suppose that, on the contrary, there exists an elimination tree $\ET{\Gk}$ such that $h(\ET{\Gk})\le\ka+2$.
The argument is analogous to that in the proof of Lemma \ref{lem:lem3a}.
For $H'$ and $H$ denoting the graphs induced by $V(F')\cup\{v\,|\,\{w'_1,v\}\in E(\Gk)\}$ and $V(H')\cup\{w_3\}$, respectively, we easily observe that if $a$ stands for $\lev{}{\{w_1,w'_1\}}$, then a set of levels visible from $w_1$ in $H'$ is $\{a,\ldots,\ka+2\}$ and hence $\lev{}{\{w_1,w_3\}}<a$.
Clearly, due to the visibility from $w_3$ in $H$ there are at most $\ka$ levels admissible for the edges in $U=\{\{w_3,v\}\,|\,v\notin V(H)\}$.
Therefore since $|U|=\ka+1$, at least one edge in $U$ has to be assigned to a level higher than $\ka+2$, a contradiction.
\end{proof}

The two lemmas that follow are devoted to the description of the mutual interaction of the distinguished edges of $\Gbi$, which manifests as a 'switching property' of the gadget and allows 'lifting' of appropriate elimination subtrees preserving the bound $\B$ on the level size. 

\begin{lemma}\label{lem:lem6}
If $\ET{T}$ is an elimination tree of height $k$, then for each $i\in\{1,\ldots,n\}$ 
\begin{equation*}
    either \quad \lev{}{\{r,u_i\}}=k-2i+2 \quad 
    or     \quad \lev{}{\{u_i,v_i\}}=k-2i+2,
\end{equation*}
and each level $p\in\{k-2i+1,\ldots,k\}$ is visible from $r$ in the subgraph induced by the vertices of $T(a_1),\ldots,T(a_i)$,
\end{lemma}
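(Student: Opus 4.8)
The plan is to prove Lemma~\ref{lem:lem6} by induction on $i$, establishing both the ``either/or'' alternative and the visibility claim simultaneously, since the two statements reinforce one another. The central structural fact I would rely on is that the edges $\{r,u_i\}$ and $\{u_i,v_i\}$ are adjacent (sharing $u_i$) and are the ``bottleneck'' edges connecting $T(a_i)$ to the root $r$. Before starting the induction, I would record a counting bound on the top levels: the star $S(r)$ has $\M+3(m-1)+5$ pendant edges together with the edges $\{r,u_i\}$ for all $i$, and all of these are pairwise adjacent (they all meet at $r$), so they must occupy distinct levels. Combined with the $\{u_i,v_i\}$ edges, which are adjacent to their respective $\{r,u_i\}$, this forces a tight packing of the highest $2n$ or so levels, leaving essentially no slack. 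The height bound $h(\ET{T})=k=\M+1+3m+2n+t$ is exactly calibrated so that counting the pairwise-adjacent edges through $r$ and through each $u_i$ leaves no room to avoid the claimed level assignment.

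The base case $i=1$ would proceed as follows. The edges $\{r,u_1\}$ and $\{u_1,v_1\}$ share $u_1$, hence $\lev{}{\{r,u_1\}}\neq\lev{}{\{u_1,v_1\}}$. I would argue that the top level $k$ must be occupied by one of the edges incident to $r$ (since $S(r)$ together with the $\{r,u_i\}$ edges are the unique largest clique-like bundle of pairwise-adjacent edges forcing the use of level $k$), and then trace which level the pair $\{r,u_1\},\{u_1,v_1\}$ is pushed to. The point is that the level $k-2\cdot1+2=k$ is the highest level, and by the adjacency-through-$r$ count exactly one of the two candidate edges can sit there while the other is displaced one step lower; whichever one is \emph{not} at the top must satisfy the visibility requirement so that level $k$ remains visible from $r$. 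For the inductive step, assuming the claim for $1,\ldots,i-1$, I would use the induction hypothesis that levels $\{k-2(i-1)+1,\ldots,k\}$ are all visible from $r$ in the union of the first $i-1$ gadgets. This visibility forbids those $2(i-1)$ levels for both $\{r,u_i\}$ and $\{u_i,v_i\}$ (since these edges are adjacent to $r$ resp.\ reachable from $r$ through $\{r,u_i\}$), so the highest level available to this pair is $k-2i+2$; the adjacency $\lev{}{\{r,u_i\}}\neq\lev{}{\{u_i,v_i\}}$ then forces exactly one of them to take the value $k-2i+2$. Finally I would check that this assignment extends the visibility window downward by two more levels, i.e.\ that levels $k-2i+1$ and $k-2i+2$ become newly visible from $r$ through $T(a_i)$, completing the induction.

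The main obstacle I expect is the visibility bookkeeping in the inductive step: one must verify that assigning $\{r,u_i\}$ or $\{u_i,v_i\}$ to level $k-2i+2$ genuinely renders \emph{both} levels $k-2i+1$ and $k-2i+2$ visible from $r$, not just $k-2i+2$. Getting the second newly-visible level requires examining the internal structure of $\Gbi$ (Figure~\ref{fig:gadgets}(a))---specifically the vertices $u_i,v_i,u'_i,v'_i$ and the path of $z$-vertices whose length depends on $a_i$---and invoking the definition of visibility, which permits traversing edges of level at most $p$. The delicate part is confirming that the gadget is built so that exactly one intermediate edge of level $k-2i+1$ lies on a low-enough path from $r$, making that level visible; this is precisely the ``sensitivity'' of the gadget the authors flagged in Section~\ref{sec:NPC_idea}. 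I would also need to handle the case analysis of \emph{which} of the two edges takes the top value, verifying that the visibility conclusion holds identically in both branches, since the lemma asserts the visibility claim unconditionally regardless of which alternative of the ``either/or'' occurs.
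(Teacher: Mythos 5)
Your proposal shares the paper's skeleton (induction on $i$, proving the either/or claim and the visibility claim together, and using the inductive visibility of levels $k-2(i-1)+1,\ldots,k$ from $r$ to cap the levels of $\{r,u_i\}$ and $\{u_i,v_i\}$ at $k-2i+2$), but it is missing the forcing mechanism---the lower bound---and the substitute you propose for it does not work. Adjacency of $\{r,u_i\}$ and $\{u_i,v_i\}$ only gives $\lev{}{\{r,u_i\}}\neq\lev{}{\{u_i,v_i\}}$; together with the visibility cap this shows neither edge exceeds level $k-2i+2$, but nothing in your argument prevents \emph{both} edges from sitting at low levels, in which case neither alternative of the lemma holds. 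Your proposed source of forcing---counting pairwise-adjacent edges at $r$ (and at $u_i$, which has degree only $2$)---cannot close this gap: the degree of $r$ is $|E(S(r))|+2n=k-t$, so the bundle of edges meeting at $r$ leaves exactly $t$ levels of slack. That slack is deliberate: it is precisely how the construction encodes the hitting-set budget $t$ (at most $t$ of the edges $\{r,u_i\}$ may drop below level $k-2n$), so your claim that the calibration ``leaves no room to avoid the claimed level assignment'' is false, and there is no reason why level $k$ must be occupied by an edge incident to $r$ at all, let alone by one of $\{r,u_1\},\{u_1,v_1\}$.

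The paper's forcing comes from the part of $\Gbi$ that your proposal mentions only in passing: the calibrated pendant stars $S(u'_i)$ and $S(v'_i)$. By Lemma~\ref{lem:basic} one may assume their edges occupy the levels $1,\ldots,k-2i$, which (together with the levels already visible from $r$ by induction) pins $\lev{}{\{r,u'_i\}}=\lev{}{\{v_i,v'_i\}}=k-2i+1$. Now the path $v'_i,v_i,u_i,r,u'_i$ has both end-edges at the same level, so by the edge-ranking property some intermediate edge of this path must lie strictly above level $k-2i+1$---and the only intermediate edges are exactly $\{u_i,v_i\}$ and $\{r,u_i\}$. Combined with the cap at $k-2i+2$ that you did identify, this forces one of the two edges to level exactly $k-2i+2$; visibility of level $k-2i+1$ from $r$ then comes for free from the edge $\{r,u'_i\}$, so the ``delicate'' visibility bookkeeping you anticipate via the $z$-vertices is not where the difficulty lies. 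Without this star-plus-path argument your induction does not get off the ground at $i=1$, and the inductive step has the same hole.
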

\begin{proof}
 For $i=1$ we consider an elimination tree $\ET{H}$, where $H=T(a_1)$. 
 Considering the visibility of levels from $v_1$ in the subgraph induced by $V(S(v'_1))\cup\{v_1\}$, by Lemma \ref{lem:basic} we conclude that there is no loss of generality in assuming that the levels assigned to the edges of $S(v'_1)$ belong to $\{1,\ldots,k-2\}$ and $\lev{}{\{v_1,v'_1\}} = k-1$. 
 Similarly, $\lev{}{\{r,u'_1\}} = k-1$. Thus, %by \ref{prop:p2}
 either $\lev{}{\{r,u_1\}}>k-1$ or $\lev{}{\{u_1,v_1\}}>k-1$. Consequently, $h(\ET{H})=k$ and both $k-1$ and $k$ are visible from $r$ in $H$.
 Now, suppose our lemma holds for $\ET{H}$, where $H$ is a subgraph of $T$ induced by the vertices of $T(a_1),\ldots,T(a_{i-1})$. In order to extend $\ET{H}$ to $\ET{H'}$, where $H'$ is a subgraph induced by the vertices of $T(a_1),\ldots,T(a_i)$ we use an analogous argument as for $i=1$. 
 Thus, $\lev{}{\{v_i,v'_i\}} = \lev{}{\{r,u'_i\}} = k-2i+1$ and hence at least one of the edges $\{u_i,v_i\}$, $\{r,u_i\}$ must be assigned to a level $p > k-2i+1$. Therefore since by assumption all levels $p\in\{k-2i+3,\ldots,k\}$ are visible in $H$ from $r$, either $\{u_i,v_i\}$ or $\{r,u_i\}$ must be assigned to the level $k-2i+2$. 
 Thus $h(\ET{H'})=k$, and all levels $p\in\{k-2i+1,\ldots,k\}$ are visible in $H'$ from $r$.
\end{proof}

\begin{corollary}\label{cor:lem6}
 If $\ET{T}$ is an elimination tree of height $k$, then each level in $\{k-2n+1,\ldots,k\}$ is visible in $T$ from $r$ and if $\lev{}{\{u_i,v_i\}}=k-2i+2$, then $\lev{}{\{r,u_i\}}\le k-2n$, $i\in\{1,
 \ldots,n\}$.
\end{corollary}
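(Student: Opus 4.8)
The plan is to read both statements off Lemma~\ref{lem:lem6}, exploiting that $\{r,u_i\}$ is a cut edge of the tree $T$: deleting it splits $T$ into the component $C_i$ that contains $u_i$ (and hence $v_i$, $v'_i$ and the entire star-shaped part of $T(a_i)$ hanging at $v_i$) and the component $R_i$ that contains $r$ together with all $T(a_j)$, $j\neq i$. For the first statement I would instantiate Lemma~\ref{lem:lem6} at $i=n$, which says that every level $p\in\{k-2n+1,\ldots,k\}$ is visible from $r$ in the subgraph $H$ induced by the vertices of $T(a_1),\ldots,T(a_n)$. As $H$ is a subgraph of $T$ carrying the same level function, a path realizing visibility in $H$ is also a path in $T$; hence each level in $\{k-2n+1,\ldots,k\}$ is visible in $T$ from $r$.

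For the second statement, fix $i$ and assume $\lev{}{\{u_i,v_i\}}=k-2i+2$. It suffices to prove that every $p\in\{k-2n+1,\ldots,k\}$ is \emph{inadmissible} for $\{r,u_i\}$, i.e.\ $p\in\Fvis{T-\{r,u_i\}}{r}$ or $p\in\Fvis{T-\{r,u_i\}}{u_i}$, since this forces $\lev{}{\{r,u_i\}}\le k-2n$. For each such $p$ I would name a witnessing edge at level $p$ that is \emph{distinct} from $\{r,u_i\}$ and whose visibility survives the deletion. For the odd levels $p=k-2j+1$ the edge $\{r,u'_j\}$ (placed at level $k-2j+1$ by the proof of Lemma~\ref{lem:lem6} via Lemma~\ref{lem:basic}) works: it is incident to $r$, it differs from $\{r,u_i\}$ because $u'_j\neq u_i$, and it lies in $R_i$, so $p$ stays visible from $r$ in $T-\{r,u_i\}$. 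For the even levels $p=k-2j+2$ with $j\neq i$ the edge that Lemma~\ref{lem:lem6} assigns to this level (either $\{r,u_j\}$ or $\{u_j,v_j\}$) sits inside $T(a_j)\subseteq R_i$, and the witnessing path from $r$ stays inside $T(a_j)$, so again $p$ is visible from $r$ after the deletion.

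The one level that forces us to use the hypothesis is $p=k-2i+2$. Its natural witness is an edge of $T(a_i)$, and Lemma~\ref{lem:lem6} leaves two possibilities, $\{r,u_i\}$ or $\{u_i,v_i\}$; the first is useless, being exactly the deleted edge, and this is precisely why the \emph{unconditional} bound $\lev{}{\{r,u_i\}}\le k-2n$ fails. The assumption $\lev{}{\{u_i,v_i\}}=k-2i+2$ removes the bad option: now $\{u_i,v_i\}$ carries level $k-2i+2$, it lies in $C_i$ and is incident to $u_i$, so the single edge $\{u_i,v_i\}$ shows $k-2i+2\in\Fvis{T-\{r,u_i\}}{u_i}$. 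With all $2n$ levels of $\{k-2n+1,\ldots,k\}$ thus inadmissible, the bound follows. I expect the real work to be the cut-edge bookkeeping rather than a new idea: one must check for every level that the chosen witness edge is not $\{r,u_i\}$ itself and that its witnessing path lies on the correct side of the cut, so that visibility is genuinely retained from $r$ or from $u_i$ in $T-\{r,u_i\}$.
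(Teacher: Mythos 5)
Your derivation is correct and follows exactly the route the paper intends: the corollary is stated with no separate proof as an immediate consequence of Lemma~\ref{lem:lem6}, and your two steps---monotonicity of visibility under passing from the induced subgraph on $T(a_1),\ldots,T(a_n)$ to $T$, and inadmissibility of every level in $\{k-2n+1,\ldots,k\}$ for $\{r,u_i\}$ via witness edges lying on the correct side of the cut $T-\{r,u_i\}$---are precisely the bookkeeping that derivation amounts to. The only caveat is one you inherit from the paper itself: the witnesses $\{r,u'_j\}$ at the odd levels $k-2j+1$ come from the ``no loss of generality'' placements (via Lemma~\ref{lem:basic}) inside the \emph{proof} of Lemma~\ref{lem:lem6} rather than from its statement, so, like the paper, you are implicitly working with elimination trees normalized in that sense.
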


In fact, the above lemma tells a little bit more. 
Namely, if $T[v]$ denotes the subtree of $T$ rooted at $v$ and induced by $v$ and all vertices having $v$ as an ancestor, then whenever the edge $\{r,u_i\}$ (the edge $\{u_i,v_i\}$) gets assigned to the level $k-2i+2$, then the levels $k-2i+2,\ldots,k$ become visible from $u_i$ even in the subgraph induced by $V(T)\setminus(V(T[u_i])\setminus\{u_i\})$ (from $v_i$ in the subgraph induced by $V(T)\setminus(V(T[v_i])\setminus\{v_i\})$) and hence become forbidden for the edges of $T[u_i]$ (the edges of $T[v_i]$).
Accordingly, in our next lemma we assume that $\{1,\ldots,k-2i+1\}$ is a set of levels admissible for $T[u_i]$ and $T[v_i]$.
It is also worth mentioning that if $\lev{}{\{r,u_i\}}=k-2i+2$, then due to the visibility of particular levels, the edge $\{u_i,v_i\}$ must be assigned to a level $p$ satisfying $p<\M$, which is crucial in our reduction. In fact, pushing $\{u_i,v_i\}$ to a low level triggers the above-mentioned 'switching' by making certain levels in the subgraph $X$ visible from the vertex $v_i$.   
In particular, for each $j\in\{1,\ldots,m\}$ the level $\varphi(j)+1$ becomes visible in $X$ from the vertex $w_1\in V(G_{\varphi(j)})$ which by Lemma~\ref{lem:lem3a} results in impossibility of 'lifting' an elimination tree $\ET{\PM}$, where $\PM$ is attached at $G_{\varphi(j)}$.

\begin{figure}[htb!]
    \centering
    \includegraphics[scale=0.72]{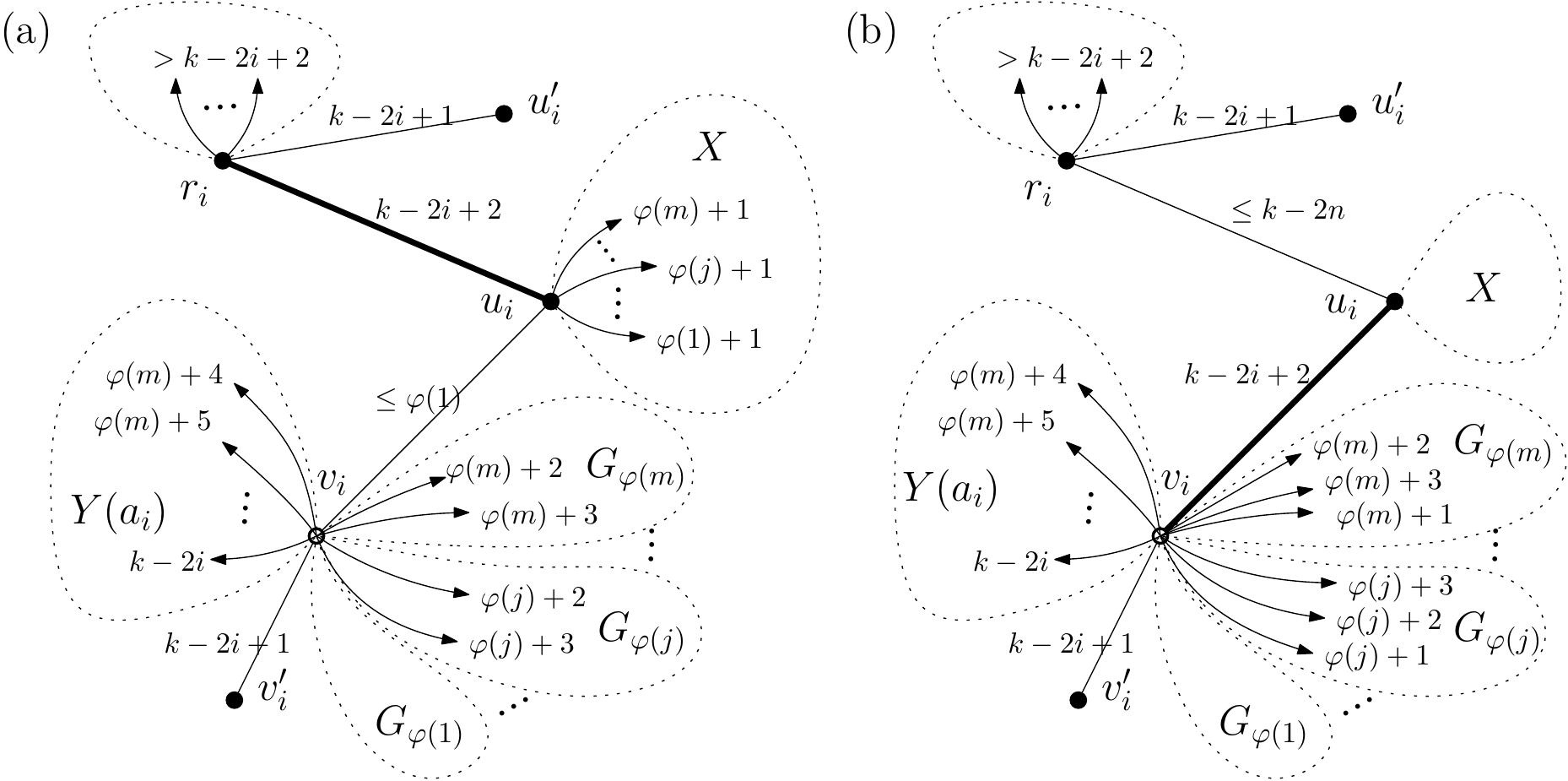}
    \caption{An illustration for the proof of Lemma~\ref{lem:lem9}. The two major cases of visibilities. The arrows point to the levels visible from $v_i$.}
    \label{fig:Gbis_color}
\end{figure}

\begin{lemma}\label{lem:lem9}
  Let $\ET{T}$ be an elimination tree of height $k$. For every $i\in\{1,\ldots,n\}$:
\begin{enumerate} [label={\normalfont(\alph*)}]
      \item\label{it:9a} if $\lev{}{\{r,u_i\}}=k-2i+2$, then \emph{in every} $\ET{T[u_i]}$ of height $k-2i+1$ for each $T_j$ with $j\in\{1,\ldots,m\}$ it holds $\lev{}{e}\in\{1,\ldots,\varphi(j)\}$, where $e$ is an edge of $S(w_2)$ in $T_j$.
      \item\label{it:9b} if $\lev{}{\{u_i,v_i\}}=k-2i+2$, then there \emph{exists} an elimination tree $\ET{T[v_i]}$ of height $k-2i+1$ such that in each $T_j$ with $j\in\{1,\ldots,m\}$  it holds $\lev{}{e}\in\{2,\ldots,\varphi(j)+1\}$, where $e$ is an edge of $S(w_2)$ in $T_j$. Moreover  $\lev{}{e}\le\varphi(j)+1$ in every $\ET{T[v_i]}$.
  \end{enumerate}
\end{lemma}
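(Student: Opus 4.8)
The plan is to reduce both parts to the gadget Lemmas~\ref{lem:lem3a} and~\ref{lem:lem3b}, applied separately to each $T_j=G_{\varphi(j)}$ (so that $\ka=\varphi(j)$ and the connector $w_1$ is identified with $v_i$). In this reduction everything turns on which of the three levels $\varphi(j)+1,\varphi(j)+2,\varphi(j)+3$ are visible from $v_i$ in the graph $F'$ obtained by removing the interior of $G_{\varphi(j)}$ from the subtree under consideration. The decisive structural fact is that the subgraph $X$, whose edges realise the levels $\varphi(1)+1,\ldots,\varphi(m)+1$, hangs below $u_i$ and therefore lies inside $T[u_i]$ but outside $T[v_i]$; this single asymmetry, together with the level of $\{u_i,v_i\}$, is what will separate~\ref{it:9a} from~\ref{it:9b}.

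For~\ref{it:9a} I would work inside $T[u_i]$, whose height is $k-2i+1$, and first show $\lev{}{\{u_i,v_i\}}\le\M+1$. The edge $\{u_i,v_i\}$ is incident to $v_i$, and by Corollary~\ref{cor:lem6} together with the forced placement of the $z$-path, of $\{v_i,v'_i\}$, and of the tops of the gadgets, every level of the interval from $\M+2$ up to $k-2i+1$ is visible from $v_i$; admissibility then leaves $\lev{}{\{u_i,v_i\}}$ no room above $\M+1$ (the sharper bound $<\M$ stated in the text costs slightly more and is not needed here). Since $\{u_i,v_i\}$ thus lies strictly below $\M+2\le\varphi(j)+1$, the path $v_i\to u_i$ prolonged by the edge of $X$ at level $\varphi(j)+1$ witnesses $\varphi(j)+1\in\Fvis{F'}{v_i}$, whereas neither $X$, nor any other gadget, nor the $z$-path exposes $\varphi(j)+2$ or $\varphi(j)+3$ to $v_i$. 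Invoking the second part of Lemma~\ref{lem:lem3a}, with $\ka+1$ visible while $\ka+2,\ka+3$ are not, then yields $\lev{}{e}\in\{1,\ldots,\varphi(j)\}$ for every edge $e$ of $S(w_2)$ in $T_j$, in every such elimination tree.

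For the existence claim in~\ref{it:9b} I would instead work inside $T[v_i]$, from which $X$, the vertex $u_i$, and the edge $\{u_i,v_i\}$ are all absent. Fixing $j$, the gadgets $G_{\varphi(j')}$ with $j'<j$ expose to $v_i$ only levels $\le\varphi(j)$, those with $j'>j$ only levels $\ge\varphi(j)+4$, and the $z$-path together with $\{v_i,v'_i\}$ only levels $\ge\varphi(m)+4$; hence none of $\varphi(j)+1,\varphi(j)+2,\varphi(j)+3$ is visible from $v_i$ in $F'$. Lemma~\ref{lem:lem3b} then supplies, gadget by gadget, an elimination tree of $G_{\varphi(j)}$ of height $\varphi(j)+3$ with $S(w_2)$ placed in $\{2,\ldots,\varphi(j)+1\}$, and splicing these trees together, the remaining height up to $k-2i+1$ being supplied by the $z$-path and $\{v_i,v'_i\}$, produces the required $\ET{T[v_i]}$.

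The heaviest work, and the step I expect to be the main obstacle, is the \emph{moreover} bound of~\ref{it:9b}, which must hold for every $\ET{T[v_i]}$ of height $k-2i+1$ rather than for one convenient tree. The idea is that the levels $\ge\varphi(j)+4$ remain visible from $v_i$ through the higher gadgets and the $z$-path, so the two edges of $G_{\varphi(j)}$ incident to $w_1=v_i$ cannot climb to $\varphi(j)+4$; this caps the gadget at height $\varphi(j)+3$, after which a counting argument parallel to the second part of Lemma~\ref{lem:lem3a}, namely that the $\varphi(j)+2$ pairwise-adjacent edges at $w_2$ must take distinct admissible levels, forces $\lev{}{e}\le\varphi(j)+1$. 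The difficulty is that this high-level visibility from $v_i$ cannot be read off a single construction but has to be secured uniformly for all gadgets at once; I would establish it by a downward induction on $j$ that fixes the placement of $G_{\varphi(j+1)},\ldots,G_{\varphi(m)}$, and hence the levels they expose to $v_i$, before constraining $G_{\varphi(j)}$. This same induction is what legitimises the forced placement of the gadget tops invoked in~\ref{it:9a}.
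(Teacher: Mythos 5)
Your overall plan coincides with the paper's: both parts are reduced, gadget by gadget, to Lemmas~\ref{lem:lem3a} and~\ref{lem:lem3b} through the question of which of the levels $\varphi(j)+1,\varphi(j)+2,\varphi(j)+3$ lie in $\Fvis{F'}{v_i}$; the subgraph $X$, reachable only through $\{u_i,v_i\}$, is what separates \ref{it:9a} from \ref{it:9b}; and a downward induction on $j$ does the real work. Your treatment of \ref{it:9b} (existence via Lemma~\ref{lem:lem3b} spliced over the gadgets, the ``moreover'' via capping each gadget at height $\varphi(j)+3$ and repeating the counting at $w_2$) is in fact more explicit than the paper's own two-sentence argument for that part.

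The gap is in the logical organization of \ref{it:9a}. You first derive $\lev{}{\{u_i,v_i\}}\le\M+1$ from the ``forced placement of \dots\ the tops of the gadgets'', and only then conclude that $X$ makes $\varphi(j)+1$ visible from $v_i$ so that Lemma~\ref{lem:lem3a} applies. But the forced placement is itself conditional on a bound on $\lev{}{\{u_i,v_i\}}$, in two ways: (i) pinning $G_{\varphi(j)}$ via the second part of Lemma~\ref{lem:lem3a} requires $\varphi(j)+1\in\Fvis{F'}{v_i}$, and inside $T[u_i]$ the only edge that can realize this level lies in $X$, which is seen from $v_i$ only through $\{u_i,v_i\}$, i.e., only once $\lev{}{\{u_i,v_i\}}\le\varphi(j)$ is already known; and (ii) one must also exclude that $\{u_i,v_i\}$ itself sits at $\varphi(j)+2$ or $\varphi(j)+3$ --- being incident to $v_i$ it would make that level visible and, by the first part of Lemma~\ref{lem:lem3a}, push $G_{\varphi(j)}$ above height $\varphi(j)+3$, which is what creates the contradiction. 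Consequently the gadget-only induction you sketch at the end (``fix $G_{\varphi(j+1)},\ldots,G_{\varphi(m)}$, then constrain $G_{\varphi(j)}$'') cannot by itself legitimise \ref{it:9a}: its invariant must simultaneously carry the progressively improving edge bound. This is precisely how the paper argues: a single downward induction whose hypothesis at step $j$ is that $\lev{}{\{u_i,v_i\}}\le\varphi(l)$ \emph{and} every level $p\ge\varphi(l)+1$ is visible from $v_i$ outside $G_{\varphi(1)},\ldots,G_{\varphi(l)}$, for all $l>j$; within one step, Lemma~\ref{lem:lem3a} (first part) excludes $\varphi(j)+2$ and $\varphi(j)+3$ for $\{u_i,v_i\}$, the edge of $X$ excludes $\varphi(j)+1$, and only then Lemma~\ref{lem:lem3a} (second part) places $S(w_2)$ in $\{1,\ldots,\varphi(j)\}$ and restores the visibility invariant. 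Your bound $\le\M+1$ (which equals $\varphi(1)$) is the final output of that induction, not an admissible starting point for it.
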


\begin{proof}
First we prove \ref{it:9a}. 
As we have already mentioned, the levels higher than $k-2i+2$ are not admissible for $T[u_i]$.
%Following property \ref{prop:p1}, 
Hence, if $\lev{}{\{r,u_i\}}=k-2i+2$, then $\lev{}{\{u_i,v_i\}}\notin \{k-2i+1,k-2i+2\}$, since following the argument in the proof of Lemma \ref{lem:lem6} both levels are already occupied by $\{v_i,v'_i\}$ and $\{r,u_i\}$, the edges adjacent to $\{u_i,v_i\}$. 
Similarly, considering the visibility of levels from $v_i$ in the subgraph induced by $V(S(z_{\varphi(m)+4}))\cup\cdots\cup V(S(z_{k-2i}))\cup\{v_i\}$, from Lemma \ref{lem:basic} it follows that we can assume $\lev{}{\{v_i,z_p\}}=p$, where $p\in\{\varphi(m)+4,\ldots,k-2i\}$ and that none of these levels is admissible for $\{u_i,v_i\}$.
Summarizing, we get 
\begin{equation}\label{eq:phi3}
      \lev{}{\{u_i,v_i\}}\leq \varphi(m)+3.
\end{equation}

Now, in order to prove $\lev{}{\{u_i,v_i\}}\le\varphi(1)$, we use induction on $j$ going downwards from $m$ to $1$. The argument for the base case, i.e., $j=m$ and for the inductive step is the same. So let us take an arbitrary $j\in\{m-1,\ldots,1\}$ and assume that the inductive hypothesis holds for $m,\ldots,j+1$. 
More precisely, for each parameter $l\in\{j+1,\ldots,m\}$ it holds $\lev{}{\{u_i,v_i\}}\le\varphi(l)$ and all levels in $\{\varphi(l)+1,\varphi(l)+2,\varphi(l)+3\}$ are visible from $v_i$ in the subgraph induced by $V(T)\setminus(V(G_{\varphi(1)})\cup\cdots\cup V(G_{\varphi(l)}))$ (in fact, an argument similar that we used for \eqref{eq:phi3} shows that each level $p\ge\varphi(l)+1$ is visible).

For an inductive step suppose that, on the contrary, $\lev{}{\{u_i,v_i\}}>\varphi(j)$. Naturally,  $\lev{}{\{u_i,v_i\}}\neq\varphi(j)+1$, since there is an edge at level $\varphi(j)+1$ visible from $u_i$ in $X$. Hence, it remains to consider the assignment of $\{u_i,v_i\}$ either to $\varphi(j)+2$ or $\varphi(j)+3$. 
However, in both cases by Lemma \ref{lem:lem3a} we get $h(\ET{G_{\varphi(j)}})>\varphi(j)+3$ which means that there is at least one edge of $G_{\varphi(j)}$ at level higher than $\varphi(j)+3$ and visible in $G_{\varphi(j)}$ from $v_i$. 
A contradiction, since by the inductive assumption no level in $\{\varphi(j+1),\ldots,k\}$ is admissible for the edges of $G_{\varphi(j)}$. 
Thus $\lev{}{\{u_i,v_i\}}\le\varphi(j)$. 
An important consequence of $\lev{}{\{u_i,v_i\}}\le\varphi(j)$ is that $\varphi(j)+1$ is visible from $v_i$ in $X$. Moreover, $h(G_{\varphi(j)})=\varphi(j)+3$ (by assumption and the lower bound in Lemma \ref{lem:lem3b}). 
Thus we are allowed to use Lemma \ref{lem:lem3a} from which it follows that the levels admissible for the edges of $S(w_2)$ in $G_{\varphi(j)}$ belong to $\{1,\ldots,\varphi(j)\}$ and that both $\varphi(j)+2$ and $\varphi(j)+3$ are visible in $G_{\varphi(j)}$ from $v_i$. 
Hence, each level $p\in\{\varphi(j)+1,\ldots,k\}$ is visible from $v_i$ in the subgraph induced by $V(T)\setminus(V(G_{\varphi(1)})\cup\cdots\cup V(G_{\varphi(j)}))$.

Next, we prove \ref{it:9b}. 
Similarly as for part \ref{it:9a} the analysis of visibilities from the vertex $v_i$ conducted for particular subgraphs of $T$ and Lemma \ref{lem:lem6} let us conclude that $\{\varphi(m)+4,\ldots,k\}$ is a set of levels visible from $v_i$ in the subgraph induced by $V(T)\setminus(V(G_{\varphi(1)})\cup\cdots\cup V(G_{\varphi(m)}))$.
The proof is completed by Lemma \ref{lem:lem3a} applied to each $G_{\varphi(j)}$ with $j\in\{1,\ldots,m\}$. 
\end{proof}

\begin{lemma}\label{lem:lem10}
 For every tree $\Tbard$ with $d\ge\M$, there exists an elimination tree $\ET{\Tbard}$ of height $k$ such that for each edge $e$ of $\PM$ attached at $\Tbard$ it holds $\lev{}{e}\le{d}$. Moreover, if $\ET{\Tbard}$ is of minimum height $k$, then $\lev{}{e}\ge d-\M+1$.
\end{lemma}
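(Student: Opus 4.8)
The plan is to prove the two assertions by different means: the upper bound $\lev{}{e}\le d$ is purely existential and I would establish it by exhibiting one convenient height-$k$ tree, whereas the lower bound $\lev{}{e}\ge d-\M+1$ must hold for \emph{every} minimum-height tree and so requires a structural (visibility) argument. The common starting point is the observation that the attached $2^{\M}$-vertex path $\PM$ has edge-ranking exactly $\M$: any elimination tree of $\PM$ alone has height at least $\M$, equality is attainable, and in an optimal ranking the edges occupy exactly $\M$ distinct levels, the number of edges roughly doubling from the top level downwards. The assumption $d\ge\M$ is used only to guarantee that a block of $\M$ levels whose top is $d$ still fits inside $\{1,\ldots,d\}$.

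For the existence part I would build $\ET{\Tbard}$ explicitly. First I embed the optimal edge-ranking of $\PM$ into the block of levels $\{d-\M+1,\ldots,d\}$, which is legitimate since $d\ge\M$ gives $d-\M+1\ge 1$; then the top edge of $\PM$ (the root of the path's elimination subtree) sits at level $d$ and every other edge of $\PM$ strictly below it, so all of them satisfy $\lev{}{e}\le d$ as required. Next I place the remaining ("core") edges of $\Tbard$ on the complementary levels, using the high levels $\{d+1,\ldots,k\}$ for the part of $\Tbard$ that drives the height up to $k$ and the low levels $\{1,\ldots,d-\M\}$ for the rest. The only feasibility check is at the vertex $y$ where $\PM$ is attached: I must verify that the level assigned to the edge joining $\PM$ to the core is admissible. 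Here Lemma~\ref{lem:basic} lets me normalise the star edges at (and around) $y$ to the lowest available levels, after which the triangular placement of $\PM$ is compatible, and $h(\ET{\Tbard})=k$ follows by a direct count.

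For the lower bound I would show that in \emph{every} elimination tree of minimum height $k$ all of the levels $\{1,\ldots,d-\M\}$ are visible in $\Tbard$ from $y$, and are therefore forbidden for the edges of $\PM$. The idea is that minimality of the height forces the core of $\Tbard$ (Figure~\ref{fig:gadgets}(c)) to pack $d-\M$ pairwise-adjacent edges reachable from $y$ into the $d-\M$ lowest levels; by Lemma~\ref{lem:basic} these may be taken to be exactly $\{1,\ldots,d-\M\}$, and since they lie on low-level paths emanating from $y$ they are visible from $y$. Consequently the definition of admissibility forbids any edge of $\PM$ from receiving a level below $d-\M+1$, giving $\lev{}{e}\ge d-\M+1$; combined with the $\M$-level span of $\PM$ this pins the path to the block $\{d-\M+1,\ldots,d\}$ in the tree built above. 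Note the asymmetry that is consistent with the statement: because the height-$k$ part of the core need not be visible from $y$, an individual minimum-height tree may place some edges of $\PM$ above $d$, which is exactly why $\lev{}{e}\le d$ is only claimed existentially.

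The hard part will be this last step. Unlike the existence argument, I cannot choose the placement, so I must derive the visibility of the \emph{entire} initial segment $\{1,\ldots,d-\M\}$ from $y$ using only minimality of $h(\ET{\Tbard})=k$ together with the structure of $\Tbard$. The delicate case to exclude is an interleaved ranking in which $\PM$ borrows a very low level (for instance level $1$) while the core is rearranged so that the overall height is still $k$; ruling this out needs a counting-and-visibility argument on the core stars in the same spirit as the proofs of Lemmas~\ref{lem:lem3a} and~\ref{lem:lem3b}, showing that any such rearrangement would force $\ka+1$ pairwise-adjacent core edges onto at most $\ka$ admissible levels and hence push the height above $k$.
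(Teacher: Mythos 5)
Your upper-bound part is fine and close in spirit to the paper's: the paper uses Lemma~\ref{lem:basic} to pin the edges $\{\rbar,x_p\}$ to the levels $p\in\{d+1,\ldots,k\}$, so that these levels become visible from $\rbar$ and no edge of $\PM$ can exceed level $d$. The genuine gap is in your lower bound. Your central mechanism --- ``all of the levels $\{1,\ldots,d-\M\}$ are visible in $\Tbard$ from $y$, and are therefore forbidden for the edges of $\PM$'' --- is not a valid inference under the paper's definitions. A level $p$ visible from the attachment vertex $y$ is inadmissible for an edge $e$ only if that visibility reaches an \emph{end-vertex of} $e$, i.e., only if $p\in\Fvis{\Tbard-e}{u}$ for an end-vertex $u$ of $e$, which requires a connecting path all of whose edges have level at most $p$. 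As soon as one edge of $\PM$ close to $y$ receives a high level --- exactly what your own upper-bound construction does, putting $r(\PM)$ at level $d$ --- every deeper edge of $\PM$ is screened from the core, and visibility from $y$ imposes no constraint on it at all. Concretely: place the path edge incident to $y$ at level $d$ and rank the remaining $2^{\M}-2$ path edges as a full binary tree on levels $1,\ldots,\M$; this satisfies every visibility constraint you have established, has height $k$, and violates $\lev{}{e}\ge d-\M+1$ whenever $d>\M$. So the ``interleaved ranking'' you defer to the last paragraph is not a delicate corner case to be cleaned up afterwards --- it defeats the mechanism your whole lower bound rests on. Moreover, the deferred fix is only gestured at, and it is phrased with the parameters of the gadget $\Gk$ (``$\ka+1$ pairwise-adjacent core edges on at most $\ka$ admissible levels''), which describe Lemmas~\ref{lem:lem3a} and~\ref{lem:lem3b}, not the core of $\Tbard$, which consists of $\rbar$ together with the pendant structures at $x_{d+1},\ldots,x_k$.

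For comparison, the paper's proof of the ``moreover'' part makes no visibility claim from the attachment point at all: having confined all edges of $\PM$ to levels at most $d$ (via the normalization above), it invokes the packing fact that the minimum-height elimination tree of the line graph of a $2^{\M}$-vertex path is the $\M$-level full binary tree, so that $\ET{\PM}$ must span at least $\M$ levels and, in a minimum-height (height-$k$) tree of $\Tbard$, ends up occupying exactly the block $\{d-\M+1,\ldots,d\}$. Whatever one thinks of the brevity of that argument, it is a counting/packing statement about $\ET{\PM}$ itself, not a visibility-propagation argument; your version cannot be repaired without replacing its key step by something of this kind.
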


\begin{proof}
   As we already know from Lemma \ref{lem:basic} there is no loss of generality in assuming that for every $p\in\{d+1,\ldots,k\}$ we have $\lev{}{\{\rbar,x_p\}}=p$ and hence the levels in $\{d+1,\ldots,k\}$ are visible from $\rbar$ in $\Tbar_d$. Consequently, no edge of $\PM$ can be assigned to a level higher than $d$. It remains to note that the minimum height elimination tree of the line graph of $2^{\M}$-vertex path is the $\M$-level full binary tree (see, e.g., \cite{KatchMcCauSeg95}). 
\end{proof}

% --------------------------------------------------------- bounded case
%
Before the proof of Theorem \ref{thm:NPC_forests} we need to consider several properties of the forest $F$ when elimination tress are assumed to be $\B$-bounded.
Let $r(\PM)$ denote the \emph{root edge} of $\PM$, i.e. the edge assigned to the highest level of $\ET{\PM}$ and let 
$$
R=\{r(\PM)\,|\,\PM\in\mcP(\A_j) \mbox{ for } j\in\{1,\ldots,m\}\}.
$$
We will also need a partition $(R_m,R_a)$ of $R$ with $R_m$ and $R_a$ being the sets of the root edges of the paths attached at the main and additional components, respectively.

Concerning the role of Lemmas~\ref{lem:lem9} and \ref{lem:lem10} we note that they provide the highest possible levels at which the root edges of paths $\PM$ in $\F$ can be placed in $\ET{F}$.
However, they may be placed potentially at much lower levels.
We rule, to some extent, this possibility in the next lemma.

\begin{lemma}\label{lem:lem11}
  Let $F$ be the forest corresponding to an instance of the $\MHS$ problem and let $\ET{F}$ be $\B$-bounded.
  For each $i\in\{1,\ldots,n\}$ and $j\in\{1,\ldots,m\}$ the following properties hold.
   \begin{enumerate} [label={\normalfont(\alph*)},leftmargin=*,itemsep=0pt]
    \item\label{it:11a}
       If $\PM$ is attached at $T_j$ of $T(a_i)$, then
       $\lev{}{r(\PM)}\in\{\varphi(j)-1,\varphi(j)\}$.
    \item\label{it:11b}
       If $\PM$ is attached at $\Tbard$ with $d=\varphi(j)-1$,
       then $r(\PM)\in\Lev{d}{\ET{\F}}$.
    \item\label{it:11c}
      $\vert \Lev{\varphi(j)-1}{\ET{\F}}\cap R \vert\le \eta$,
      where $\eta=2n$ if $j=1$, and $\eta=n$
      for each $j\in\{2,\ldots,m\}$.
\end{enumerate}
\end{lemma}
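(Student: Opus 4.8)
The plan is to prove the three parts together, exploiting the rigid shape every path-tree is forced into. Since the tree-depth of $L(\PM)$ is $\M$ and a minimum-height $\ET{\PM}$ is the $\M$-level full binary tree (as in Lemma~\ref{lem:lem10}), a path whose root edge $r(\PM)$ lies on level $L$ occupies exactly the levels $L-\M+1,\ldots,L$, and when $\ET{\PM}$ has minimum height these levels carry $2^{\M-1},2^{\M-2},\ldots,1$ edges from the bottom up; in particular its bottom level receives $2^{\M-1}$ edges. A root can descend no lower than level $\M$, and one sitting on level $\M$ necessarily belongs to a minimum-height tree, so this geometric profile is available precisely where the level capacities bite. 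Weighed against $\B=n2^{\M}+\mbar$ and the reserve $2^{\M}>4\mbar^{2}$ (which keeps every $\mbar$-sized correction below one unit after division by $2^{\M-1}$), this profile is the engine of the whole argument.

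First I would dispatch the \emph{upper} halves of \ref{it:11a} and \ref{it:11b}. For $\PM$ attached at $T_{j}=G_{\varphi(j)}$ in $T(a_{i})$, Lemmas~\ref{lem:lem3a}, \ref{lem:lem3b} and~\ref{lem:lem9} force every edge of $S(w_{2})$ in $T_{j}$ onto a level at most $\varphi(j)$ and make the levels above $\varphi(j)$ visible from the leaf where $\PM$ is attached; hence no edge of $\PM$, and so $r(\PM)$ in particular, may exceed $\varphi(j)$. For $\PM$ attached at $\Tbard$ with $d=\varphi(j)-1$, the first statement of Lemma~\ref{lem:lem10} gives $\lev{}{r(\PM)}\le d$ directly. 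The matching lower bounds are free when $j=1$: since any $\ET{\PM}$ has height at least $\M$, a root cannot sit below level $\M=\varphi(1)-1$, so in this case \ref{it:11a} and \ref{it:11b} already hold.

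The substance of the proof is a bottom-up sweep through the levels delivering \ref{it:11c} and, for $j\ge 2$, the missing lower bounds. Writing $N_{p}$ for the number of root edges on level $p$, the capacity of level $p$ reads $\sum_{s=0}^{\M-1}N_{p+s}2^{s}+(\text{non-path edges on }p)\le\B$, because a root on level $p+s$ drops $2^{s}$ edges onto level $p$. At the global bottom $p=1$ only roots on level $\M$ can contribute (nothing fits below $\M$), so $N_{\M}2^{\M-1}\le\B=n2^{\M}+\mbar$, and the reserve forces $N_{\M}\le 2n$; this is \ref{it:11c} for $j=1$. Reading the same inequality the other way, were any $\Tbard$-root with $d=\M$ or any main root at $T_{1}$ driven below its ceiling, level $1$ would overflow, which pins these roots and completes \ref{it:11a}, \ref{it:11b} for $j=1$.

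Continuing upward, when the sweep reaches the bottom level $\varphi(j)-\M$ of a root sitting on $\varphi(j)-1$, the positions of all roots below $\varphi(j)-1$ are already known, and I would subtract the capacity their bottoms consume on level $\varphi(j)-\M$ before bounding $N_{\varphi(j)-1}$; the resulting budget also forbids pushing the $\Tbard$-root with $d=\varphi(j)-1$ or a main root at $T_{j}$ any lower, yielding \ref{it:11a}, \ref{it:11b} and \ref{it:11c}. The hard part is precisely this accounting for $j\ge 2$: the crude estimate $N_{\varphi(j)-1}2^{\M-1}\le\B$ only yields $2n$, so to reach the sharp $\eta=n$ I must show that the roots the construction forces onto the levels strictly between $\varphi(j)-\M$ and $\varphi(j)-1$—guaranteed present by the surplus $\Tbar$-components introduced at the smaller indices—already swallow a full $\approx n2^{\M-1}$ of the capacity on level $\varphi(j)-\M$, with all gadget edges and the lower-order geometric terms absorbed into the sub-unit slack that $2^{\M}>4\mbar^{2}$ provides. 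Getting these multiplicities to close exactly is the delicate point of the lemma.
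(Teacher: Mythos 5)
Your handling of $j=1$ is sound: a root edge at level $\M$ forces $\ET{\PM}$ to occupy exactly the levels $1,\ldots,\M$, and a binary tree with $2^{\M}-1$ vertices on $\M$ levels must be the full binary tree, so each such root really does place $2^{\M-1}$ vertices on level $1$, and $N_{\M}2^{\M-1}\le\B=n2^{\M}+\mbar$ together with $2^{\M-1}>\mbar$ gives $N_{\M}\le 2n$. The upper bounds in (a) and (b) via Lemmas~\ref{lem:lem9} and~\ref{lem:lem10} also agree with the paper.

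The gap sits exactly where you flag the "delicate point", and it is not a bookkeeping issue: your per-level capacity inequality $\sum_{s=0}^{\M-1}N_{p+s}2^{s}+(\text{non-path edges on }p)\le\B$ is not implied by $\B$-boundedness once roots sit above level $\M$. A path whose root edge lies at a level $L>\M$ is not forced into a minimum-height elimination tree: $\ET{\PM}$ may span many more than $\M$ levels (gaps are even allowed), so its intersection with the \emph{single} level $p=L-s$ can be far smaller than $2^{s}$ --- possibly empty. Consequently, many roots could sit at level $\varphi(j)-1$ while their trees are stretched so as to put almost nothing on level $\varphi(j)-\M$, and your sweep never produces an overflow; for the same reason the surplus $\Tbar$-paths need not "swallow $\approx n2^{\M-1}$" of that level's capacity. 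This dilution is precisely why the paper does not argue level by level: it counts the \emph{cumulative} occupancy of the whole bottom segment $1,\ldots,s$ (with $s+\M-1$ the level under scrutiny), where the full binary tree provably minimizes the count (above level $p$ a binary tree holds at most $2^{L-p}-1$ vertices, so at least $2^{\M}-2^{L-p}$ vertices must fall on levels $\le p$), compares an arbitrary placement of roots with an "even placement", bounds the drift from evenness ($\Tbar$-roots can only move down by Lemma~\ref{lem:lem10}, and per $j$ at most $n$ main-component paths can be lifted by one level by Lemma~\ref{lem:lem9}, the scrutinized top level being chosen to carry no lifted roots), and finally applies an averaging argument: the average level size over $1,\ldots,s$ exceeds $\B$ (using $s\le\mbar$ and $2^{\M}>4\mbar^{2}$), so some level is over-full. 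Part (b) and the lower half of (a) then follow by running this cumulative argument with $s+\M-1=\varphi(j)+1$, and (c) with $s+\M-1=\varphi(j)-1$ together with the known count $3(j-1)n+n$ of roots on the lower levels. To repair your proof for $j\ge 2$ you must replace the level-by-level budget by such a cumulative-plus-averaging count; as written, the central inequality of your sweep is false.
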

\begin{proof} 
We prove by a contradiction that \ref{it:11c} holds.
We first consider $j=1$.
So suppose that $\vert \Lev{\M}{\ET{\F}}\cap R \vert \geq 2n+1$.
For every $\PM$ whose root edge is assigned to level $\M$, it holds that $\ET{\PM}$ has height $\M$ and thus it is a full binary tree having in particular $2^{\M-1}$ vertices at level $1$ in $\ET{F}$.
Thus we count the total number of such vertices at level $1$ resulting from $2n+1$ paths $\PM$. By~\eqref{eq:M} and~\eqref{eq:b}, we get $(2n+1)2^{\M-1}=n2^{\M}+2^{\M-1}=\B-\mbar+2^{\M-1} >\B$, a contradiction (there is more than $\B$ vertices at level $1$ in $\ET{F}$).

Consider a level $s\geq 1$. 
We will argue that at levels in $Q=\{1,\ldots,s+\M-1\}$ there may be at most $n+ns$ vertices from $R$ in total.
For proving this we use the following observation.  Take an arbitrary $\PM$.
Since any $\ET{\PM}$ is a binary tree, the number of vertices of $\ET{\PM}$ within levels $1,\ldots,p$, for any $p\in\{1,\ldots,h(\ET{\PM})\}$, is minimized when $\ET{\PM}$ is a full binary tree.
So suppose that each $\ET{\PM}$ in $\ET{F}$ is a full binary tree (we will drop this assumption in a moment) and, on the contrary, assume that the levels in $Q$ contain at least $n+ns+1$ vertices from $R$.
If every level in $Q\setminus\{1,\ldots,\M\}$ contains $n$ roots from $R$ and the level $\M$ contains $2n$ roots from $R$, then such a placement of the roots is called \emph{even}.
Considering an even placement and the one extra $\PM$ (recall our assumption for a contradiction) having its root in $Q$, we obtain that the overall number of vertices in $\ET{F}$ in the levels $1,\ldots,s$ is at least $sn2^{\M}+2^{\M-1}$, where $2^{\M-1}$ comes from the extra path $\PM$.

The placement of the vertices in $R$ does not have to be even. However, we argue that it cannot 'drift' too much from an even placement.
In the even placement: if $\PM$ is attached to $\Tbard$, then without loss of generality $\lev{}{r(\PM)}=d$.
However, by Lemma~\ref{lem:lem10}, we have in this case $\lev{}{r(\PM)}\leq d$ in an arbitrary placement of the roots.
Hence, when going from even to an arbitrary placement, the level of $r(\PM)$ may only decrease.
In the even placement: if $\PM$ is attached at $T_j$ of $T(a_i)$, then $\lev{}{r(\PM)}\in\{\varphi(j)-1,\varphi(j)\}$.
By Lemma~\ref{lem:lem9}, for such a $\PM$ the upper bound holds for an arbitrary placement but in the worst case we may have $\lev{}{r(\PM)}=\varphi(j)$ for all such paths.
We do consider the situation when all such $\PM$'s have $\lev{}{r(\PM)}=\varphi(j)$ (we call them \emph{$1$-lifted}) because we aim at providing a lower bound on the number of vertices in the levels $1,\ldots,s$ in $\ET{F}$.
Thus for each $j\in\{1,\ldots,m\}$, in worst case, at most $n$ paths $\PM$ with $\lev{}{r(\PM)}=\varphi(j)-1$ in the even placement have their roots at the level $\varphi(j)$ in an arbitrary assignment, i.e., there are at most $n$ $1$-lifted paths.
Taking into account that relaxing the property that each $\PM$ corresponding to a full binary tree in $\ET{F}$ increases the number of edges in the levels $1,\ldots,s$ we obtain a lower bound on their number in this range:
\[\sum_{s'=1}^{s}\vert\Lev{s'}{\ET{F}}\vert \geq sn2^{\M}-\sum_{p\geq 0}2^{\M-3-p}+2^{\M-1} \geq sn2^{\M}-2^{\M-2}+2^{\M-1}=sn2^{\M}+2^{\M-2},\]
where we additionally used a property that the highest level $s+\M-1$ does not contain any $1$-lifted paths $\PM$. 
(The reason we make this restriction is to ensure that there are no $n$ $1$-lifted paths at the level $s+\M+2$ (they would be moved from the level $s+\M+1$ with respect to an even assignment) in order to account for the sum $\sum_{p\geq 0}2^{\M-3-p}$.)
Therefore, the average number of vertices within levels $1,\ldots,s$ is at least
\[n2^{\M}+\frac{1}{s} 2^{\M-2} =\B-\mbar+\frac{1}{s}2^{\M-2} > \B-\mbar+\frac{1}{\mbar}2^{\M-2} \geq \B,\]
where we have used $s\leq\mbar$.
Also, the latter inequality is due to $2^{\M}\geq 4\mbar^2$ that follows from \eqref{eq:M}.
Thus, we have proved that the average number of edges in a level of the selected range exceeds $\B$ and by the property of the average, some level does contain at least this average number of vertices---a contradiction.

Recall that the above reasoning is restricted to the levels $s+\M-1$ that do not have $1$-lifted paths $\PM$, i.e., it applies to each level $\varphi(j)+1$ and $\varphi(j)-1$.
Taking $s+\M-1=\varphi(j)-1$ together with (due to Lemmas~\ref{lem:lem9} and~\ref{lem:lem10}) the fact that the overall number of roots at the levels $\varphi(j')-1,\varphi(j'),\varphi(j')+1$, for $j'<j$, is $3(j-1)n+n$, we conclude that $\vert \Lev{\varphi(j)-1}{\ET{\F}}\cap R \vert\le n$, which completes the proof of \ref{it:11c}.

By taking $s+\M-1=\varphi(j)+1$ and combining it with Lemma~\ref{lem:lem10}, we have that none $\ET{\PM}$ for the paths $\PM$ attached to $\Tbard$ with $d=\varphi(j)-1$ can be rooted at a lower level, i.e, it cannot be $\lev{}{r(\PM)}<\varphi(j)-1$.
This proves \ref{it:11b} for $d=\varphi(j)-1$.

We finally prove \ref{it:11a}.
The right hand side inequality in \ref{it:11a} follows from Lemma~\ref{lem:lem9}.
In fact, the lemma implies a stronger bound: Lemma~\ref{lem:lem9}\ref{it:9a} upper-bounds $\lev{}{r(\PM)}$ by $\varphi(j)-1$ and Lemma~\ref{lem:lem9}\ref{it:9b} upper-bound $\lev{}{r(\PM)}$ by $\varphi(j)$ in any elimination forest $\ET{F}$. 
The left hand side inequality in \ref{it:11a} follows from the same reasoning as in the preceding paragraph for $s+\M-1=\varphi(j)+1$ (dropping $r(\PM)$ of a $\PM$ attached at $T_{j+1}$ would give the necessary number of roots within levels $1,\ldots,\varphi(j)+1$ to have our contradiction).
\end{proof}

\begin{theorem} \label{thm:NPC_forests}
The Bounded Tree-Depth problem is $\NP$-complete for line graphs of forests.
\end{theorem}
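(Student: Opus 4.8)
The plan is to prove that $\BTD$ is $\NP$-complete for line graphs of forests by establishing the reduction from $\MHS$ whose construction occupies Section~\ref{sec:NPC_constr}. Membership in $\NP$ is immediate: an elimination forest $\ET{F}$ together with its level function $\ell$ is a polynomial-size certificate, and one can verify in polynomial time that $h(\ET{F})\le k$, that $w(\ET{F})\le\B$, and that $\ell$ satisfies the two level-function properties recalled in Section~\ref{sec:Preliminaries}. The bulk of the work is the correctness of the reduction, for which I would show that the $\MHS$ instance admits a hitting set $\A'$ with $|\A'|\le t$ if and only if $\btd{L(F)}{\B}\le k$, with $k$ and $\B$ fixed by~\eqref{eq:k} and~\eqref{eq:b}. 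The earlier remark that $\mbar$, $2^{\M}$ and $\B$ are polynomial in $m,n,t$ guarantees the reduction runs in polynomial time, so once the equivalence is in hand the theorem follows.

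First I would prove the forward direction (hitting set $\Rightarrow$ bounded-width elimination forest). Given $\A'$, I build $\ET{F}$ explicitly by placing the additional components via Lemma~\ref{lem:lem10} so that every $\PM$ attached at $\Tbard$ has its root edge exactly at level $d$, and placing the main component $T$ via the structural lemmas on the gadgets. For each index $i$ with $a_i\in\A'$ I set $\lev{}{\{u_i,v_i\}}=k-2i+2$, which by Corollary~\ref{cor:lem6} forces $\lev{}{\{r,u_i\}}\le k-2n$; for $a_i\notin\A'$ I instead set $\lev{}{\{r,u_i\}}=k-2i+2$. The point of having $\A'$ be a hitting set is that for every $\A_j$ some chosen $a_i\in\A_j$ triggers, via Lemma~\ref{lem:lem9}\ref{it:9b}, the one-level lift of the path $\PM$ attached at $G_{\varphi(j)}$ in $T(a_i)$, so that its root edge reaches level $\varphi(j)$ rather than $\varphi(j)-1$. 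I would then verify level by level that this assignment keeps every level within $\B$: the dominant count $n2^{\M}$ comes from the full-binary-tree images of the paths $\PM$, the additive $\mbar$ accounts for the remaining edges in $W$, and the $t$ bound on $|\A'|$ guarantees (through the $S(r)$ star of size $\M+3(m-1)+5$ pending at $r$) that at most $t$ of the subgraphs $T(a_i)$ place $\{r,u_i\}$ low. Height $k$ is immediate from Lemma~\ref{lem:lem6}.

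The reverse direction (bounded-width $\Rightarrow$ hitting set) is where I expect the main obstacle, and it is essentially packaged by Lemma~\ref{lem:lem11}. Assuming $\ET{F}$ is $\B$-bounded of height $k$, Lemma~\ref{lem:lem6} forces, for each $i$, exactly one of $\{r,u_i\}$ or $\{u_i,v_i\}$ to level $k-2i+2$, and the star $S(r)$ pending at $r$ caps at $t$ the number of indices $i$ for which $\lev{}{\{r,u_i\}}\le k-2n$, hence by Corollary~\ref{cor:lem6} at most $t$ indices have $\lev{}{\{u_i,v_i\}}=k-2i+2$. I take $\A'=\{a_i : \lev{}{\{u_i,v_i\}}=k-2i+2\}$, so $|\A'|\le t$ is secured. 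The crux is showing $\A'$ hits every $\A_j$: by Lemma~\ref{lem:lem11}\ref{it:11a} and \ref{it:11b} the root edges attached at the additional components $\Tbar_{\varphi(j)-1}$ all sit exactly at level $\varphi(j)-1$, and the capacity accounting in \ref{it:11c} shows at most $\eta$ root edges from $R$ can occupy level $\varphi(j)-1$. Since the even placement already saturates this budget, at least one path $\PM\in\mcP(\A_j)$ attached at the main component must have been lifted to level $\varphi(j)$; by Lemma~\ref{lem:lem9}\ref{it:9a} such a lift is impossible when $\lev{}{\{r,u_i\}}=k-2i+2$, so the corresponding $a_i$ must have $\lev{}{\{u_i,v_i\}}=k-2i+2$, i.e.\ $a_i\in\A'\cap\A_j$. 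The delicate part throughout is the tight capacity bookkeeping, and in particular verifying that the lower-order term $2^{\M-2}/s$ genuinely exceeds the slack in $\B$; this is exactly where the choice~\eqref{eq:M} of $\M$, equivalently $2^{\M}>4\mbar^2$, is used, and I would lean on Lemma~\ref{lem:lem11} to carry it rather than re-deriving the averaging argument.
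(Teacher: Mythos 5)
Your proposal is correct and takes essentially the same route as the paper's own proof: you define $\A'$ by $\lev{}{\{u_i,v_i\}}=k-2i+2$, bound $|\A'|\le t$ via the pairwise-adjacent edges at $r$ together with Corollary~\ref{cor:lem6}, derive the hitting property from the capacity count of Lemma~\ref{lem:lem11} combined with Lemma~\ref{lem:lem9}\ref{it:9a} and Lemma~\ref{lem:lem6}, and build the converse assignment using Lemmas~\ref{lem:lem9}\ref{it:9b}, \ref{lem:lem10} and \ref{lem:lem11}, exactly as the paper does. Your only additions are the explicit remarks on $\NP$ membership and polynomial running time, which the paper leaves implicit.
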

\begin{proof}

  Let $\A$, $t$ and $\A_1,\ldots,\A_m$ form an instance of the $\MHS$ problem, and let $\F$ and $\B$ be the forest and positive integer obtained for that instance according to the Construction described in Section \ref{sec:NPC_constr} (recall that $k=\M+3m+t+2n+1$).
  
  ($\Rightarrow$) We are going to argue that if there exists a $\B$-bounded  elimination forest $\ET{\F}$ such that $h(\ET{\F})\le k$, then there exists a solution $\A'$ to the $\MHS$ problem such that $|\A'|\leq t$.
 A solution to the $\MHS$ problem is defined as follows:
 \begin{equation}\label{eq:solution}
    a_i\in \A' \mbox{ if and only if } \lev{}{\{u_i,v_i\}}=k-2i+2,
 \end{equation}
 for each $i\in\{1,\ldots,n\}$.
  First we prove that $|\A'|\leq t$. 
  Let $H$ be a connected component of the graph obtained from $\F$ by the removal of all edges $e$ for which in $\ET{\F}$ it holds $\lev{}{e}>k-2n$ and such that the root $r$ of $\F$ belongs to $H$. Clearly, 
  \begin{equation}\label{eq:hH_bound}
      h(\ET{H})\leq k-2n.
  \end{equation}
  On the contrary, suppose that $|\A'|>t$. 
  By Corollary \ref{cor:lem6} we know that if $\lev{}{u_i,v_i}=k-2i+2$ (i.e., when by \eqref{eq:solution} $a_i\in \A'$), then $\{r,u_i\}\in E(H)$. 
  Moreover, by the Construction $|E(S(r))| = k-2n-t$ and hence $E(S(r))\subseteq E(H)$.
  Thus $\dg{H}{r} \ge |E(S(r))|+|\A'| = k-2n-t + |\A'| > k-2n$, which in turn implies $h(\ET{H}) > k-2n$, contrary to \eqref{eq:hH_bound}.
  
  At this moment, we remark that the edges of subgraph $H$ can be simply assigned to the levels in $\{1,\ldots,k-2n\}$. We  use this fact in the second part of the proof.
    
  Now, we prove a 'hitting property', i.e., $\A'\cap \A_j\neq\emptyset$ for each $\A_j$ with $j\in\{1,\ldots,m\}$.
  By the Construction and Lemma \ref{lem:lem11}\ref{it:11b} we know that $\eta-|\A_j|+1$ elements in $R_a$ belong to $\Lev{\varphi(j)-1}{\ET{\F}}$. If $|\A_j|$ elements in $R_m$ were additionally contained in $\Lev{\varphi(j)-1}{\ET{\F}}$, then $\vert \Lev{\varphi(j)-1}{\ET{\F}}\cap R \vert > \eta$, which contradicts Lemma \ref{lem:lem11}\ref{it:11c}. Therefore at least one element in $R_m$, say the one related to $\PM$ attached at $T_j$ of $T(a_{i^*})$, must be assigned to a level different from $\varphi(j)-1$ which by Lemma \ref{lem:lem11}\ref{it:11a} is exactly the level $\varphi(j)$.
  Consequently, following Lemma \ref{lem:lem9}\ref{it:9a} the edge $\{r,u_{i^*}\}$ cannot be assigned to the level $k-2{i^*}+2$ and hence by Lemma \ref{lem:lem6} we have $\lev{}{\{u_{i^*},v_{i^*}\}}=k-2i^*+2$ which by \eqref{eq:solution} results in $a_{i^*}\in \A'$.

  ($\Leftarrow$) Now, we show that if there exists a solution $\A'$ to the $\MHS$ problem such that $|\A'|\leq t$, then there exists a $\B$-bounded elimination forest $\ET{\F}$ such that $h(\ET{\F})\le k$.
  Given $\A'$, on the edge set of $\F$ we define a function $\ell$ with maximum at most $k$. We start with the edges $\{u_i,v_i\}$ and $\{r,u_i\}$ for all $i\in\{1,\ldots,n\}$. Namely, if $a_i\in \A'$, then we set $\lev{}{\{u_i,v_i\}}=k-2i+2$ and $\lev{}{\{r,u_i\}}=k-2i+2$ otherwise. 
  As we already know the choice of the edge for the above assignment is crucial for the visibility of levels from the vertex $v_i$ (recall the two major variants depicted in Figure \ref{fig:Gbis_color}). 
  Moreover, from Lemmas \ref{lem:lem9} and \ref{lem:lem11}\ref{it:11a} we know that this decides on the possibility of assigning appropriate elements in $R_m$ to the level $\varphi(j)$ instead of $\varphi(j)-1$ which results in 'lifting' appropriate elimination subtrees, thus preserving the bound $\B$ on the level size. 
    
  Let $\ell$ be defined as in Lemmas \ref{lem:lem3a}--\ref{lem:lem10}  and as we remark in the first part of this proof (for an illustration see Figures \ref{fig:Gadg_color}--\ref{fig:Gbis_color}).
  It remains to consider $\ell$ for paths $\PM$ attached at subgraphs $T_j$ of the main component $T$. 
  By assumption, for each $j\in\{1,\ldots,m\}$ there exists $a_{i^*}\in \A'\cap \A_j$ and hence we set $\lev{}{\{u_{i^*},v_{i^*}\}}=k-2{i^*}+2$. 
  Consequently, according to Lemma \ref{lem:lem9}\ref{it:9b} the levels in $\{2,\ldots,\varphi(j)+1\}$ are admissible for the edges of $S(w_2)$ in each $T_j$ of $T(a_{i^*})$. Let $\ell(e)=\varphi(j)+1$, where $e$ is an edge of $S(w_2)$ sharing an end-vertex with the $\PM$ attached at $T_j$ of $T(a_{i^*})$. This allows $r(\PM)$ to be assigned to the level $\varphi(j)$ (with the other edges of the $\PM$ assigned to levels $\varphi(j)-M+1,\ldots,\varphi(j)-1$). For the remaining $|\A_j|-1$ paths $\PM$, their root edges we assign to the level $\varphi(j)-1$. 
  Considering the elements in $R_a$, independently, for each $j\in\{1,\ldots,m\}$ we get $\eta-|\A_j|-1$, $n-1$ and $n$ root edges (of the paths $\PM$ attached at components $\Tbard$ with $d\in\{\varphi(j)-1,\varphi(j),\varphi(j)+1\}$) that accordingly to Lemma \ref{lem:lem11}\ref{it:11b} we have already assigned the levels $\varphi(j)-1,\varphi(j)$ and $\varphi(j)+1$, respectively. 
  Thus summing the elements in $R_m$ and $R_a$ for each level $p\in\{\varphi(j)-1,\varphi(j),\varphi(j)+1\}$ we obtain $|\Lev{p}{\ET{\F}}\cap{R}|\le\eta$.
  The proof is completed by showing that in each level, $\ET{\F}$ contains at most $\B$ elements. Namely each level $p$ contains $2^{s}$ vertices corresponding to the edges of a path $\PM$ rooted at a level $p+s$ for each $s\in\{0,\ldots,\M-1\}$. Therefore 
  \begin{equation*}
   |\Lev{p}{\ET{\F}}| \le \eta\sum_{s=0}^{M-1}2^s + |W \cap \Lev{p}{\ET{\F}}| 
                     \stackrel{\eqref{eq:mbar}} {\le} n2^{\M} + \mbar
                     \stackrel{\eqref{eq:b}} {=} \B.
  \end{equation*}
\end{proof}

\begin{proof}[Proof of Theorem~\ref{thm:NPC_trees1}] 
   We give a polynomial-time reduction from the $\BTD$ problem whose $\NP$-completeness for line graphs of forests we settled in Theorem \ref{thm:NPC_forests}.
   
   \begin{figure}[htb!]
    \centering
    \includegraphics[scale=0.72]{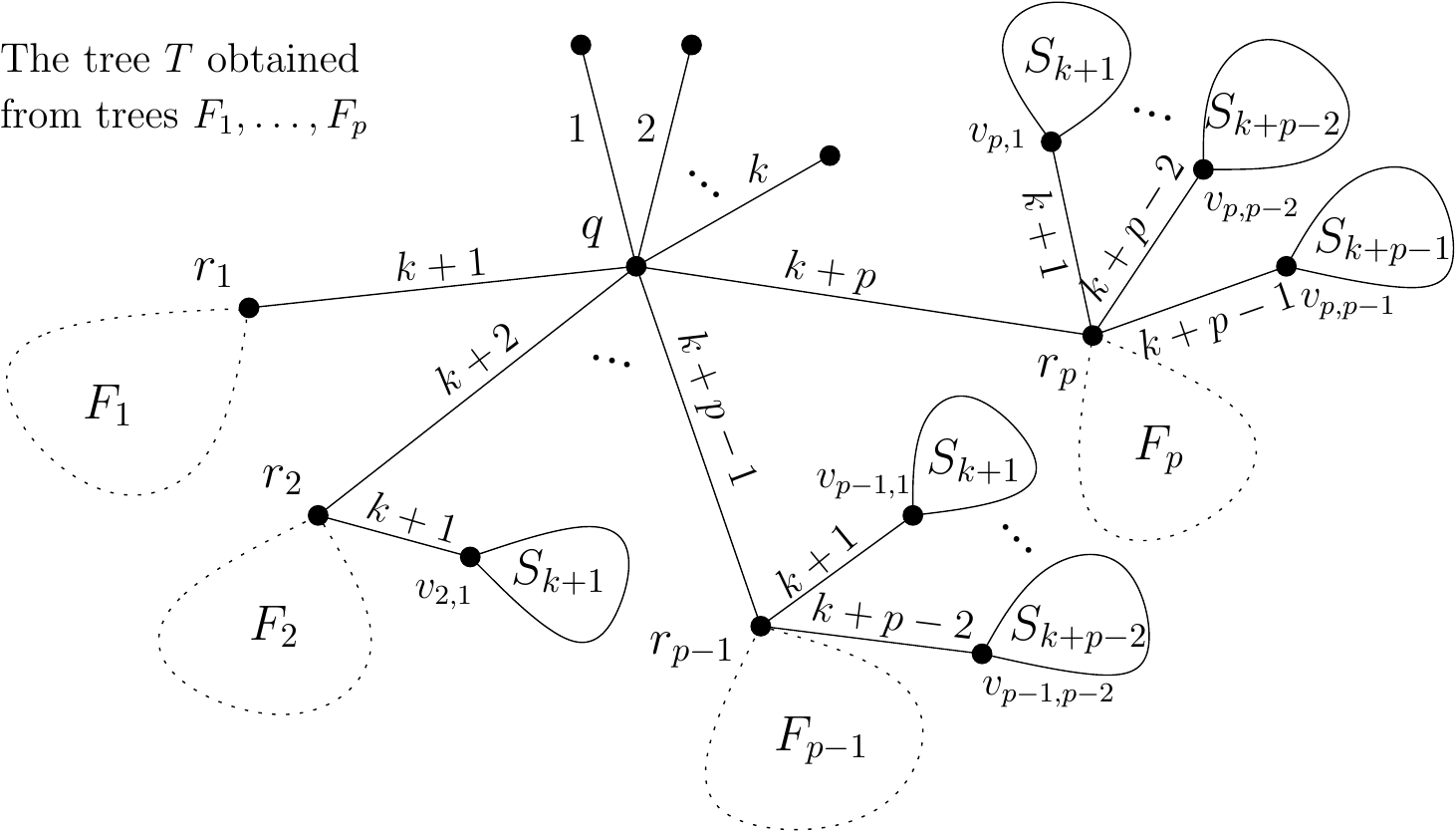}
    \caption{An illustration for the construction in the proof of Theorem~\ref{thm:NPC_trees1}.}
    \label{fig:T_NPC}
  \end{figure}
   
   Let forest $F$ and positive integers $k$, $\B$ form an instance of $\BTD$ obtained according to the construction presented in Section \ref{sec:NPC_constr}.
   Moreover, let $F_1,\ldots,F_p$ stand for the connected components of $F$ with the roots denoted by $r_1,\ldots,r_p$, respectively. 
   Given such an instance we construct a tree $T$ (see Figure \ref{fig:T_NPC}) and calculate the following related parameters:
   \begin{align*}
     \B_T &= \B + p(p-1)/2 + 1, \\ 
      k_T &= k + p.
   \end{align*}
   The tree $T$ is obtained by taking the components $F_1,\ldots,F_p$, joining their roots with a new vertex $q$ (the root of $T$) and adding $k$ new edges such that they become incident to $q$ (i.e., we get $S(q)$ being a star $S_{k+1}$), and then for each root $r_s$ with $s\in\{2,\ldots,p\}$ adding $s-1$ new stars $S_{k+2},\ldots,S_{k+s}$ and identifying a single non-central vertex of each star with the root $r_s$. The centers of the stars corresponding to the vertex $r_s$ are denoted by $v_{s,1},\ldots,v_{s,s-1}$, respectively.
   
   In order to argue that if there exists a $\B$-bounded elimination forest $\ET{F}$ of height $k$, then there exists a $\B_T$-bounded elimination tree $\ET{T}$ with $h(\ET{T})=k_T$, we define a level function $\ell$ on the vertex set of the tree $T$ such that the maximum of $\ell$ does not exceed $k_T$. 
   In fact, we also show that each edge $\{q,r_s\}$ with $s\in\{1,\ldots,p\}$ must be assigned to the level $k+s$.
   
   Consider the components $F_s$ with $s\in\{1,\ldots,p\}$.
   The level functions $\ell$ defined in the preceding lemmas can be directly used to obtain appropriate elimination trees $\ET{F_s}$ of height $k$ with the property that all levels in $\{1,\ldots,k\}$ are visible in $F_s$ from its root $r_s$. Consequently, taking into account $s-1$ stars with the centers $v_{s,1},\ldots,v_{s,s-1}$ adjacent to $r_s$, we use Lemma \ref{lem:basic} to infer that for each $s\in\{1,\ldots,p\}$ all levels in $\{1,\ldots,k+s-1\}$ are visible from $r_s$ in $T$. This in turn implies $\lev{}{\{q,r_s\}}\ge k+s$ for all $s\in\{1,\ldots,p\}$.
   Since by assumption $h(\ET{T})\le k+p$, we get $\lev{}{\{q,r_p\}}= k+p$. Now, simple induction going downwards on $s$ shows the desired equalities $\lev{}{\{q,r_s\}}= k+s$ for all $s$ in concern. The values of the level function for the remaining edges follow easily by Lemma \ref{lem:basic}. 
   
   Since by assumption $\ET{F}$ is $\B$-bounded, concerning the width of $\ET{T}$ it is enough to estimate the contribution of the edges of $T$ that do not belong to $E(F)$. Observe that the set of such edges can be partitioned into $p(p-1)/2+1$ subsets each of which constitutes the edge set of a star with the center either in $q$ or $v_{s,1},\ldots,v_{s,s-1}$, where $s\in\{2,\ldots,p\}$.
   Now, the claim follows by the fact that each edge of an arbitrary star contributes to a different level, and because each of the above-mentioned $p(p-1)/2+1$ stars must contribute to the first level.
   
   The proof that given a $\B_T$-bounded elimination tree $\ET{T}$ of height $k_T$ it is possible to find a $\B$-bounded elimination forest $\ET{F}$ of height at most $k$ is based on analogous properties.
\end{proof}

\section{The approximation algorithm} \label{sec:algorithm}

We start with a few additional definitions.
For an elimination tree $\ET{T}$ and its vertex $v$, we denote by $\ET{T}[v]$ the subtree of $\ET{T}$ induced by $v$ and all its descendants in $\ET{T}$.
By \emph{lowering} a vertex $v$ we mean a two-phase operation of moving all vertices in $\ET{T}[v]$ one level down in $\ET{T}$ (i.e., decreasing $\lev{}{u}$ by $1$ for each $u\in V(\ET{T}[v])$) and if for the resulting function there is a vertex $u\in V(\ET{T}[v])$ with $\lev{}{u}\leq 0$, then incrementing $\lev{}{w}$ for each $w\in V(\ET{T})$.
(The former `normalization' is to ensure that levels are positive integers.)
Note that this is always feasible, however it may produce an elimination tree whose height is larger than that of the initial tree.
We say that an elimination tree $\ET{T}$ is \emph{compact} if every subtree of $\ET{T}$ occupies a set of consecutive levels in $\ET{T}$.
This condition can be easily ensured in linear time. Namely, for each $\{v,u\} \in E(\ET{T})$ such that $\lev{}{v} < \lev{}{u}-1$, increment the level of $v$ and all its descendants.

Given an elimination tree $\ET{T}$, we distinguish a specific type of the root-leaf paths. Namely, by a \emph{trunk} we mean a path $(v_1,\ldots,v_k)$ from the root $v_k$ to an arbitrary leaf $v_1$ at level $1$ of $\ET{T}$ (note that only leaves at level $1$ are admissible).
We use trunks to define branches at the vertices $v_i$ with  $i\in\{2,\ldots,k\}$ of the trunk $(v_1,\ldots,v_k)$, where for particular vertex $v_i$ a \emph{branch} is understood as a subtree $\ET{T}[v]$ with $v$ being a child of $v_i$ such that $v\neq v_{i-1}$. Since elimination trees we consider are binary trees, a branch with respect to a given trunk is uniquely determined.
For any elimination tree $\ET{T}$, the $\B$ highest levels are called its \emph{prefix}.
We also say that a subtree of $\ET{T}$ is \emph{small} if it has at most $\B$ vertices; otherwise it is \emph{large}.
A subtree of $\ET{T}$ is \emph{thin} if in each level it has at most one vertex. A level of an elimination tree is \emph{full} if it has at least $\B$ vertices.

\subsection{Preprocessing steps} \label{sec:preprocessing}

In this section we introduce operations of stretching and sorting of elimination trees that constitute two major steps of the preprocessing phase of our algorithm.

We start with an operation of \emph{stretching} a subtree $\ET{T}[v]$.
If the subtree is thin or its height is at least $\B$, then stretching leaves the subtree unchanged.
Otherwise, as we will see later on, it is enough to consider the case in which a vertex $v$ at level $l$ has two children $u_1$ and $u_2$ at level $l-1$ such that both $\ET{T}[u_1]$ and $\ET{T}[u_2]$ are thin and compact.
Let $l_i$ denote the lowest level occupied by a vertex of $\ET{T}[u_i]$, $i\in\{1,2\}$, so the levels occupied by the vertices of $\ET{T}[u_i]$ are $l_i,\ldots,l-1$.
Then, stretching $\ET{T}[v]$ is realized by lowering $l-l_1$ times the vertex $u_2$ so that it is placed at the level $l_1-1$. Consequently, $\ET{T}[v]$ becomes thin and occupies at most $2l-l_1-l_2+1$ levels with the root $v$ at level $l$, the vertices of $\ET{T}[u_1]$ remaining at levels in $l_1,\ldots,l-1$ and the vertices of $\ET{T}[u_2]$ moved to the levels in $l_1-l+l_2,\ldots,l_1-1$.
(For simplicity of description, we refered here to the levels of all vertices according to the level function of the initial elimination tree, i.e., before it has been subject to the `normalization' applied after each lowering of a vertex.)
For an elimination tree $\ET{T}$, a \emph{stretched} elimination tree $\ET{T}'$ is obtained by stretching each subtree $\ET{T}[v]$ of $\ET{T}$, where the roots $v$ are selected in postorder fashion, i.e., for each  $v$ we stretch the subtrees rooted at the children of $v$ before stretching $\ET{T}[v]$.
Clearly, the transition from $\ET{T}$ to $\ET{T}'$ can be computed in linear time.
\begin{observation} \label{obs:weakly-compact} 
If $\ET{T}'$ is a stretched elimination tree obtained from a compact elimination tree $\ET{T}$, then $h(\ET{T}')\leq h(\ET{T})+2\B$ and $\ET{T}'$ is compact.
Moreover, if for a vertex $v$, $\ET{T}[v]$ is large, then $\ET{T}'[v]$ has height at least $\B$.
\end{observation}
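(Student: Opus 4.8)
The plan is to verify the three claims of Observation~\ref{obs:weakly-compact} separately, working from the definitions of stretching and lowering. Throughout I would use that the stretched tree $\ET{T}'$ is built by processing subtrees in postorder, so when I stretch $\ET{T}[v]$ the subtrees rooted at the children of $v$ are already in their final stretched form.

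First I would argue compactness of $\ET{T}'$. Compactness means every subtree occupies a set of consecutive levels. I would proceed by postorder induction: assume the already-processed children-subtrees $\ET{T}'[u_1]$ and $\ET{T}'[u_2]$ are thin and compact (the nontrivial case of the stretching definition). The stretching operation lowers $u_2$ by exactly $l-l_1$ steps so that the top of $\ET{T}'[u_2]$ sits at level $l_1-1$, directly beneath the bottom of $\ET{T}'[u_1]$; since both branches were compact and $v$ sits one level above $u_1$, the union $\{v\}\cup V(\ET{T}'[u_1])\cup V(\ET{T}'[u_2])$ now occupies the consecutive block of levels from $l_1-l+l_2$ up to $l$, and the subtree is thin. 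When $\ET{T}[v]$ is already thin or has height at least $\B$, stretching does nothing, and I would note such subtrees were compact in $\ET{T}$ and remain so. The delicate point is that lowering $u_2$ cannot violate the elimination-tree level constraint (ancestors strictly above descendants) nor disturb any previously finalized subtree: since $\ET{T}'[v]$ is processed after its children and before its parent, and lowering only moves vertices of $\ET{T}[u_2]$ downward into levels vacated below $u_1$, no collision with $v$'s siblings or ancestors occurs.

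Next I would bound the height increase $h(\ET{T}')\le h(\ET{T})+2\B$. The only way stretching pushes vertices below level $1$ is through the normalization inside \emph{lowering}, which increments every level in $\ET{T}$ whenever a vertex would land at level $\le 0$; each such global increment raises the overall height by one. So I would track the total downward displacement. A single stretch of $\ET{T}[v]$ lowers $u_2$ by $l-l_1\le h(\ET{T}[v])$ steps, but the new bottom level $l_1-l+l_2$ of the stretched subtree has depth $l-(l_1-l+l_2)+1=2l-l_1-l_2+1$ below $v$; since we only stretch subtrees of height less than $\B$ (thin or tall subtrees are left alone), both $l-l_1$ and $l-l_2$ are at most $\B-1$, so the stretched subtree spans at most $2\B-1$ levels. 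The key observation is that stretching is applied in postorder and a subtree is only stretched if its height is below $\B$, so once a subtree reaches height $\ge\B$ none of its ancestors are ever stretched; hence the total height inflation from below level~$1$ accumulates over a single root-to-leaf chain of short stretches, bounded by the difference between the spanned height $2\B-1$ and the original height, i.e.\ by at most $2\B$. I would make this precise by showing that for any vertex $v$, the lowest level reached by $\ET{T}'[v]$ lies at most $2\B$ below the lowest level reached by $\ET{T}[v]$, by induction on the postorder.

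For the final claim I would argue the contrapositive flavour directly: if $\ET{T}[v]$ is large (more than $\B$ vertices) then $\ET{T}'[v]$ has height at least $\B$. The stretching process is designed so that after processing, every short subtree becomes thin, and a thin subtree with more than $\B$ vertices must span more than $\B$ levels, i.e.\ have height at least $\B$. So the two cases are: either $\ET{T}[v]$ already had height at least $\B$ (in which case stretching left it essentially alone and the bound is immediate), or its height was below $\B$, in which case the postorder stretching makes $\ET{T}'[v]$ thin; a thin subtree that is large must have height exceeding $\B$, since a thin tree has exactly one vertex per occupied level. I expect the main obstacle to be the height bound: one must carefully argue that short stretches do not compound across the tree beyond the claimed $2\B$, which hinges on the fact that a subtree is stretched only while its height is below $\B$ and that the additive spread of a single stretch is at most $2\B-1$. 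The compactness and largeness claims are comparatively routine once the stretching geometry is set up.
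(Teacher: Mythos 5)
Your proposal is correct and follows essentially the same route as the paper: the core of both arguments is that a stretch is only performed on a subtree of height less than $\B$ whose children's subtrees are already thin (hence of height at most $\B-2$), so the stretched subtree is thin and spans at most $2\B-1$ levels below a root whose own level is untouched at that point. The paper's published proof consists of exactly this computation and leaves the compactness, the non-compounding of stretches across postorder, and the ``large implies height at least $\B$'' claim implicit, whereas you spell these out (via the induction on lowest levels and the thin-subtree counting), which is a faithful elaboration rather than a different approach.
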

\begin{proof}
Consider an arbitrary vertex $\ET{T}[v]$ for which stretching performs a modification.
Thus, the subtrees rooted at the children of $v$ are thin and both are of height at most $\B-2$ because the height of $\ET{T}[v]$ is at most $\B-1$.
Then, stretching $\ET{T}[v]$ gives a new subtree rooted at $v$ whose height equals the number of vertices of $\ET{T}[v]$, which is at most $1+2(\B-2)=2\B-1$.
\end{proof}

Suppose that $\ET{T}$ is an elimination tree with a vertex $z_1$ and its child $z_2$.
A \emph{switch} of $z_1$ and $z_2$ is the operation performed on $\ET{T}$ as shown in Figure~\ref{fig:switch} (intuitively, the switch presented in the figure results in exchanging the roles of the subtrees $\ET{T}[z_3]$ and $\ET{T}[z_4]$ while the aim of switching is to obtain a type of ordering of the subtrees).
We note that this concept has been used on a wider class of non-binary elimination trees under the names of tree rotations or reorderings; we refer the reader e.g. to~\cite{Liu88}.
At this point we define switching with no connection to whether $z_1$ and $z_2$ belong to a given trunk or not. Later on, we will use this operation with respect to the location of particular trunks.
It is not hard to see that if the above switch operation is performed on an elimination tree for graph $G$, then the resulting tree is also an elimination tree for that graph.
   \begin{figure*}[htb!]
    \begin{center}
    \includegraphics[scale=0.8]{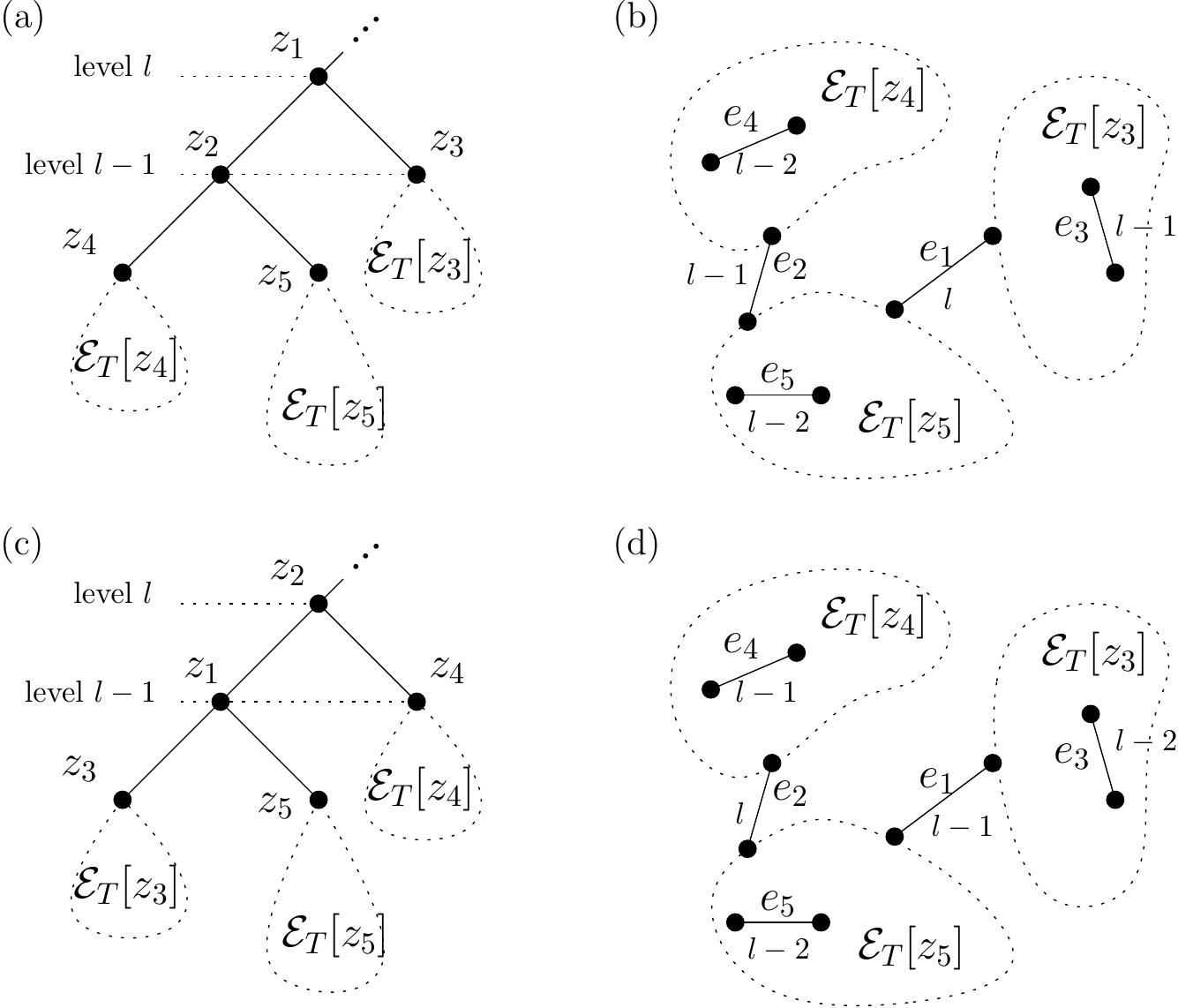}
    \caption{A switch of $z_1$ and $z_2$: (a) $\ET{T}$; (b) the levels shown on the tree $T$, where each vertex $z_i$ in $\ET{T}$ corresponds to the edge $e_i$ in $T$; (c) $\ET{T}$ after the switch; (d) the corresponding levels in $T$.}
    \label{fig:switch}
   \end{center}
   \end{figure*}

We say that $\ET{T}$ is \emph{sorted along} a trunk $(v_1,\ldots,v_k)$ (recall that $v_k$ is the root) if the height of the branch at $v_i$ is not larger than the height of the branch at $v_{i+1}$, where $i\in\{2,\ldots,k-1\}$.
Now, we say that $\ET{T}$ is \emph{sorted} if for each vertex $v$ of $\ET{T}$, the subtree $\ET{T}[v]$ is sorted along each of its trunks.
It is important to note that the notion of a trunk  pertains to any elimination tree and hence can be equally applied not only to $\ET{T}$ but also to any of its subtrees.

\begin{lemma} \label{lem:sorted-exists}
For each tree $T$, there exists a sorted $\ET{T}$ of minimum height and it can be computed in polynomial time.
\end{lemma}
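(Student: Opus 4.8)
The plan is to start from an arbitrary minimum-height elimination tree $\ET{T}$ and show that it can be transformed into a sorted one of the same height using only switch operations, which (as noted in the text) preserve the property of being an elimination tree for the same graph. First I would observe that a single switch of $z_1$ and $z_2$, as depicted in Figure~\ref{fig:switch}, does not increase the height of $\ET{T}$: the switch only exchanges the roles of the two branches $\ET{T}[z_3]$ and $\ET{T}[z_4]$ hanging off the trunk at a common vertex, and redistributes them along consecutive levels without pushing any vertex above the current root. Hence applying switches can only keep $h(\ET{T})$ fixed (or, if one is careful with the normalization, never raise it above the minimum), so minimality is maintained throughout.

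The core of the argument is to show that finitely many switches suffice to reach a fully sorted tree. The natural approach is a bubble-sort–style monovariant. For a fixed trunk $(v_1,\ldots,v_k)$, sortedness requires the heights of the branches at $v_2,\ldots,v_{k-1}$ to be nondecreasing as we move toward the root; a switch of two adjacent trunk vertices $v_i,v_{i+1}$ whose branches are out of order swaps the two branch heights, strictly decreasing the number of inversions along that trunk while leaving all other branches untouched. I would define a global potential, for instance the total number of inversions summed over all trunks of all subtrees $\ET{T}[v]$, and argue that each corrective switch strictly decreases this nonnegative integer potential. Since the potential is bounded, the process terminates, and at termination no trunk of any subtree admits a corrective swap, which is exactly the definition of $\ET{T}$ being sorted.

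The subtle point — and the step I expect to be the main obstacle — is the interaction between different trunks and different subtrees: a switch performed to fix the order along one trunk may, in principle, disturb the branch heights seen along another trunk, or along a trunk of an ancestor's subtree, so local progress need not be globally monotone under a naive potential. To handle this I would process the tree bottom-up, fixing subtrees $\ET{T}[v]$ in postorder so that once a subtree is sorted along all its trunks it is never touched again when we move to its parent; alternatively, I would argue that a switch at vertices $v_i,v_{i+1}$ only permutes branches lying strictly above level $\lev{}{v_i}$ within the single subtree rooted at the higher of the two, so the set of affected trunks is confined and the inversion counts of unaffected trunks are invariant. Establishing that the chosen monovariant is genuinely monotone across this structure is where the real care lies; once that is settled, termination gives existence, and the linear number of branch comparisons per subtree together with the polynomial bound on the total number of switches (at most the total number of inversions, which is polynomial in $|V(T)|$) yields the claimed polynomial running time.
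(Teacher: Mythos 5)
Your proposal follows essentially the same route as the paper: start from a minimum-height elimination tree (computable in linear time by the cited algorithms), repeatedly apply the switch to an adjacent out-of-order pair working from the bottom up, and bound the number of switches by a polynomial monovariant. The paper's only notable difference is its bookkeeping device---rather than a decreasing inversion count it uses an increasing ``level count'' (the sum, over all reversed pairs, of the parent's distance from the root, bounded by $O(n^2)$), chosen exactly because a switch can spawn a new reversed pair at an adjacent position; on the cross-trunk interaction issue you flag, the paper's proof is no more detailed than your sketch.
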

\begin{proof}
For an elimination tree of minimum height, consider the number of vertices $z_1$ such that the branch at $z_1$ has smaller height than the branch at its child $z_2$.
Call it a \emph{reversed pair}.
(We use the symbols from the definition of the switch; see also Figure~\ref{fig:switch}).
For each reversed pair consider the distance $i$ of the vertex $z_1$ from the root and consider a parameter that we call the \emph{level count} that is the sum of $i$'s for all the reversed pairs.

Take the lowest vertex $z_1$ that constitutes a reversed pair with $z_2$.
The switch shown in Figure~\ref{fig:switch} results in the situation that $z_1$ and $z_2$ are no longer a reversed pair.
Note that $z_1$ may form a new reversed pair with its child.
However, the level count increases by at least one.
The final level count for any elimination tree is $O(n^2)$.
Moreover, we start with any elimination tree of minimum height, which can be computed in linear time \cite{LamY01,MozesOW08}.
Thus, sorting takes polynomial time.
\end{proof}

\subsection{Formulation of the algorithm} \label{sec:algorithm-statement}

Informally, the algorithm proceeds as follows.
It starts with a minimum height elimination tree that is stretched and sorted.
In each iteration of the main loop of the algorithm the following takes place.
If the current elimination tree is $\B$-bounded, then the algorithm finishes by returning this tree.
Otherwise, the algorithm finds the highest level that has more than $\B$ vertices and takes the lowest subtree rooted at a vertex $v$ of this level.
Then, $v$ is lowered.
\begin{center}
\begin{minipage}{.95\linewidth}
\begin{algorithm}[H]
\SetAlgoRefName{BTD}
	\caption{bounded-width tree-depth approximation for $L(T)$}
	\label{alg:treedepth}
	Let $\ET{T}$ be a minimum height compact elimination tree for $L(T)$\;
	Obtain $\ET{T}^0$ by first making $\ET{T}$ stretched and then sorted\;
	$t\gets0$\;
	\While{$\ET{T}^t$ is not $\B$-bounded}
	{
		Find the highest level $l_t$ such that $\vert\Lev{l_t}{\ET{T}^t}\vert>\B$\;
		$v\gets\arg\min_u h(\ET{T}^t[u])$, where $u$ iterates over all vertices in $\Lev{l_t}{\ET{T}^t}$\;
		Obtain $\ET{T}^{t+1}$ by lowering $v$ in $\ET{T}^t$ and increment $t$\;
	}
	\Return $\ET{T}^t$
\end{algorithm}
\end{minipage}
\end{center}

\subsection{Analysis} \label{sec:analysis}

Since lowering any vertex results in an elimination tree, the final tree  $\ET{T}^{\tau}$ is an elimination tree.
In each iteration of the main loop the width of the currently highest level $l_t$ of size larger than $\B$ decreases.
Thus, the number of iterations can be trivially bounded by $O(n^2)$.
Hence, $\ET{T}^{\tau}$ is a $\B$-bounded elimination tree for $L(T)$.
It remains to upper-bound its height.
An outline of the argument is as follows.

First, we introduce a structural property of an elimination tree: if we iterate over the consecutive levels, starting at the level at distance $\B$ from the root and going `downwards', then we keep processing full levels for a number of steps and once we reach a level $l$ that is not full, all the remaining levels below $l$ are not full as well.
More formally, let $l_1<l_2<\cdots<l_d$ be the full levels of an elimination tree $\ET{T}$, except for the highest $\B-1$ levels.
We say that $\ET{T}$ is \emph{structured} if either all full levels belong to the prefix or $l_d=h(\ET{T})-\B+1$ and $l_i=l_{i-1}+1$ for each $i\in\{2,\ldots,d\}$.
In other words, in a structured elimination tree the full levels that are not in the prefix form a consecutive segment that reaches the lowest level of the prefix minus one.
In this context we prove that $\ET{T}^0$ obtained at the beginning of Algorithm~\ref{alg:treedepth} has the above property (cf. Lemma~\ref{lem:Tzero-structured}).

Next, it is important that we keep another invariant along the iterations of the main loop (i.e., when we transform $\ET{T}^{t-1}$ to $\ET{T}^{t}$).
Namely, if some level $h(\ET{T}^{t})-l$ \emph{becomes} full, then the level right above it (i.e., $h(\ET{T}^{t})-l+1$) should be also full at this point, and both levels should stay full in the future iterations: the levels $h(\ET{T}^{t'})-l$ and $h(\ET{T}^{t'})-l+1$ are full for each $t'\geq t$.
This is proved in Lemma~\ref{lem:correct-filling} and essentially means that starting with a structured $\ET{T}^0$, subsequent trees $\ET{T}^t$, $t>0$, remain structured.

Finally, in Lemma~\ref{lem:almost-all-full} we prove that in the final elimination tree $\ET{T}^{\tau}$, if its height increased with respect to $\ET{T}^0$, then at most $\B$ levels are not full.
Moreover, in the case when $h(\ET{T}^{\tau})=h(\ET{T}^0)$, due to Observation~\ref{obs:weakly-compact} the height is additively at most $2\B$ from $\btd{L(T)}{\B}$.
Altogether, the above properties ensure the required approximation bound.
\begin{lemma} \label{lem:Tzero-structured}
$\ET{T}^0$ is structured.
\end{lemma}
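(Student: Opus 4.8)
The plan is to show that $\ET{T}^0$, the tree obtained after stretching and sorting a minimum-height compact elimination tree, is structured in the sense defined just above: either all full levels lie in the $\B$-high prefix, or the full levels outside the prefix form a consecutive block $l_2 = l_1+1, \ldots, l_d$ reaching up to $l_d = h(\ET{T}^0) - \B + 1$. I would first invoke Observation~\ref{obs:weakly-compact} to record two facts I expect to use repeatedly: $\ET{T}^0$ is compact, and for every vertex $v$ with $\ET{T}[v]$ large (more than $\B$ vertices), the stretched subtree $\ET{T}^0[v]$ has height at least $\B$. Compactness means every subtree occupies a consecutive range of levels, so the number of vertices a subtree contributes to a given level is governed cleanly by where its root sits.

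The core of the argument I would make is contrapositive in spirit: I want to show that if some level $l$ below the prefix is \emph{not} full, then no level strictly below $l$ can be full either, and moreover that the block of full levels, if nonempty, abuts the prefix from below at $h(\ET{T}^0)-\B+1$. To do this I would fix a level $l \le h(\ET{T}^0)-\B$ and count $|\Lev{l}{\ET{T}^0}|$ by summing, over the subtrees whose root lies in the prefix, their contribution to level $l$. The key structural input is that $\ET{T}^0$ is sorted: along every trunk the branch heights are monotone, so the subtrees hanging off the trunk are ordered by height. I would argue that because of sorting and compactness, whenever a level $l$ receives contributions from some subtree, the level $l+1$ immediately above it receives at least as many contributions from the corresponding set of subtrees (each subtree covering level $l$ also covers level $l+1$, since it occupies a consecutive range ending no lower than the prefix cut when it is large, and shorter subtrees are `shielded' by the sorted order). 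This monotonicity of level sizes as we move up toward the prefix is exactly what forces the full levels to form an upward-closed consecutive segment ending at $h(\ET{T}^0)-\B+1$.

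Concretely, the main step is to establish that for each $l \le h(\ET{T}^0)-\B$ one has $|\Lev{l}{\ET{T}^0}| \le |\Lev{l+1}{\ET{T}^0}|$. Once this monotonicity below the prefix is in hand, the structured property is immediate: if the first non-full level encountered going downward from $h(\ET{T}^0)-\B+1$ is $l^\star$, then by monotonicity every level below $l^\star$ has size at most $|\Lev{l^\star}{\ET{T}^0}| < \B$, so all full non-prefix levels lie in $\{l^\star+1, \ldots, h(\ET{T}^0)-\B+1\}$ and are consecutive, giving $l_d = h(\ET{T}^0)-\B+1$ and $l_i = l_{i-1}+1$ exactly as required. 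The case where there are no full levels outside the prefix is then the alternative branch of the disjunction.

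I expect the main obstacle to be making the monotonicity claim $|\Lev{l}{\ET{T}^0}| \le |\Lev{l+1}{\ET{T}^0}|$ fully rigorous, because it is precisely here that stretching and sorting must both be used and their interaction is delicate. Stretching guarantees that small, short subtrees are made \emph{thin}, so they contribute at most one vertex per level and cannot create a `bulge' at a low level that the level above lacks; without thinness a balanced short subtree could put more vertices low than high and break monotonicity. Sorting guarantees that along each trunk the heights increase toward the root, so that any subtree reaching down to level $l$ is flanked, higher up, by at least as many subtrees of equal-or-greater height reaching level $l+1$. I would need to combine these carefully, most likely by a level-by-level accounting along each trunk and summing over the (binary, hence uniquely determined) branches, taking care that the stretched subtrees that are large have height at least $\B$ and thus span the entire region below the prefix, contributing one vertex to every such level and therefore contributing equally to $l$ and $l+1$. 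Handling the boundary interaction at the bottom of the prefix, and verifying that the normalization applied during stretching does not disturb the counts, is where I anticipate the bookkeeping to be most error-prone.
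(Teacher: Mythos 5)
Your reduction of the lemma to a monotonicity statement (``level sizes weakly decrease going down below the prefix, hence full levels form an upward-closed block touching the prefix'') is logically sound, but the monotonicity claim itself --- $|\Lev{l}{\ET{T}^0}| \le |\Lev{l+1}{\ET{T}^0}|$ for every $l\le h(\ET{T}^0)-\B$ --- is simply false, and since it is the heart of your proof, this is a fatal gap rather than delicate bookkeeping. The error is localized in your sentence claiming that large stretched subtrees ``span the entire region below the prefix, contributing one vertex to every such level.'' Stretching leaves every subtree that is thin \emph{or of height at least $\B$} unchanged; it does not make large subtrees thin. A large subtree keeps branching, so the number of vertices it contributes per level can grow as you descend. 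Concretely, let $T$ be a path on $2^{M}$ vertices with $M$ much larger than $\B$: the minimum-height elimination tree of $L(T)$ is the full binary tree of height $M$, and after stretching (which only reshapes the bottom portions of height less than $\B$) and sorting, the level sizes read from the root downward are $1,2,4,8,\ldots$ --- strictly \emph{increasing} through the region below the prefix. This tree is still structured (all those growing levels are full), but your argument can never certify it, because the inequality you want to prove about it is false.

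What is missing is a way to localize the monotonicity argument to the part of the tree where it actually holds, and this is precisely what the paper's proof supplies. The paper first shows that at the bottom level $l$ of the full block below the prefix, every subtree rooted there must be \emph{simple} (a trunk with thin branches): if some vertex $v$ at level $l$ had two children rooting large subtrees, then sortedness would force every ancestor of $v$ to have two large children as well, and since $v$ lies below the prefix it has at least $\B-1$ ancestors, giving at least $\B$ pairwise different subtrees meeting level $l-1$ --- making $l-1$ full and contradicting the choice of $l$. Only \emph{within} these sorted simple subtrees do level sizes weakly decrease going down (essentially your inequality, but restricted to levels at or below $l$), which then shows no level below $l$ is full. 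The paper additionally needs an auxiliary splitting construction (the set $D$) to separate the case in which the lowest prefix level is full from the case in which it is not. Your proposal has no counterpart to the simplicity argument or to this case analysis; without them, the branching (full) region of the tree defeats the global monotonicity on which your whole plan rests.
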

\begin{proof}
We may assume that $\ET{T}^0$ is large as otherwise the lemma follows.
Note that $\ET{T}^0$ is compact due to Observation~\ref{obs:weakly-compact} and the fact that $\ET{T}$ computed at the beginning of Algorithm~\ref{alg:treedepth} is compact.
We describe the following process of constructing a subset $D$ of vertices of $\ET{T}^0$. 
Let initially $D$ consist of the root only.
Suppose that there exists $v\in D$ such that $\ET{T}^0[v]$ is not thin and $v$ has two children $u$ and $u'$ such that both $\ET{T}^0[u]$ and $\ET{T}^0[u']$ have vertices at the lowest level of the prefix.
Then do the following: remove $v$ from $D$ and add $u$ and $u'$ to $D$; call it a \emph{split of} $v$.
Repeat this modification splitting the vertices in $D$ as long as $D$ contains a vertex $v$ that can be split. Note that no vertex $v$ at the lowest level of the prefix admits splitting, since $v$ has no child in the prefix and hence the process of subsequent splits will not continue for vertices below the level $h(\ET{T}^0) - b + 2$. 
Observe also that the split of $v$ implies that $u$ and $u'$ are one level lower than $v$ because $\ET{T}^0[v]$ is not thin.
Hence, by construction, $v$ has two children that have not been lowered during stretching.
For any two vertices in $D$, they are not related in $\ET{T}^0$, i.e., one is not the ancestor/descendant of the other.

As the first case consider $|D|\geq\B$.
This implies that the lowest level of the prefix is full because each vertex in $D$ provides a unique vertex that belongs to the lowest level $h(\ET{T}^0)-\B+1$ of the prefix.
Let $l\leq h(\ET{T}^0)-\B+1$ be the lowest level such that all levels between $l$ and $h(\ET{T}^0)-\B+1$ are full.
Thus, in order to prove that $\ET{T}^0$ is structured it is enough to show that no level below $l$ is full.
We will say that a subtree $\ET{T}^0[u]$, for any vertex $u$, is \emph{simple} if $\ET{T}^0[u]$ has a trunk such that each of its branches is thin.
(Informally, such a subtree is a leaf-root path with thin subtrees attached to the vertices of the path.)
Consider any $v\in\Lev{l}{\ET{T}^0}$ at level $l$.
We will argue by contradiction that $\ET{T}^0[v]$ is simple.
If $\ET{T}^0[v]$ is not simple, then both children of $v$ are roots of large subtrees.
But since $\ET{T}^0$ is sorted, each ancestor of $v$ has two children that are large.
Since $l\leq h(\ET{T}^0)-\B+1$, i.e., $v$ does not belong to the prefix, $v$ has at least $\B-1$ ancestors.
This however implies that, considering $v$ and its $\B-1$ immediate ancestors, at least $\B$ pairwise different subtrees intersect level $l-1$.
Thus, level $l-1$ is full --- a contradiction with the choice of $l$.
Then, using the fact that $\ET{T}^0[v]$ is simple, we have that for each $v\in \Lev{l}{\ET{T}^0}$ there exists its farthest ancestor $a(v)$ such that $\ET{T}^0[a(v)]$ is simple (note that $a(v)=v$ is allowed thus $a(v)$ exists).
Moreover, the vertices $a(v)$ are pairwise different by the definition of a simple tree.
For each $v\in \Lev{l}{\ET{T}^0}$, since $\ET{T}^0[a(v)]$ is sorted,
\begin{equation} \label{eq:levels-decrease}
|V(\ET{T}^0[a(v)])\cap\Lev{l'-1}{\ET{T}^0}|\leq|V(\ET{T}^0[a(v)])\cap\Lev{l'}{\ET{T}^0}|
\end{equation}
for each level $l'\leq l$.
(Intuitively and informally, as we go down along the tree, the sizes of the levels decrease.)
%This holds because $\ET{T}^0[a(v)]$ has a branch that does not intersect level $l$ for otherwise $\B-1$ immediate ancestors of $v$ have branches that intersect with level $l-1$ proving that level $l-1$ is full, which is not possible due to the choice of $l$.
Using this inequality for each individual simple tree $\ET{T}^0[a(v)]$ for $v\in \Lev{l}{\ET{T}^0}$ and taking a union of all of them, we have
$|\Lev{l-1}{\ET{T}^0}|\geq|\Lev{l-2}{\ET{T}^0}|\geq\cdots\geq |\Lev{l'}{\ET{T}^0}|$ for each $l'\leq l-2$.
Using $|\Lev{l-1}{\ET{T}^0}|<\B$ (level $l-1$ is not full because of the choice of $l$), we have that $|\Lev{l'}{\ET{T}^0}|<\B$ for each $l'<l$, which proves that $\ET{T}^0$ is structured in case when $|\Lev{l}{\ET{T}^0}|\geq\B$.

Suppose now that $|D|<\B$. First observe that no vertex in $D$ belongs to the lowest level of the prefix because otherwise a split of each ancestor of $v$ has occurred, which would mean that $|D|\geq\B$.
The latter is true because each split by definition increments the size of $D$.
For each $v\in D$, the subtree $\ET{T}^0[v]$ is simple, since otherwise $v$ has two children whose subtrees are large.
By construction, these children are located at level directly below $v$.
Since the subtrees are large, they intersect the lowest level of the prefix.
Moreover, we argued that $v$ is not at the lowest level of the prefix and hence $v$ is admissible for a split, which gives a contradiction.
Since $\ET{T}^0[v]$ is sorted (because $\ET{T}^0$ is sorted), the lengths of the branches do not increase as we go from the root of $\ET{T}^0[v]$ towards the leaf of the trunk.
This means that for each level $l\leq h(\ET{T}^0)-\B+1$, 
\[|V(\ET{T}^0[v])\cap\Lev{l}{\ET{T}^0}|\geq|V(\ET{T}^0[v])\cap\Lev{l-1}{\ET{T}^0}|.\]
This inequality crucially depends on the fact that the branch at $v$ (which is the longest branch) does not intersect the lowest level of the prefix, and on the fact that no vertex in $D$ is at the lowest level of the prefix.
(Note that the branch at $v$ does not intersect the lowest level of the prefix for otherwise there would be a split at $v$.)
Thus, the inequality implies that if the lowest level of the prefix has less than $\B$ vertices, then each level that is not in the prefix has less than $\B$ vertices, which is argued similarly as on the basis of Inequality~\eqref{eq:levels-decrease} in case of $|D|\geq\B$, which completes the proof.
\end{proof}

\begin{lemma} \label{lem:correct-filling}
Consider an arbitrary $t\geq0$ and the sizes $s_{t-1}$ and $s_{t}$ of the levels at the same fixed distance $d'$ from the roots in $\ET{T}^{t-1}$ and $\ET{T}^{t}$, respectively.
If $s_{t-1}\geq\B$, then $s_{t}\geq \B$.
If $s_{t-1}<\B$ and $s_{t}\geq \B$, then the level at distance $d'-1$ from the root in $\ET{T}^{t}$ is full. 
\end{lemma}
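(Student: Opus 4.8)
The plan is to index every level by its distance from the root, so that the lowering step $\ET{T}^{t-1}\to\ET{T}^{t}$ is described uniformly even when its normalization phase increases the height by one, and to reduce both claims to the ``profile'' of the single subtree that is lowered. Write $h=h(\ET{T}^{t-1})$, let $L=l_{t-1}$ be the highest level of $\ET{T}^{t-1}$ with more than $\B$ vertices, let $v$ be the lowered vertex (note $L<h$, so $v$ is not the root), and for each absolute level $q$ of $\ET{T}^{t-1}$ set $c_q=|V(\ET{T}^{t-1}[v])\cap\Lev{q}{\ET{T}^{t-1}}|$. First I would verify that, whether or not the global increment is triggered, the size of the level at a fixed distance $d'$ obeys
\[ s_{t}=s_{t-1}+c_{p+1}-c_{p}, \qquad p=h-d', \]
where $p$ is the absolute level sitting at distance $d'$ in $\ET{T}^{t-1}$. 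In the non-normalizing case the vertices of $\ET{T}^{t-1}[v]$ slide one level down and nothing else moves; in the normalizing case $\ET{T}^{t-1}[v]$ stays put and everything else slides one level up. In both cases the level at distance $d'$ loses the $c_{p}$ subtree-vertices that sat there and gains the $c_{p+1}$ subtree-vertices arriving from one level higher, which is exactly the stated formula.

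Next I would argue that $\ET{T}^{t-1}[v]$ is \emph{thin}, collapsing $(c_q)$ to $0/1$ values. Two ingredients are needed. The first is the invariant that every subtree of height smaller than $\B$ is thin: this holds for $\ET{T}^{0}$ by the stretching step (cf.\ Observation~\ref{obs:weakly-compact}) and is preserved by lowering, which only translates $\ET{T}^{t-1}[v]$ and its sub-subtrees rigidly and lengthens ancestor subtrees, neither of which can destroy thinness of a height-$(<\B)$ subtree. The second is that $v$ is \emph{small}: since $v$ minimizes the subtree height over the more than $\B$ pairwise disjoint subtrees rooted at the over-full level $L$, if $\ET{T}^{t-1}[v]$ had height at least $\B$ then \emph{every} subtree rooted at $L$ would descend through all of the $\B$ consecutive levels $L-\B+1,\dots,L$, making each of them carry more than $\B$ vertices; here I would combine this disjointness with the structured property of $\ET{T}^{t-1}$ to reach a contradiction (such a block of over-full levels cannot reach below $l_1$, the lowest full level, yet is forced to). Granting thinness, $c_q=1$ exactly on the consecutive range $[a,L]$ occupied by $\ET{T}^{t-1}[v]$, so $c_{p+1}-c_p$ equals $-1$ at $p=L$, equals $+1$ at $p=a-1$, and equals $0$ everywhere else.

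Both claims then follow by inspecting this profile. For the first claim, the only distance whose size strictly decreases is the one carrying the processed level $L$, where $s_{t-1}=|\Lev{L}{\ET{T}^{t-1}}|>\B$, so $s_{t}=s_{t-1}-1\geq\B$; every other full level keeps or increases its size, hence $s_{t-1}\geq\B$ forces $s_{t}\geq\B$. For the second claim, the size can increase only at the unique distance carrying the old level $a-1$, so the hypothesis $s_{t-1}<\B\leq s_{t}$ pins $d'$ there and yields $s_{t-1}=\B-1$, $s_{t}=\B$. The level one step closer to the root is then the old level $a$: if $a=L$ (a one-vertex subtree) this level is the over-full level $L$, still full after losing $v$; if $a<L$ I would again use the over-filling argument, now with the thin $\ET{T}^{t-1}[v]$, to force $a=l_1$ (since $a<l_1$ would place all more-than-$\B$ subtrees of $L$ at the non-full level $l_1-1$), so old level $a=l_1$ is full and the flat profile leaves its size unchanged at $\geq\B$.

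The part I expect to be the main obstacle is establishing thinness of the lowered subtree---equivalently, that the minimum-height subtree hanging from the highest over-full level is small. The clean cases (an over-full level inside the prefix, where the subtree has height at most $\B-1$ automatically, or an over-full level $L<\B$) are immediate, but a \emph{deep} over-full level is delicate: the naive count shows only that a block of over-full levels ending at $L$ cannot dip below $l_1$, which is consistent when $L$ is far above $l_1$, so ruling out height-$\B$ subtrees there seems to require exploiting the sorted/structured shape more sharply than a bare counting argument. The remaining care is bookkeeping: confirming the size-change formula in the normalization case and checking that the distance-from-root indexing correctly absorbs the height increase, and verifying that both invariants (structuredness and the height-$(<\B)$-implies-thin property) are genuinely maintained by lowering, which is what makes the induction over the iterations of the main loop go through.
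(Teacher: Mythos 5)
Your reduction of both claims to the level profile of the lowered subtree, and the distance-from-root bookkeeping that absorbs normalization, are fine and close in spirit to the paper's argument. The fatal problem is exactly the step you flag as the ``main obstacle'': thinness (equivalently, smallness) of the lowered subtree. This is not merely hard to prove---it is false, so no amount of exploiting sortedness or structuredness can rescue it. Nothing prevents the situation where \emph{all} of the $p>\B$ subtrees rooted at the highest over-full level $L$ are large: for instance, when the minimum-height elimination tree is essentially a complete binary tree of height much larger than $\B$, every subtree hanging from the first level of width exceeding $\B$ has height at least $\B$, stretching leaves them untouched (it only modifies non-thin subtrees of height less than $\B$), and the tree is sorted and structured---all levels from $L$ downwards are full and form a segment reaching the prefix, which is precisely what structuredness permits. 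In that case the minimum-height subtree chosen by the algorithm is not thin, its profile $c_q$ is far from $0/1$ (a level $q$ below $L$ can lose $c_q-c_{q+1}$ vertices at once, e.g.\ $2^{L-q-1}$ in the binary-tree picture), and both of your concluding counting arguments collapse. Your own diagnosis is accurate: a block of over-full levels $L-\B+1,\ldots,L$ is perfectly consistent with the structured shape, so there is no contradiction to be had.

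The missing idea is that the ``all large'' case should not be excluded but handled directly---and it is in fact the easy case. This is the paper's dichotomy: if the lowered subtree $\ET{T}^{t-1}[v_p]$ is small, then it is thin and compact (by stretching, cf.\ Observation~\ref{obs:weakly-compact}), and essentially your profile argument goes through: $s_t=s_{t-1}+1$ on the affected level, with fullness of the adjacent level supplied by the $p>\B$ subtrees rooted at the level $d$ of the lowered vertex. If instead $\ET{T}^{t-1}[v_p]$ is large, then its height is at least $\B$; since it has \emph{minimum} height among the $p>\B$ pairwise disjoint subtrees rooted at the over-full level, all of those subtrees have height at least that of $\ET{T}^{t-1}[v_p]$, and by compactness they all intersect every level that the lowered subtree reaches after the lowering. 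Hence $s_t\geq p-1\geq\B$ outright, with no control of the lowered subtree's profile needed: the $p-1$ untouched sibling subtrees already saturate the levels in question. Replacing your thinness claim by this case split (and noting that your preservation-under-lowering claims for the invariants also need an argument rather than an appeal to rigidity, since lowering can in principle merge levels inside an ancestor subtree) is what makes the induction over the iterations go through.
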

\begin{proof}
Denote by $l_{t-1}=h(\ET{T}^{t-1})-d'$ and $l_{t}=h(\ET{T}^{t})-d'$ the two levels of $\ET{T}^{t-1}$ and $\ET{T}^{t}$ at distance $d'$ from the root, respectively.
Let $d$ be the level at which there is the root of the subtree of $\ET{T}^{t-1}$ that is lowered in the $t$-th iteration.
Denote by $v_1,\ldots,v_p$ the vertices of the level $d$ in $\ET{T}^{t-1}$ sorted so that $h(\ET{T}^{t-1}[v_1])\geq\cdots\geq h(\ET{T}^{t-1}[v_p])$.
In the $t$-th iteration, $p=|\Lev{d}{\ET{T}^{t-1}}|>\B$.
The lowered vertex is $v_p$ because the algorithm selects the  subtree of minimum height.

If $s_{t-1}=s_t$, then the lemma follows so assume that $s_{t-1}\neq s_{t}$. Thus, it holds $d\geq l_{t-1}$ because if level $d$ is below the level $l_{t-1}$, then lowering the subtree $\ET{T}^{t-1}[v_p]$ with $v_p$ at level $d$ would not change the sizes of levels above level $d$, in particular the size of level $l_{t-1}$, which would imply $s_{t-1}=s_t$ and contradict our assumption.
By the same argument, the subtree $\ET{T}^{t-1}[v_p]$ intersects the level $l_{t}$ in $\ET{T}^{t}$.

Since $\ET{T}^0$ is compact, the subtree $\ET{T}^{t-1}[v_p]$ is compact.
If $\ET{T}^{t-1}[v_p]$ is small, then we use the following argument: $\ET{T}^{t-1}[v_p]$ is thin and thus in particular has a one-vertex intersection with the level $l_{t}$ in $\ET{T}^{t}$ and, since $s_{t-1}\neq s_{t}$, it does not intersect the level $l_{t-1}$ in $\ET{T}^{t-1}$.
Since it is compact, it intersects all levels $l_{t-1}+1,\ldots,d$ in $\ET{T}^{t-1}$.
Moreover, $s_{t}=s_{t-1}+1$, which completes the proof.
Also, if $s_{t}\geq\B$, then $p$ subtrees intersect the level $l_{t}-1$ in $\ET{T}^{t}$, which proves the second claim from the lemma because $p>\B$.

If all $p$ subtrees rooted at the level $d$ in $\ET{T}^{t-1}$ are large, then due to the ordering of them and since $\ET{T}^{0}$ is compact, all of these subtrees intersect the level $l_{t}$.
Hence $s_t\geq \B$.
\end{proof}

\begin{lemma} \label{lem:almost-all-full}
Let $\ET{T}^{\tau}$ be the elimination tree returned by Algorithm~\ref{alg:treedepth}.
If $h(\ET{T}^{\tau})>h(\ET{T}^{0})$, then at most $\B$ levels of $\ET{T}^{\tau}$ are not full.
\end{lemma}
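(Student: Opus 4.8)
The plan is to combine the structured invariant with a close look at the last iteration in which the height grows. Since lowering never decreases the height and $\ET{T}^{\tau}$ is produced from the subsequent trees, the invariant coming from Lemma~\ref{lem:correct-filling} (with base case Lemma~\ref{lem:Tzero-structured}) guarantees that $\ET{T}^{\tau}$ is structured. Write $H=h(\ET{T}^{\tau})$. Because the height is non-decreasing and $H>h(\ET{T}^{0})$, there is a well-defined last iteration $t^{*}$ with $h(\ET{T}^{t^{*}})>h(\ET{T}^{t^{*}-1})$, and from $\ET{T}^{t^{*}}$ onward the height stays equal to $H$. I may assume $H\ge\B+1$, as otherwise there are at most $\B$ levels in total and the claim is immediate; for $H=\B+1$ the argument below will simply exhibit one full level, which already suffices.

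First I would extract what the growth at $t^{*}$ says about $\ET{T}^{t^{*}-1}$. Let $l$ be the highest level of $\ET{T}^{t^{*}-1}$ with more than $\B$ vertices (the level targeted by the algorithm) and let $v$ minimise $h(\ET{T}^{t^{*}-1}[u])$ over $u\in\Lev{l}{\ET{T}^{t^{*}-1}}$. The height grows precisely because lowering $\ET{T}^{t^{*}-1}[v]$ pushes a vertex below level $1$, so $\ET{T}^{t^{*}-1}[v]$ reaches level $1$; being rooted at level $l$, it then attains the largest depth possible for a subtree rooted at that level. Since $v$ is a minimiser, every subtree rooted at level $l$ has this same (maximum) depth and therefore also reaches level $1$. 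These subtrees are pairwise disjoint, hence each of the levels $1,\ldots,l$ receives at least $\vert\Lev{l}{\ET{T}^{t^{*}-1}}\vert>\B$ vertices; that is, levels $1,\ldots,l$ are over-full in $\ET{T}^{t^{*}-1}$.

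Next I would track the lowering-with-normalisation performed at step $t^{*}$. The subtree $\ET{T}^{t^{*}-1}[v]$ goes down one level and then everything is incremented, so it returns to levels $1,\ldots,l$, while the remaining at least $\B$ subtrees rooted at level $l$ are shifted up to levels $2,\ldots,l+1$. Counting contributions at each level $p$ with $2\le p\le l$, the shifted subtrees alone supply at least $\vert\Lev{l}{\ET{T}^{t^{*}-1}}\vert-1\ge\B$ vertices, so these levels are full in $\ET{T}^{t^{*}}$; in the boundary case $l=1$ the same shift places at least $\B$ single vertices at level $2$, so level $2$ is full. Because $t^{*}$ is the last height increase, the height is constant from $\ET{T}^{t^{*}}$ onward, so ``distance from the root'' and absolute level number agree throughout this range; applying the first part of Lemma~\ref{lem:correct-filling} iteratively from $t^{*}$ to $\tau$ then shows that all of these levels remain full in $\ET{T}^{\tau}$. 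In particular level $2$ is full in $\ET{T}^{\tau}$.

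Finally I would invoke that $\ET{T}^{\tau}$ is structured. For $H>\B+1$ the full level $2$ lies outside the prefix, so not all full levels are inside the prefix; hence the full levels outside the top $\B-1$ levels form a consecutive block ending at $H-\B+1$, and since this block contains level $2$ it is exactly $\{2,\ldots,H-\B+1\}$ or larger. Thus the only levels that may fail to be full are level $1$ and the top $\B-1$ levels, giving at most $\B$ non-full levels; the case $H=\B+1$ is immediate since then level $2$ is already full among only $\B+1$ levels in total. I expect the main obstacle to be the middle step: one must argue carefully that the increment accompanying the lowering leaves the bottom of the tree over-full (handling the boundary case $l=1$ and the passage between ``distance from the root'' in Lemma~\ref{lem:correct-filling} and absolute level numbers), because everything downstream rests on the bottom block staying full once the height has grown.
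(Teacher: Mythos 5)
Your proof is correct and follows essentially the same route as the paper's: you maintain the structured invariant via Lemma~\ref{lem:Tzero-structured} and Lemma~\ref{lem:correct-filling}, and you observe that a height-increasing iteration forces level $2$ (in fact a whole bottom block of levels) to become full, after which fullness persists by the first part of Lemma~\ref{lem:correct-filling} and structuredness yields the bound of $\B$ non-full levels. The only difference is one of detail: the paper asserts ``if the height increases, then level $2$ must be full'' without elaboration, whereas you justify it by noting that the lowered subtree reaches level $1$ and, being a height-minimizer among the subtrees rooted at the over-full level, forces all of those pairwise disjoint subtrees to reach level $1$, so that after the shift at least $\B$ of them cover level $2$ --- a useful filling-in of the paper's terse inductive step.
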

\begin{proof}
The levels that may not be full in $\ET{T}^{\tau}$ are $\B-1$ highest levels and level $1$.
The argument follows by an induction on the number of iterations.
The base case is covered by Lemma~\ref{lem:Tzero-structured}, and the inductive step is due to Lemma~\ref{lem:correct-filling}.
More precisely, if the height of the elimination tree increases in a particular iteration, then level $2$ must be full, and consequently all the other levels except for the highest $\B-1$ and level $1$ are full.
\end{proof}

\begin{proof}[Proof of Theorem~\ref{thm:ALG}]
The complexity follows from Lemma~\ref{lem:sorted-exists}.

Suppose first that $h(\ET{T}^{\tau})=h(\ET{T}^{0})$.
By Observation~\ref{obs:weakly-compact}, $h(\ET{T}^0)\leq h(\ET{T})+2\B$ because sorting does not increase the height of an elimination tree.
Hence in this case the theorem follows because $\btd{L(T)}{\B}\geq h(\ET{T}^0)$.
Suppose now that $h(\ET{T}^{\tau})>h(\ET{T}^{0})$.
By Lemma~\ref{lem:almost-all-full}, at most $\B$ levels of $\ET{T}^{\tau}$ are not full.
Thus, $h(\ET{T}^{\tau})\leq \B+\frac{m}{\B}$, where $m$ is the number of edges of $T$.
The lower bound of $\btd{L(T)}{\B}\geq\frac{m}{\B}$ completes the proof.
\end{proof}

\bibliographystyle{plain}
\bibliography{bibliography}

\end{document}